\pgfplotsset{compat=1.15}
\definecolor{pastelyellow}{rgb}{0.99, 0.99, 0.59}
\definecolor{aqua}{rgb}{0.0, 1.0, 1.0} 
\definecolor{aquamarine}{rgb}{0.5, 1.0, 0.83} 
\definecolor{bananayellow}{rgb}{1.0, 0.88, 0.21}
\definecolor{burgundy}{rgb}{0.5, 0.0, 0.13}
\definecolor{ao(english)}{rgb}{0.0, 0.5, 0.0}
\newtheorem{theorem}{Theorem}[section]
\newtheorem{proposition}[theorem]{Proposition}
\newtheorem{corollary}[theorem]{Corollary}
\theoremstyle{definition}
\newtheorem{definition}[theorem]{Definition}
\newtheorem{example}[theorem]{Example}
\newtheorem{remark}[theorem]{Remark}
\def\thmhead@plain#1#2#3{%
	\thmname{#1}\thmnumber{\@ifnotempty{#1}{ }\@upn{#2}}%
	\thmnote{ {\the\thm@notefont#3}}}
\let\thmhead\thmhead@plain
\newcommand{\cC}{\mathcal{C}}
\newcommand{\cF}{\mathcal{F}}
\newcommand{\cG}{\mathcal{G}}
\newcommand{\cP}{\mathcal{P}}
\newcommand{\cU}{\mathcal{U}}
\newcommand{\cV}{\mathcal{V}}
\newcommand{\cX}{\mathcal{X}}
\newcommand{\rsp}[1]{{\mathrm{rowsp}{#1}}}
\newcommand{\stab}{\mathrm{Stab}}
\newcommand{\stabbeta}{\mathrm{Stab}_{\beta}}
\newcommand{\orb}{\mathrm{Orb}}
\newcommand{\orbbeta}{\mathrm{Orb}_{\beta}}
\newcommand{\GL}{\mathrm{GL}}
\newcommand{\bbF}{{\mathbb F}} 
\renewcommand{\geq}{\geqslant}
\renewcommand{\leq}{\leqslant}
\begin{document}

	\renewcommand{\headrulewidth}{0pt}
	
	\rhead{ }
	\chead{\scriptsize  On Generalized Galois Cyclic Orbit Flag Codes}
	\lhead{ }

	\title{On Generalized Galois Cyclic Orbit Flag Codes}

	\author{\renewcommand\thefootnote{\arabic{footnote}}
		Clementa Alonso-Gonz\'alez\footnotemark[1] \ and  Miguel \'Angel Navarro-P\'erez\footnotemark[1]}

	\footnotetext[1]{Dpt.\ de Matem\`atiques, Universitat d'Alacant, 
		Sant Vicent del Raspeig, Ap.\ Correus 99, E -- 03080 Alacant.}

	{\small \date{\usdate{\today}}} 
	
	\maketitle

	\begin{abstract}
		Flag codes that are orbits of a cyclic subgroup of the general linear group acting on flags of a vector space over a finite field, are called {\em cyclic orbit flag codes}. In this paper we present a new contribution to the study of such codes started in \cite{FlagCyclic}, by focusing this time on the generating flag. More precisely, we examine those ones whose generating flag has at least one subfield among its subspaces. In this situation, two important families arise: the already known {\em Galois flag codes}, in case we have just fields, or the {\em generalized Galois flag codes} in other case. We investigate the parameters and properties of the latter ones and explore the relationship with their underlying Galois flag code. 
	\end{abstract}
	
	\textbf{Keywords:} Network coding, flag codes, cyclic orbit flag codes.


	\section{Introduction}\label{sec:Introduction}
	
	Network coding represents a procedure to data transfer within a network that is a directed multigraph without cycles, where the information travels from one or several senders to several receivers. In \cite{AhlsCai00}, it was proved that one can improve the efficiency if the intermediate nodes can linearly combine the information vectors. We speak about \emph{random network coding} whenever the underlying network topology is unknown. Due to the fact that vector subspaces are invariant under linear combinations, they are proposed as suitable codewords in \cite{KoetKschi08}, giving raise to the concept of  \emph{subspace codes}. When all the subspaces have a fixed dimension, we get \emph{constant dimension codes}. Research in this area has been very intense in latter years (consult \cite{TrautRosen18} and references inside).
	
	One method devised in \cite{TrautManRos2010} to build subspace codes consists of making subgroups of the general linear group $\GL(n,q)$ act on the set of subspaces of $\bbF_q^n$ and thus, consider the corresponding orbits. This idea leads to the concept of \emph{orbit codes}. In particular, when the acting group is cyclic, we obtain the so-called \emph{cyclic orbit codes}, widely studied in the last times (see \cite{BenEtGaRa16, ChenLi18, EtVar11, GLMoTro2015, OtOz17, RothRaTa18, TrautManBraunRos2013, TrautManRos2010, ZhaoTang2019}, for instance). Of special relevance for our purposes is the paper \cite{GLMoTro2015}, where the authors treat $\beta$-\emph{cyclic orbit codes} as collections of $\bbF_q$-vector subspaces of $\bbF_{q^n}$ that are orbits under the natural action of a subgroup $\langle \beta \rangle$ of $\bbF_{q^n}^\ast$ on $\bbF_q$-vector spaces (if $\beta$ is primitive, the corresponding orbit is called just \emph{cyclic orbit code}).  In that paper, it is introduced an interesting tool to analyze $\beta$-cyclic orbit codes: the \emph{best friend} of the code, that is, the largest subfield of $\bbF_{q^n}$ over which the \emph{generating subspace} is a vector space.
	
	\emph{Flag codes} can be seen as a generalization of constant dimension codes. The codewords of a flag code are flags, that is, sequences of chained subspaces of prescribed dimensions. In the network coding context, they appeared for the first time in the paper \cite{LiebNebeVaz18} where the multiplicative action of $\GL(n,q)$ is translated from subspaces to flags to provide different  constructions of \emph{orbit flag codes}. This seminal work has sparked an incipient interest in flag codes reflected in the recent works \cite{Consistentes, FlagCyclic, OrbitODFC, cotas, CasoPlanar,CasoNoPlanar,  Kurz, MA-X}.
	
	In \cite{FlagCyclic},  the authors undertake the study of $\beta$-\emph{cyclic orbit flag codes} inspired by the ideas in \cite{GLMoTro2015}. More precisely, they consider flags on $\bbF_{q^n}$ given by nested $\bbF_q$-subspaces of the field $\bbF_{q^n}$ constructed as orbits of subgroups $\langle \beta \rangle \subseteq \bbF_{q^n}^\ast$, and coin the concept of \emph{best friend} of a cyclic flag code as the largest subfield of $\bbF_{q^n}$ over which every subspace in the \emph{generating flag} is a vector space. The knowledge of the best friend turns out to be extremely useful to determine the parameters of the code as well as other features such as the necessary conditions on the type vector to reach the maximum distance. It is also presented the particular family of \emph{Galois flag codes}, that consists of $\beta$-cyclic orbit flag codes generated by flags given by nested fields. For that class of codes it is possible to precisely establish a nice correspondence between the set of attainable distances and the subgroups of $\bbF_{q^n}^*$. 
	
	In the current paper we extend the study performed in \cite{FlagCyclic} by focusing on the generating flag. More precisely, we examine $\beta$-cyclic orbit flag codes whose generating flag has at least one subfield among its subspaces. We distinguish two situations: either all the subspaces are fields, then we have the already known \emph{Galois flag codes}, or there is also at least one subspace not being a field. The last case entails the definition of a new kind of $\beta$-cyclic orbit flag codes called \emph{generalized Galois flag codes}. We discuss the properties of this new class of codes by taking into account that a generalized Galois flag code has always an underlying Galois flag code that influences on it to a greater or lesser extent. 
	
	The text is structured as follows. In section \ref{sec:Preliminaries} we remember the basics on subspace codes as well as some notions and results related to cyclic orbit (subspace) codes developed in \cite{GLMoTro2015}. In Section \ref{sec: cyclic orbit flag codes} we recall, on the one hand, some background on flag codes and the most important facts on cyclic orbit flag codes that appear in \cite{FlagCyclic}. On the other hand, we present some new results on the interplay between type vectors, best friend and the flag distance parameter. The family of generalized $\beta$-Galois flag codes is introduced here as an extension of the $\beta$-Galois flag codes. We discuss in which way the properties of a generalized Galois flag code are driven by its underlying Galois flag code and launch a related question. Section \ref{sec: A construction} is devoted to provide a systematic construction of generalized $\beta$-Galois flag codes with a prescribed underlying $\beta$-Galois flag code by using generating flags written in a precise regular form.  
	In Subsections \ref{subsec: basic}, \ref{subsec:weaved} and \ref{subsec: decoding} we analyze the particular properties of the previous construction in case $\beta$ is primitive, and propose a decoding algorithm over the erasure channel taking advantage of such properties. In Subsection \ref{subsec:weaved beta} we address the case when $\beta$ is not primitive and present some specific results. To finish, we show how our construction allows us to give an answer to the question previously formulated. 
	

	\section{Preliminaries}\label{sec:Preliminaries}
	Consider $q$ a prime power and $\bbF_q$ the finite field with $q$ elements. We denote by $\bbF_q^n$ the $n$-dimensional vector space over $\bbF_q$ for any natural number $n\geq 1$ and by $\cP_q(n)$ the set of all the subspaces of $\bbF_q^n$. For every $0\leq k \leq n$, the set of $k$-dimensional subspaces of $\bbF_q^n$, that is, the \emph{Grassmannian}, will be denoted by $\cG_q(k, n)$. 
	The set $\cP_q(n)$ can be equipped with a metric called the \emph{subspace distance}: for any pair $\cU, \cV \in \cP_q(n)$, we set 
	\begin{equation}\label{def: subspace distance}
		d_S(\cU, \cV)= \dim(\cU+\cV)-\dim(\cU\cap\cV).
	\end{equation}
	In particular, the subspace distance between two subspaces $\cU, \cV \in \cG_q(k, n)$ becomes
	\begin{equation}\label{def: subspace distance in the Grassmannian}
		d_S(\cU, \cV)= 2(k-\dim(\cU\cap\cV)).
	\end{equation}
	A \emph{constant dimension code} $\cC$ of dimension $k$ and length $n$ is a nonempty subset of $\cG_q(k, n)$ whose \emph{minimum subspace distance} is given by
	$$
	d_S(\cC)=\min\{ d_S(\cU, \cV) \ | \ \cU, \cV \in \cC, \ \cU \neq \cV \}.
	$$
	If $|\cC|=1$, we put $d_S(\cC)=0$. For further details on this family of codes, consult \cite{TrautRosen18} and the references inside. 
	
	It is clear that the minimum distance of a constant dimension code $\cC \subseteq \cG_q(k, n)$ is attained when the intersection of every pair of codewords has the minimum possible dimension. In this case, we have that 
	\begin{equation}\label{eq: bound subspace distance}
		d_S(\cC)\leq
		\left\lbrace
		\begin{array}{lll}
			2k      & \text{if} & 2k\leq n, \\
			2(n-k)  & \text{if} & 2k > n.
		\end{array}
		\right.
	\end{equation}
	A constant dimension code with dimension $k \leq \lfloor\frac{n}{2}\rfloor$ attaining the previous bound is called \emph{partial spread code}. A partial spread code being also a partition of $\bbF_q^n$ into $k$-dimensional subspaces is known as a \emph{spread code} or just a \emph{$k$-spread}.  In \cite{Segre64} it is proved that $k$-spread exist if, and only if, $k$ divides $n$. As a result, the size of any $k$-spread is exactly $\frac{q^n-1}{q^k-1}.$ See \cite{GoManRo12, MangaGorlaRosen08, MangaTraut14, TrautRosen18} for more information concerning spread codes in the network coding setting.

	Among all the special families of constant dimension codes, here we are interested in \emph{orbit codes}, that is, those that arise as orbits of the action of subgroups of the general linear group $\GL(n, q)$ on the Grassmannian. This family of codes was introduced in \cite{TrautManRos2010}. More precisely, fixed a $k$-dimensional subspace $\cU \subset \bbF_q^n$ and a subgroup $G \subseteq \GL(n, q)$, the orbit of $\cU$ under the action of $G$ is the constant dimension code given by $\mathrm{Orb}_G(\cU) = \{ \cU \cdot A \ | \ A\in G\},$
	where $\cU\cdot A = \rsp (UA),$  for any full-rank generator matrix $U$ of $\cU$. The \emph{stabilizer} of $\cU$ under the action of $G$ is the subgroup $\mathrm{Stab}_G(\cU)= \{ A\in G  \ | \ \cU\cdot A= \cU \}.$
	As a consequence,
	\begin{equation}
		| \mathrm{Orb}_G(\cU)| = \frac{|G|}{|\mathrm{Stab}_G(\cU)|}
	\end{equation}
	and the minimum distance can be computed as
	$$
	d_S( \mathrm{Orb}_G(\cU))= \min \{ d_s(\cU, \cU\cdot A) \ | \ A \in G \setminus\mathrm{Stab}_G(\cU) \}.
	$$
	Whenever the acting group $G$ is cyclic, the orbit $\mathrm{Orb}_G(\cU)$ is called \emph{cyclic orbit code.} The works \cite{GLMoTro2015, ManTrautRos11, RosTraut2013, TrautManBraunRos2013} are devoted to the study of this family of codes. In the current paper we are specially interested in the viewpoint developed in \cite{TrautManBraunRos2013} and \cite{GLMoTro2015} where, taking advantage of the natural $\bbF_q$-linear isomorphism between $\bbF_q^n$ and $\bbF_{q^n}$, cyclic orbit codes are seen as collections of subspaces in $\bbF_{q^n}$. More precisely, in \cite{GLMoTro2015} the authors consider a nonzero element $\beta$ and define $\beta$-\emph{cyclic orbit codes} as orbits of the group $\langle \beta \rangle$ on $\bbF_q$-vector subspaces of $\bbF_{q^n}.$ In particular, if $1\leq k<n$ and $\cU\subset \bbF_{q^n}$ is a $k$-dimensional subspace over $\bbF_q$, the $\beta$-\emph{cyclic orbit code generated by $\cU$} is the following set of $\bbF_q$-subspaces of dimension $k$ 
	$$\orbbeta(\cU)= \{\cU \beta^i \ | \ 0\leq i \leq |\beta|-1\},$$
	where $|\beta|$ denotes the \emph{multiplicative order} of $\beta.$ The \emph{stabilizer} of the subspace $\cU$ under the action of $\langle \beta \rangle$ is the  cyclic subgroup $\stabbeta(\cU)= \{\beta^i \ | \ \cU\beta^i= \cU \}.$ An important example of such kind of codes, already developed in \cite{TrautManBraunRos2013}, is the following $k$-spread code, where $k$ is a divisor of $n$ and $\alpha$ is a primitive element of  $\bbF_{q^n}$:
	\begin{equation}\label{def: spread Anna-Lena}
		 \orb_{\langle \alpha \rangle}(\bbF_{q^k})= \{ \bbF_{q^k}\alpha^i \ | \ i=0, \ldots, q^n-2 \}.
	\end{equation}
	\begin{remark}
		Following the notation used in \cite{GLMoTro2015}, when the acting group is $\bbF_{q^n}^\ast$, we simply denote the corresponding orbit by $\orb(\cU)$ and call it just the \emph{cyclic orbit code generated by $\cU$}. In this situation, we also remove the subscript $\beta$ and write $\stab(\cU)$ to denote the stabilizer of $\cU$.
	\end{remark}
	
	Concerning the cardinality and distance of a $\beta$-cyclic orbit code, in \cite{GLMoTro2015} the authors study these parameters with the aid of the \emph{best friend} of the generating subspace. This concept is closely linked to the stabilizer of the subspace. Let us recall the definition. 
	\begin{definition}
		A subfield $\bbF_{q^m}$ of $\bbF_{q^n}$ is said to be a \emph{friend} of a subspace $\cU \subset \bbF_{q^n} $ if $\cU$ is an $\bbF_{q^m}$-vector space. The largest friend of $\cU$ is called its \emph{best friend}. 
	\end{definition}
	
	The knowledge of the best friend of a subspace $\cU$ provides straightforwardly the cardinality of the cyclic orbit code as well as a lower bound for its distance.
	
	\begin{proposition}\label{prop: stab+ es best friend}(\cite[ Prop. 3.3, 3.12, 3.13 and 4.1]{GLMoTro2015})
		Let $\cU$ be a subspace of $\bbF_{q^n}$ with the subfield $\bbF_{q^m}$ as its best friend. Then
		$$
		|\orb(\cU)|=\frac{q^n-1}{q^m-1}.
		$$
		Moreover, the value $2m$ divides the distance between every pair of subspaces in $\orb(\cU)$ and, hence, we have that  $d_S(\orb(\cU))\geq 2m.$
	\end{proposition}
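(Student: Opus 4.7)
The plan is to first identify the best friend of $\cU$ with the set $\mathrm{Stab}(\cU)\cup\{0\}$, and then read off both statements from this identification.

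First I would show that $\bbF_{q^m} = \mathrm{Stab}(\cU)\cup\{0\}$. The containment $\bbF_{q^m}^\ast \subseteq \mathrm{Stab}(\cU)$ is immediate: since $\cU$ is an $\bbF_{q^m}$-vector space, every nonzero $a\in\bbF_{q^m}$ satisfies $\cU a \subseteq \cU$, and this inclusion is an equality by dimension count over $\bbF_q$. For the other direction, I would argue that $S := \mathrm{Stab}(\cU)\cup\{0\}$ is a subfield of $\bbF_{q^n}$. Closure under multiplication and inverses is clear from the group structure of $\mathrm{Stab}(\cU)$. For closure under addition, if $\beta_1,\beta_2 \in \mathrm{Stab}(\cU)$ with $\beta_1+\beta_2\neq 0$, then $\cU(\beta_1+\beta_2)\subseteq \cU\beta_1+\cU\beta_2 = \cU$, and again dimension forces equality, so $\beta_1+\beta_2 \in \mathrm{Stab}(\cU)$. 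Hence $S$ is a subfield, and by construction $\cU$ is an $S$-vector space, i.e.\ $S$ is a friend of $\cU$. The maximality of the best friend $\bbF_{q^m}$ then gives $S \subseteq \bbF_{q^m}$, yielding equality.

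The cardinality is then immediate from orbit-stabilizer:
$$
|\mathrm{Orb}(\cU)| = \frac{|\bbF_{q^n}^\ast|}{|\mathrm{Stab}(\cU)|} = \frac{q^n-1}{q^m-1}.
$$

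For the distance statement, write $k = \dim_{\bbF_q}\cU$. Since $\cU$ is an $\bbF_{q^m}$-vector space, $m\mid k$. Next I would observe that for every $\beta^i\in\bbF_{q^n}^\ast$, the image $\cU\beta^i$ is also an $\bbF_{q^m}$-vector space, because the multiplications by elements of $\bbF_{q^m}$ and by $\beta^i$ commute in the abelian group $\bbF_{q^n}^\ast$; explicitly, $(\cU\beta^i)\cdot a = (\cU a)\beta^i = \cU\beta^i$ for every $a\in\bbF_{q^m}^\ast$. Consequently the intersection $\cU\cap \cU\beta^i$ is an $\bbF_{q^m}$-subspace of $\bbF_{q^n}$, and in particular $m$ divides $\dim_{\bbF_q}(\cU\cap \cU\beta^i)$. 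Using the formula \eqref{def: subspace distance in the Grassmannian},
$$
d_S(\cU,\cU\beta^i) = 2\bigl(k - \dim_{\bbF_q}(\cU\cap\cU\beta^i)\bigr),
$$
and the right-hand side is a multiple of $2m$, proving divisibility. Taking the minimum over $\beta^i\notin\mathrm{Stab}(\cU)$ gives $d_S(\mathrm{Orb}(\cU))\geq 2m$, since the distance is then a strictly positive multiple of $2m$.

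The only subtle step is verifying that $\mathrm{Stab}(\cU)\cup\{0\}$ is closed under addition and therefore a subfield; everything else is a clean application of orbit-stabilizer plus the observation that $\bbF_{q^m}$-structure passes to the orbit elements and their intersections. I do not anticipate real obstacles beyond being careful that the dimension-count argument for equality $\cU(\beta_1+\beta_2)=\cU$ is valid, which it is because multiplication by a nonzero element of $\bbF_{q^n}$ is an $\bbF_q$-linear automorphism.
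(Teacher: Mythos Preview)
Your proof is correct. Note, however, that the paper does not supply its own proof of this proposition: it is stated with a direct citation to \cite{GLMoTro2015} (Propositions 3.3, 3.12, 3.13 and 4.1) and no argument is given in the present paper. Your approach---identifying $\stab(\cU)\cup\{0\}$ with the best friend via the subfield argument, then invoking orbit--stabilizer and the $\bbF_{q^m}$-linearity of intersections---is exactly the standard route taken in \cite{GLMoTro2015}, and it is also the mechanism behind the flag-level analogue stated later in the paper (Proposition~\ref{lem: BF divides dimensions}, where $\bbF_{q^m}=\stab(\cF)\cup\{0\}$ appears explicitly). There is nothing to correct.
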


	We finish this section recalling a construction of cyclic orbit codes with prescribed distance and cardinality from the choice of a subspace $\cU$ written in a specific regular form.
	
	\begin{proposition}\label{prop: suma directa}(\cite[Prop. 4.3]{GLMoTro2015})
		Consider the subspace $\cU = \bigoplus_{i=0}^{t-1} \bbF_{q^m}\alpha^{li}$ for some $1\leq l <\frac{q^n-1}{q^m-1}$ such that $\mathbb{F}_{q^m}$ is the best friend of $\cU$. Then $d_S(\orb(\cU))=2m$.
	\end{proposition}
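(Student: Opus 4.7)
The plan is to sandwich the distance $d_S(\orb(\cU))$ between $2m$ (from below, using the best friend) and $2m$ (from above, by exhibiting a specific pair of codewords at distance $2m$).

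First, Proposition \ref{prop: stab+ es best friend} gives immediately the lower bound $d_S(\orb(\cU)) \geq 2m$, since $\bbF_{q^m}$ is assumed to be the best friend of $\cU$. For the matching upper bound, I would look at the two codewords $\cU$ and $\cU\alpha^l$. The point is that the direct-sum decomposition of $\cU$ starts with $\bbF_{q^m}$ and ends with $\bbF_{q^m}\alpha^{l(t-1)}$, while that of $\cU\alpha^l$ is shifted by one step: it starts with $\bbF_{q^m}\alpha^l$ and ends with $\bbF_{q^m}\alpha^{lt}$. Hence the intermediate part
\[
\bbF_{q^m}\alpha^l \oplus \bbF_{q^m}\alpha^{2l} \oplus \cdots \oplus \bbF_{q^m}\alpha^{l(t-1)}
\]
sits inside both $\cU$ and $\cU\alpha^l$, has dimension $(t-1)m$ over $\bbF_q$, and therefore gives
\[
\dim(\cU \cap \cU\alpha^l) \geq (t-1)m.
\]
Using formula (\ref{def: subspace distance in the Grassmannian}) with $k=\dim_{\bbF_q}\cU=tm$, this yields $d_S(\cU, \cU\alpha^l) \leq 2m$.

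To conclude, I must make sure $\cU\alpha^l \neq \cU$, so that this pair is actually counted in the definition of $d_S(\orb(\cU))$. This is where the assumption that $\bbF_{q^m}$ is the \emph{best} friend (and not merely a friend) is used: by Proposition \ref{prop: stab+ es best friend}, $\stab(\cU) = \bbF_{q^m}^\ast = \langle \alpha^{(q^n-1)/(q^m-1)}\rangle$, so $\cU\alpha^l = \cU$ would force $(q^n-1)/(q^m-1)$ to divide $l$, contradicting the range $1 \leq l < (q^n-1)/(q^m-1)$. Hence $d_S(\orb(\cU)) \leq d_S(\cU, \cU\alpha^l) \leq 2m$, and combining with the lower bound gives the claimed equality.

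No real obstacle is expected here: the argument is a direct verification. The only subtle point is keeping track of why $\alpha^l \notin \bbF_{q^m}^\ast$, which is handled by the explicit range imposed on $l$ together with the identification of $\stab(\cU)$ with $\bbF_{q^m}^\ast$ coming from the best-friend hypothesis.
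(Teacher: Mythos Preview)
Your proof is correct and follows the standard argument. Note that the paper does not actually give its own proof of this proposition---it is quoted from \cite{GLMoTro2015}---but the reasoning you use (lower bound from the best friend via Proposition~\ref{prop: stab+ es best friend}, upper bound by computing $\cU\cap\cU\alpha^l=\bigoplus_{j=1}^{t-1}\bbF_{q^m}\alpha^{lj}$) is exactly the computation the authors themselves carry out later in the proof of Theorem~\ref{theo: teorema clasificación} for the flag setting. One minor remark: to justify $\cU\alpha^l\neq\cU$ you could also appeal directly to Proposition~\ref{prop: BF en forma regular}, which states the equivalence $\alpha^l\in\stab(\cU)\Leftrightarrow\bbF_{q^m}$ is not the best friend of $\cU$; this is slightly more direct than going through the explicit description $\stab(\cU)=\langle\alpha^{(q^n-1)/(q^m-1)}\rangle$, though your route is equally valid.
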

	
	It is clear that the subfield $\bbF_{q^m}$ is a friend of a subspace $\cU$ written as in previous proposition, although it is not necessarily its best friend. In fact, there are just two possibilities for the best friend of $\cU$.
	
	\begin{proposition}\label{prop: BF en forma regular}(\cite[Proposition 4.4]{GLMoTro2015})
		Given the subspace $\cU = \bigoplus_{i=0}^{t-1} \bbF_{q^m}\alpha^{li}$ for some $1\leq l <\frac{q^n-1}{q^m-1}.$ If $f(x)$ is the minimal polynomial of $\alpha^l$ over $\bbF_{q^m}$, then its degree is at least $t$ and
		$$
		\cU=\bbF_{q^{mt}} \ \Leftrightarrow \ \deg(f)=t \ \Leftrightarrow \ \alpha^l\in \stab(\cU) \ \Leftrightarrow \ \bbF_{q^m}  \text{ is not the best friend of} \ \cU.
		$$
	\end{proposition}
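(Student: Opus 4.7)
The plan is to first verify the degree bound $\deg(f)\geq t$ as a direct consequence of linear independence, and then to close the four-way equivalence via the cycle $\text{(b)}\Rightarrow\text{(a)}\Rightarrow\text{(c)}\Rightarrow\text{(b)}$ together with $\text{(a)}\Leftrightarrow\text{(d)}$. The inequality $\deg(f)\geq t$ is immediate: were $\deg(f)<t$, the elements $1,\alpha^l,\ldots,\alpha^{(t-1)l}$ would already be $\bbF_{q^m}$-linearly dependent, contradicting the fact that they span $\cU$ as a direct sum.

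For the easy arrows I would argue as follows. If $\deg(f)=t$, then $\{1,\alpha^l,\ldots,\alpha^{(t-1)l}\}$ is an $\bbF_{q^m}$-basis of $\bbF_{q^m}(\alpha^l)=\bbF_{q^{mt}}$, so $\cU=\bbF_{q^{mt}}$, giving $\text{(b)}\Rightarrow\text{(a)}$. Conversely, $\cU=\bbF_{q^{mt}}$ forces $\deg(f)\mid t$ since $\alpha^l\in\bbF_{q^{mt}}$; combined with $\deg(f)\geq t$ this yields $\text{(a)}\Rightarrow\text{(b)}$. The step $\text{(a)}\Rightarrow\text{(c)}$ is immediate because multiplication by a nonzero element of a field is a bijection on that field. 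For $\text{(c)}\Rightarrow\text{(b)}$, assuming $\cU\alpha^l=\cU$ and starting from $1\in\cU$, induction shows $\alpha^{il}\in\cU$ for every $i\geq 0$; hence $\bbF_{q^m}(\alpha^l)\subseteq\cU$, and since $\dim_{\bbF_{q^m}}\cU=t$ and $\deg(f)\geq t$, we conclude $\deg(f)=t$. Finally, $\text{(a)}\Rightarrow\text{(d)}$ holds because the field $\bbF_{q^{mt}}=\cU$ is itself a friend of $\cU$ that strictly contains $\bbF_{q^m}$ whenever $t\geq 2$.

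The only real obstacle is $\text{(d)}\Rightarrow\text{(a)}$. Let $\bbF_{q^s}$ denote the best friend of $\cU$, so $s>m$, $m\mid s$, and by $\bbF_{q^s}$-dimensionality $s\mid tm$. Using $\stab(\cU)\cup\{0\}=\bbF_{q^s}$ (see \cite{GLMoTro2015}), $\cU$ is an $\bbF_{q^s}$-vector space of dimension $tm/s$ containing $1$, and hence containing $\bbF_{q^s}$. Because $\cU$ is $\bbF_{q^s}$-stable and contains each $\alpha^{il}$, the $\bbF_{q^s}$-span of $\{1,\alpha^l,\ldots,\alpha^{(t-1)l}\}$ is contained in $\cU$; conversely the original $\bbF_{q^m}$-span $\cU$ is contained in this $\bbF_{q^s}$-span, so the two coincide. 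Writing $d=[\bbF_{q^s}(\alpha^l):\bbF_{q^s}]$, this span has $\bbF_{q^s}$-dimension $\min(t,d)$, and matching with $tm/s<t$ (which holds since $s>m$) forces $d=tm/s$. Therefore $\cU=\bbF_{q^s}(\alpha^l)=\bbF_{q^{sd}}=\bbF_{q^{mt}}$, as desired. The subtle point is recognizing that enlarging the friend beyond $\bbF_{q^m}$ forces the formal polynomial span in $\alpha^l$ to saturate into an actual subfield of $\bbF_{q^n}$.
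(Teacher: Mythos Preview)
The paper does not supply its own proof of this proposition; it is quoted verbatim from \cite[Proposition~4.4]{GLMoTro2015} and stated without argument. So there is no ``paper's proof'' to compare against, and your task reduces to producing a self-contained correct argument.

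Your proof is correct. The degree bound and the cycle $\text{(b)}\Rightarrow\text{(a)}\Rightarrow\text{(c)}\Rightarrow\text{(b)}$ are standard and cleanly handled. The implication $\text{(a)}\Rightarrow\text{(d)}$ is immediate for $t\geq 2$, as you note; for $t=1$ the proposition itself degenerates (then $\cU=\bbF_{q^m}$ is always a field while (b), (c), (d) may all fail), so this is a defect of the stated result rather than of your argument, and the surrounding paper only ever invokes the proposition with $t\geq 2$.

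The nontrivial direction $\text{(d)}\Rightarrow\text{(a)}$ is where your proof does real work, and the argument is sound: passing to the best friend $\bbF_{q^s}$ with $s>m$, re-expressing $\cU$ as the $\bbF_{q^s}$-span of the same powers of $\alpha^l$, and comparing the $\bbF_{q^s}$-dimension $tm/s<t$ with $\min(t,d)$ (where $d=[\bbF_{q^s}(\alpha^l):\bbF_{q^s}]$) forces $d=tm/s$, whence $\cU=\bbF_{q^s}(\alpha^l)=\bbF_{q^{sd}}=\bbF_{q^{mt}}$. One could shorten this step slightly by observing directly that $\text{(d)}\Rightarrow\text{(c)}$: if $\bbF_{q^s}$ is the best friend with $s>m$, then $\stab(\cU)=\bbF_{q^s}^\ast\supsetneq\bbF_{q^m}^\ast$, so some $\gamma\in\bbF_{q^s}^\ast\setminus\bbF_{q^m}^\ast$ stabilizes $\cU$; since $\cU=\sum_i\bbF_{q^m}\alpha^{li}$ and $\bbF_{q^m}\gamma\subseteq\bbF_{q^s}\subseteq\cU$, one gets $\alpha^l\cdot\cU\subseteq\cU$ after a short computation---but your dimension-count route is equally valid and arguably more transparent.
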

	
	We will come back to this family of subspaces and to the $\beta$-cyclic orbit codes generated by them in Section \ref{sec: A construction}, where we provide a specific construction of generalized Galois cyclic orbit flag codes by using subspaces written as in Proposition \ref{prop: suma directa}.

	\section{Cyclic orbit flag codes}\label{sec: cyclic orbit flag codes}
	Part of this section is dedicated to gather the basic background on flag codes that already appears in \cite{ CasoPlanar, CasoNoPlanar,LiebNebeVaz18}, and to recall the main definitions and results that pertain to the particular class of cyclic orbit flag codes introduced in \cite{FlagCyclic}. In Subsection \ref{subsec:interdependence}, we present new results concerning the interdependence between the minimum distance of a $\beta$-cyclic orbit flag code, its best friend and the set of dimensions appearing in the type vector. In addition, the class of generalized Galois flag codes is introduced in Subsection \ref{subsec: generalized Galois}.

	\subsection{Flags and flag codes}

	A {\em flag} $\mathcal{F}=(\mathcal{F}_1,\ldots,  \mathcal{F}_r)$ on the extension field $\mathbb{F}_{q^n}$ is a sequence of nested $\bbF_q$-vector subspaces 
	$$
	\{0\}\subsetneq \mathcal{F}_1 \subsetneq \cdots \subsetneq \mathcal{F}_r \subsetneq \mathbb{F}_{q^n}.
	$$
	The subspace $\mathcal{F}_i$ is called the {\em $i$-th subspace} of $\cF$ and the {\em type} of $\mathcal{F}$ is the vector $(\dim(\cF_1), \dots, \dim(\cF_r))$. When the type vector is $(1, 2, \ldots, n-1),$ we say that ${\cF}$ is a {\em full flag}. Given two different flags $\cF$, $\cF'$ on  $\mathbb{F}_{q^n}$, we say that $\cF'$ is a \emph{subflag} of $\cF$ if each subspace of $\cF'$ is a also subspace of $\cF$.
	
	The \emph{flag variety} of type $(t_1, \dots, t_r)$ on $\mathbb{F}_{q^n}$ is the set of flags of this type and will be denoted by $\cF_q((t_1, \dots, t_r),n)$. Note that  $\cF_q((t_1, \dots, t_r),n)$ embeds in the product of Grassmannians $\cG_q(t_1, n) \times \cdots \times \cG_q(t_r, n)$ and, hence, this variety can be endowed with a metric that extends the subspace distance defined in (\ref{def: subspace distance}). Given two flags $\cF=(\mathcal{F}_1,\ldots,  \mathcal{F}_r)$ and $\cF'=(\mathcal{F}'_1,\ldots,  \mathcal{F}'_r)$ in $\mathcal{F}_q( (t_1, \ldots, t_r),n)$, their \emph{flag distance} is
	$$
	d_f(\cF,\cF')= \sum_{i=1}^r d_S(\mathcal{F}_i, \mathcal{F}'_i).
	$$
	\begin{definition}
		A \emph{flag code} $\cC$ of type $(t_1, \dots, t_r)$ on $\bbF_{q^n}$ is a nonempty subset of $ \cF_q((t_1, \dots, t_r), n)$. The {\em minimum distance} of $\cC$ is given by
		$$
		d_f(\cC)=\min\{d_f(\cF,\cF')\ |\ \cF,\cF'\in\cC, \ \cF\neq \cF'\}.
		$$
	\end{definition}
	\noindent whenever $\cC$ has more that two elements. In case $|\cC|=1$, we put $d_f(\cC)=0.$ For type $(t_1, \dots, t_r),$ it always holds
	\begin{equation}\label{eq: dist max flags}
		d_f(\cC) \leq	2 \left( \sum_{t_i \leq \lfloor \frac{n}{2}\rfloor} t_i + \sum_{t_i > \lfloor \frac{n}{2}\rfloor} (n-t_i) \right).
	\end{equation}
	
	There are constant dimension codes intrinsically correlated with a flag code $\cC$ that play an important role in the study of parameters and properties of $\cC$.
	
	\begin{definition}
		Given a flag code $\cC$ of type $(t_1, \dots, t_r)$, the \emph{$i$-projected code} of $\cC$ is the set  
		$$
		\cC_i=\{\cF_i\ |\ (\cF_{1}, \dots, \cF_{i}, \dots, \cF_{r}) \in \cC \} \subseteq \cG(t_i, n).
		$$
	\end{definition}
	
	\begin{remark}
		Concerning the relationship between the size of a flag code and the ones of its projected codes, it is clear that  $\vert \cC_i\vert \leq \vert \cC \vert$ for every $i=1, \dots, r$. In case  $|\cC_1|=\dots=|\cC_r|=|\cC|$, we say that  $\cC$ is \emph{disjoint}. Under this condition, it is possible to establish also a clear connection between the minimum distance of a given flag code and the ones of its projected codes.  More precisely, if $\cC$ is a disjoint flag code, then
		$$
		d_f(\cC) \geq \sum_{i=1}^r d_S(\cC_i).
		$$
	\end{remark}
	
	In \cite{Consistentes} the authors introduced a family of flag codes such that the distance and size of the projected codes completely determine the ones of the corresponding flag code. 
	\begin{definition}\label{def: consistentes}
		A flag code $\cC$ is \emph{consistent} if the following conditions hold:
		\begin{enumerate}
			\item  $\cC$ is disjoint.
			\item  $d_f(\cC) =\sum_{i=1}^r d_S(\cC_i).$
		\end{enumerate}
	\end{definition}
	
	In the same paper, the authors develop a decoding algorithm for consistent flag codes over the erasure channel and provide important families of such a class of codes. Among them, we can find the one of \emph{optimum distance flag codes}. This class of flag codes has been already studied in \cite{OrbitODFC, CasoPlanar, CasoNoPlanar, MA-X}.  In these works, the reader can find specific constructions of them as well as the following characterization.
	
	\begin{theorem}\cite[Th. 3.11]{CasoPlanar}\label{theo: caracterización ODFC}
		A flag code is an optimum distance flag code if, and only if, it is disjoint and every projected code attains the maximum possible distance for its dimension.
	\end{theorem}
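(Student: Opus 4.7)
The plan is to run a short squeeze argument between two inequalities already available in the excerpt. On one side, equation (\ref{eq: bound subspace distance}) gives, for each coordinate $i$, an upper bound on $d_S(\cC_i)$; writing $M_i$ for this maximum (so $M_i = 2t_i$ when $2t_i\leq n$, and $M_i = 2(n-t_i)$ otherwise), summing over $i$ reproduces exactly the right-hand side of (\ref{eq: dist max flags}). On the other side, the remark following the definition of projected codes records that whenever $\cC$ is disjoint, $d_f(\cC)\geq \sum_{i=1}^r d_S(\cC_i)$. Observe that each $M_i>0$, since $0<t_i<n$.

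For the easy direction I would assume $\cC$ is disjoint and that $d_S(\cC_i)=M_i$ for every $i$. The disjointness lower bound then gives $d_f(\cC)\geq \sum_i M_i$, which matches the upper bound (\ref{eq: dist max flags}); hence $\cC$ is an optimum distance flag code.

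For the converse I would assume $d_f(\cC)=\sum_i M_i$ and pick any two distinct flags $\cF,\cF'\in\cC$. Then
$$
\sum_{i=1}^r M_i \;=\; d_f(\cC) \;\leq\; d_f(\cF,\cF') \;=\; \sum_{i=1}^r d_S(\cF_i,\cF'_i) \;\leq\; \sum_{i=1}^r M_i,
$$
so equality holds throughout, and in particular $d_S(\cF_i,\cF'_i)=M_i$ for every $i$. Since each $M_i$ is strictly positive, this forces $\cF_i\neq \cF'_i$ for every $i$, i.e.\ the $i$-th projection is injective on $\cC$; this is exactly disjointness. Using disjointness in turn, any two distinct subspaces of $\cC_i$ lift to two distinct flags of $\cC$, whose $i$-th component distance must already equal $M_i$ by the previous paragraph; hence $d_S(\cC_i)=M_i$.

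I do not anticipate a substantive obstacle: the only point that deserves attention is the term-by-term identification of the upper bound (\ref{eq: dist max flags}) with $\sum_i M_i$ coming from (\ref{eq: bound subspace distance}). Once that identification is in place, the rest of the argument is precisely the two-way squeeze displayed above, together with the standard translation between ``distinct flags have distinct $i$-th components'' and disjointness of $\cC$.
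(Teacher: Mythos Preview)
Your argument is correct. Note, however, that the paper does not supply its own proof of this statement: it is quoted verbatim from \cite[Th.~3.11]{CasoPlanar} and used as a black box, so there is no in-paper proof to compare against. Your squeeze between the disjointness lower bound $d_f(\cC)\geq\sum_i d_S(\cC_i)$ and the termwise upper bound $d_S(\cC_i)\leq M_i$ is exactly the natural route and matches the spirit of the original reference.

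One minor remark: implicitly your converse assumes $|\cC|\geq 2$, which is fine since $d_f(\cC)=\sum_i M_i>0$ forces this; and in the forward direction the hypothesis $d_S(\cC_i)=M_i>0$ already rules out $|\cC_i|=1$. You may want to make this explicit, but it is not a gap.
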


	We will come back to this ideas in Subsection \ref{subsec: decoding} in order to adapt the consistent flag codes decoding algorithm designed in \cite{Consistentes} to the constructions proposed in the present paper.

	\subsection{Cyclic orbit flag codes}
	
	Let us recall the concept of \emph{cyclic orbit flag code} as the orbit of the multiplicative action of (cyclic) subgroups of $\bbF_{q^n}^\ast$ on flags on $\bbF_{q^n}$. This concept of cyclic orbit flag code was first introduced in \cite{FlagCyclic} following the approach of \cite{GLMoTro2015} for cyclic orbit subspace codes. 
	
 The cyclic group $\bbF_{q^n}^\ast$ acts on flags on $\bbF_{q^n}$ as follows: given $\beta\in\bbF_{q^n}^\ast$ and a flag $\cF=(\mathcal{F}_1,\dots,  \mathcal{F}_r)$ of type $(t_1, \dots, t_r)$, the flag $\cF\beta$ is
	\begin{equation}\label{def: action of beta on flags}
		\cF\beta =  (\cF_1\beta, \ldots, \cF_r\beta)
	\end{equation}
and the orbit
\begin{equation}\label{def: cyclic orbit flag code}
		\orbbeta(\cF) = \{ \cF \beta^j \ | \ 0\leq j \leq |\beta|-1 \}
	\end{equation}
	is called the $\beta$-\emph{cyclic orbit flag code} generated by $\cF.$ The \emph{stabilizer} of  $\cF$ (w.r.t. $\beta$) is the subgroup of $\langle \beta \rangle$ given by
	\begin{equation}\label{def: stabilizer of a flag}
		\stabbeta(\cF)= \{\beta^j \ | \ \cF\beta^j= \cF \}.
	\end{equation}
	
	If $\beta$ is primitive, that is, if the acting group is $\bbF_{q^n}^\ast$, we  simply write $\orb(\cF)$ to denote the \emph{cyclic orbit flag code generated by $\cF$.} We also drop the subscript in $\stab(\cF)$. Observe that, for every $\beta\in\bbF_{q^n}^\ast$, it holds $\stabbeta(\cF)=\langle\beta\rangle \cap \stab(\cF)$.
	
	We can take advantage of the orbital structure to compute the code parameters: the cardinality of $ \orbbeta(\cF)$ is given by
	\begin{equation}\label{cardinality cyclic orbit flag code}
		| \orbbeta(\cF)| = \dfrac{|\beta|}{|\stabbeta(\cF)|}  = \dfrac{|\beta|}{|\langle\beta\rangle \cap \stab(\cF)|}
	\end{equation}
	and its minimum distance can be calculated as
	\begin{equation}\label{distance cyclic orbit flag code}
		d_f(\orbbeta(\cF))= \min\{ d_f(\cF, \cF\beta^j) \ | \ \beta^j \notin \stabbeta(\cF) \}.
	\end{equation}
	\begin{remark}
		Concerning the projected codes associated to  $\orbbeta(\cF)$, there are important facts to point out. First of all, note that the projected codes of a $\beta$-cyclic orbit flag codes are also $\beta$-cyclic orbit (subspace) codes. More precisely, for every $1\leq i\leq r$, we have
		\begin{equation}\label{def: prjected codes of an orbit flag code}
			(\orbbeta(\cF))_i = \orbbeta(\cF_i).
		\end{equation}
		Moreover, the straightforward stabilizers relationship
		\begin{equation}\label{eq: estabilizador flag}
			\stabbeta(\cF)=\bigcap_{i=1}^r \stabbeta(\cF_i) 
		\end{equation}
		leads to a nice rapport  between cardinalities: for every $1\leq i \leq r$, we have that $|\orbbeta(\cF_i)|$ divides $|\orbbeta(\cF)|$ (\cite[Prop. 3.6]{FlagCyclic}).
	\end{remark}
	
	Coming back to the computation of the values $| \orbbeta(\cF)|$ and $d_f(\orbbeta(\cF))$, in \cite{FlagCyclic}, it is showed that the knowledge of a specific subfield associated to $\cF$ allows us to obtain them directly. Let us recall the concept of best friend of a flag introduced in \cite{FlagCyclic} by generalization of the concept of a subspace best friend given in \cite{GLMoTro2015}.  
	
	\begin{definition}
		A subfield $\bbF_{q^m}$ of $\bbF_{q^n}$ is said to be a \emph{friend} of a flag $\cF$ on $\bbF_{q^n}$ if all the subspaces of $\cF$ are $\bbF_{q^m}$-vector spaces, that is, if it is a friend of all of them. The \emph{best friend} of the flag $\cF$ is its biggest friend.
	\end{definition} 
	
	From this definition it clearly holds that the type vector of a flag has to satisfy a necessary condition whenever the best friend is fixed. Furthermore, as it occurs when we work with the stabilizer subgroup, there are important connections between the best friend of a flag and the ones of its subspaces. 
	
	\begin{proposition}\cite[Lemma 3.14, Prop. 3.16, Cor. 3.18]{FlagCyclic} \label{lem: BF divides dimensions}
		Let $\cF=(\cF_1, \ldots, \cF_r)$ be a flag of type $(t_1, \ldots, t_r)$ on $\bbF_{q^n}$. If $\bbF_{q^m}$ is a friend of $\cF,$ then $m$ divides  $\gcd(t_1, \ldots, t_r, n).$ Moreover, if $\bbF_{q^m}$ is the best friend of $\cF$, then it is the intersection of the ones of $\cF_i$, for every $i=1,\ldots, r$, and we also have that $\bbF_{q^m}=\stab(\cF) \cup \{0\}$.
	\end{proposition}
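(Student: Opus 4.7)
The plan is to split the statement into its three assertions and prove them in order, leveraging the corresponding subspace-level statements from \cite{GLMoTro2015} together with the stabilizer identity (\ref{eq: estabilizador flag}).

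For the first claim, I would just unpack the definition of friend. If $\bbF_{q^m}$ is a friend of $\cF$, then each $\cF_i$ is an $\bbF_{q^m}$-vector space of $\bbF_q$-dimension $t_i$, hence $t_i = m\cdot\dim_{\bbF_{q^m}}(\cF_i)$, which forces $m\mid t_i$ for every $i$. Since $\bbF_{q^m}$ is a subfield of $\bbF_{q^n}$ we also have $m\mid n$, and consequently $m\mid \gcd(t_1,\ldots,t_r,n)$.

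For the second claim, let $\bbF_{q^{m_i}}$ denote the best friend of $\cF_i$. The standard fact that $\bbF_{q^a}\cap \bbF_{q^b}=\bbF_{q^{\gcd(a,b)}}$ ensures that $\bigcap_{i=1}^r \bbF_{q^{m_i}}$ is itself a subfield of $\bbF_{q^n}$. This subfield is a friend of every $\cF_i$, hence of $\cF$, so by maximality of the best friend it is contained in $\bbF_{q^m}$. Conversely, $\bbF_{q^m}$ is a friend of each $\cF_i$, so by maximality of $\bbF_{q^{m_i}}$ we obtain $\bbF_{q^m}\subseteq \bbF_{q^{m_i}}$ for every $i$ and therefore $\bbF_{q^m}\subseteq \bigcap_{i=1}^r \bbF_{q^{m_i}}$. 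The two inclusions give the desired equality.

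For the last claim, I would combine (\ref{eq: estabilizador flag}), applied with $\beta$ primitive so that $\stab(\cF)=\bigcap_{i=1}^r \stab(\cF_i)$, with the subspace-level identification $\stab(\cF_i)\cup\{0\}=\bbF_{q^{m_i}}$ proved in \cite{GLMoTro2015}. Adjoining the zero element commutes with the intersection of the (nonzero) stabilizer groups in the evident way, so that
\[
\stab(\cF)\cup\{0\}=\Bigl(\bigcap_{i=1}^r \stab(\cF_i)\Bigr)\cup\{0\}=\bigcap_{i=1}^r\bigl(\stab(\cF_i)\cup\{0\}\bigr)=\bigcap_{i=1}^r \bbF_{q^{m_i}}=\bbF_{q^m},
\]
where the last equality is exactly the second assertion.

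The whole argument is essentially a translation of the subspace picture to the flag picture through the intersection identity (\ref{eq: estabilizador flag}); no serious obstacle is expected. The only point that requires a little care is the set-theoretic step where one redistributes $\{0\}$ across an intersection of nonzero multiplicative groups, which I would verify by checking both inclusions on a generic element, splitting on whether it is zero or not.
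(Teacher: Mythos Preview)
Your proof is correct. Note, however, that the paper does not actually supply a proof of this proposition: it is stated as a citation of \cite[Lemma~3.14, Prop.~3.16, Cor.~3.18]{FlagCyclic} and is followed immediately by a remark, with no proof environment. So there is no ``paper's own proof'' to compare against here.

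That said, your argument is the natural one and matches the spirit of the cited results: reduce everything to the subspace level via the stabilizer identity~(\ref{eq: estabilizador flag}) and the identification $\stab(\cU)\cup\{0\}=\bbF_{q^{m}}$ for a subspace with best friend $\bbF_{q^m}$ (which follows from Proposition~\ref{prop: stab+ es best friend}, since $\bbF_{q^m}^\ast\subseteq\stab(\cU)$ and both have order $q^m-1$). The one implicit fact you are using in the second claim---that every friend of a flag (or subspace) is contained in its best friend---holds because the set of friends is closed under compositum: if $\bbF_{q^a}$ and $\bbF_{q^b}$ are both friends of $\cU$, then $\cU$ is closed under multiplication by elements of $\bbF_{q^a}\cdot\bbF_{q^b}=\bbF_{q^{\lcm(a,b)}}$. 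You might make this explicit, but otherwise the argument is complete.
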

	\begin{remark}
		Note that, if $1\in\cF_1$, then every friend of the flag $\cF$ is contained in $\cF_1$. Moreover, all the flags in $\orbbeta(\cF)$ have the same best friend, allowing us to speak about the \emph{best friend of a $\beta$-cyclic orbit flag code}. Now, if $\cF=(\cF_1, \ldots, \cF_r)$ is a flag of type $(t_1, \ldots, t_r)$ on $\bbF_{q^n}$ with  $\bbF_{q^m}$ as its best friend, $\bbF_{q^m}$ must be a friend of all its subspaces and we can write $t_i=m s_i$ for $i=1, \ldots, r$,  where $1\leq s_1 < \dots < s_r < s = \frac{n}{m}$. Finally, we can find linearly independent elements $a_1, \ldots, a_{s_r} \in \bbF_{q^{n}}$ (over $\bbF_{q^m}$) such that, for every $1\leq i\leq r$, we have
		\begin{equation}\label{eq: sum direct F_i}
			\cF_i = \bigoplus_{j=1}^{s_i}  \bbF_{q^m} a_{j}.
		\end{equation}
		In case $m$ is a dimension in the type vector, then $s_1=1$ and the cyclic orbit code $\orb(\cF_1)$ is the $m$-spread of $\bbF_{q^n}$ described in (\ref{def: spread Anna-Lena}). Moreover, if $1\in\cF_1$, this subspace is exactly the subfield $\bbF_{q^m}$.
		
	\end{remark}
	
	Let us recall how the knowledge of the best friend of a $\beta$-cyclic orbit flag code provides relevant information about the code parameters.
	
	\begin{proposition}\cite{FlagCyclic}
		\label{prop: cardinality and best friend}
		Let $\cF=(\cF_1, \ldots, \cF_r)$ be a flag on $\bbF_{q^n}$ and assume that $\bbF_{q^m}$ is its best friend. Then 
		\begin{equation}\label{eq: cardinality beta-ciclicos}
			|\orbbeta(\cF)|=\frac{|\beta|}{|\langle\beta\rangle \cap \bbF_{q^m}^\ast|}.
		\end{equation}
		Moreover, the value $2m$ divides $d_f(\orbbeta(\cF))$ and, if the the type vector of $\cF$ is $(ms_1, \ldots, ms_r)$, then it holds
		\begin{equation}\label{eq: distance bounds}
			2m \leq d_f(\orb(\cF))  \leq  2m \left( \sum_{s_i \leq \lfloor \frac{s}{2}\rfloor} s_i + \sum_{s_i > \lfloor \frac{s}{2}\rfloor} (s-s_i) \right),
		\end{equation}
		  whenever $\beta\notin\bbF_{q^m}^\ast$. On the other hand, if $\beta\in\bbF_{q^m}^\ast$, then $d_f(\orbbeta(\cF))=0$.
	\end{proposition}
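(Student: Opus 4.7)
The plan is to derive all four claims directly from the stabilizer / best friend dictionary already recalled above (Proposition \ref{lem: BF divides dimensions}), combined with the corresponding subspace-level facts from \cite{GLMoTro2015} (Proposition \ref{prop: stab+ es best friend}) applied projection by projection. I would handle the cardinality first, then the divisibility of the flag distance by $2m$, then the two inequalities in \eqref{eq: distance bounds}, and finally dispense with the trivial case $\beta\in\bbF_{q^m}^\ast$.

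First, I would start from the general orbit-size formula \eqref{cardinality cyclic orbit flag code}, namely $|\orbbeta(\cF)|=|\beta|/|\langle\beta\rangle\cap\stab(\cF)|$. By Proposition \ref{lem: BF divides dimensions} the best friend hypothesis gives exactly $\stab(\cF)\cup\{0\}=\bbF_{q^m}$, i.e.\ $\stab(\cF)=\bbF_{q^m}^\ast$. Substituting yields \eqref{eq: cardinality beta-ciclicos} immediately.

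Next, for the distance, I would use \eqref{distance cyclic orbit flag code} to pick $\beta^j\notin\stabbeta(\cF)=\langle\beta\rangle\cap\bbF_{q^m}^\ast$ realizing $d_f(\orbbeta(\cF))$, and break the flag distance into the sum $\sum_{i=1}^r d_S(\cF_i,\cF_i\beta^j)$. Since $\bbF_{q^m}$ is a friend of each $\cF_i$, the best friend $\bbF_{q^{m_i}}$ of $\cF_i$ is a superfield of $\bbF_{q^m}$, so $m\mid m_i$; Proposition \ref{prop: stab+ es best friend} then forces $2m_i$, and hence $2m$, to divide each summand $d_S(\cF_i,\cF_i\beta^j)$. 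Summing gives $2m\mid d_f(\orbbeta(\cF))$. For the lower bound $d_f(\orbbeta(\cF))\geq 2m$, note that as $\beta^j\notin\bbF_{q^m}^\ast=\stab(\cF)=\bigcap_i\stab(\cF_i)$, there is at least one index $i$ for which $\cF_i\beta^j\neq\cF_i$; the corresponding summand $d_S(\cF_i,\cF_i\beta^j)$ is then a positive multiple of $2m$, so at least $2m$.

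For the upper bound, I would specialize the general flag-code bound \eqref{eq: dist max flags} to the type vector $(ms_1,\ldots,ms_r)$ and substitute $n=ms$. A short check (splitting on the parity of $s$ and of $m$) shows $ms_i\leq\lfloor n/2\rfloor$ if and only if $s_i\leq\lfloor s/2\rfloor$, after which factoring $m$ out of each summand produces exactly $2m\bigl(\sum_{s_i\leq\lfloor s/2\rfloor}s_i+\sum_{s_i>\lfloor s/2\rfloor}(s-s_i)\bigr)$. This is the only slightly delicate step, since one must be careful that the case distinction in \eqref{eq: dist max flags} scales correctly with $m$; once this is confirmed the bound is immediate. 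Finally, if $\beta\in\bbF_{q^m}^\ast=\stab(\cF)$, then every $\beta^j$ lies in $\stabbeta(\cF)$ and $\orbbeta(\cF)=\{\cF\}$, so by convention $d_f(\orbbeta(\cF))=0$.
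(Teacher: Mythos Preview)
The paper does not actually prove this proposition: it is quoted from \cite{FlagCyclic} and stated without argument. Your proof is correct and is exactly the natural derivation one would expect from the ingredients already recalled in the surrounding text (the orbit-size formula \eqref{cardinality cyclic orbit flag code}, the identification $\stab(\cF)=\bbF_{q^m}^\ast$ from Proposition~\ref{lem: BF divides dimensions}, the subspace-level divisibility from Proposition~\ref{prop: stab+ es best friend}, and the general type bound \eqref{eq: dist max flags}); there is nothing to compare against in this paper, and your write-up would serve perfectly well as the missing argument. One small remark: the displayed inequality \eqref{eq: distance bounds} in the paper is written with $\orb(\cF)$ rather than $\orbbeta(\cF)$, but your argument establishes the bounds for any $\beta\notin\bbF_{q^m}^\ast$, which covers both readings.
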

	\begin{remark} From (\ref{eq: cardinality beta-ciclicos}) and (\ref{eq: distance bounds}), it is clear that both size and cardinality depend on the generating flag (hence on its best friend), the acting subgroup and the type vector. In particular, once we have fixed the best friend $\bbF_{q^m}$, we obtain the maximum possible orbit size if $\beta$ is a primitive element of $\bbF_{q^n}$. In this case, it holds $|\orb(\cF)|=\frac{q^n-1}{q^m-1}$. However, if we take $\beta\in\bbF_{q^m}^\ast$, we obtain the minimum possible cardinality since $\orbbeta(\cF)=\{\cF\}$.
	\end{remark}

	\subsection{Flag distances, best friend and type vectors interplay} \label{subsec:interdependence}
	
	From the bounds provided in (\ref{eq: distance bounds}) we know that, fixed the subfield $\bbF_{q^m}$ as best friend of a flag code $\cC$ of type $(ms_1, \ldots, ms_r)$, the possible values for the distance between flags in $\cC$ belong to the interval 
	\begin{equation}\label{eq: distances interval}
		[2m,2m \sum_{s_i \leq \lfloor \frac{s}{2}\rfloor} s_i + \sum_{s_i > \lfloor \frac{s}{2}\rfloor} (s-s_i) ].    
	\end{equation}		
	
	Nevertheless, in the orbital flag codes setting it is very important to point out that not every possible flag distance value is compatible with every type vector. In general, the greater the flag distance, the more conditions over the corresponding type vector we will have to impose.  The simplest case comes from considering cyclic flag codes of length one. In \cite[Lemma 4.1]{GLMoTro2015}, it was already shown that a cyclic (subspace) code with best friend $\bbF_{q^m}$ has, at least, distance $2m$ and constructions attaining this extreme value of the distance were also provided in \cite[Prop. 4.3]{GLMoTro2015}. However, when we work with flags of length $r\geq 2$, not even the minimum value of the distance, which is $2m$ as well, can be obtained for every type vector. This is a consequence of the link between flag distance values and the number of subspaces of a flag $\cF$ that share the best friend of $\cF$. Let us explain this relationship in the following result.

	\begin{theorem}\label{th: dist min type vector}
		Let $\cF$ be a flag on $\bbF_{q^n}$ with the subfield $\bbF_{q^m}$ as its best friend and take $\beta\in\bbF_{q^n}^\ast\setminus\bbF_{q^m}^\ast$. 
		\begin{enumerate}
			\item If there are $1\leq j\leq r$ subspaces of $\cF$ with $\bbF_{q^m}$ as their best friend, then 
			$d_f(\orbbeta(\cF))\geq 2mj.$
			\item If $d_f(\orbbeta(\cF))=2m$, then $\bbF_{q^m}$ is the best friend of exactly one subspace of $\cF$.
		\end{enumerate}
	\end{theorem}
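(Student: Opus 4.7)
Write $\bbF_{q^{m_i}}$ for the best friend of the subspace $\cF_i$. Since $\bbF_{q^m}$ is a friend of every $\cF_i$, we have $\bbF_{q^m}\subseteq\bbF_{q^{m_i}}$, so $m\mid m_i$ for each $i$. In particular, either $m_i=m$ or $m_i\geq 2m$. By Proposition \ref{prop: stab+ es best friend} applied to each subspace $\cF_i$, the stabilizer of $\cF_i$ in $\bbF_{q^n}^\ast$ is $\bbF_{q^{m_i}}^\ast$, and every distance $d_S(\cF_i,\cF_i\beta^k)$ is either $0$ (when $\beta^k\in\bbF_{q^{m_i}}^\ast$) or a positive multiple of $2m_i$ (since the intersection $\cF_i\cap\cF_i\beta^k$ is an $\bbF_{q^{m_i}}$-vector space, whose dimension over $\bbF_q$ is a multiple of $m_i$). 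These are the only ingredients I will use.

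For part (1), fix an exponent $k$ with $\beta^k\notin\stabbeta(\cF)$. By Proposition \ref{lem: BF divides dimensions}, $\stabbeta(\cF)=\langle\beta\rangle\cap\bbF_{q^m}^\ast$, so $\beta^k\notin\bbF_{q^m}^\ast$. Let $\cF_{i_1},\ldots,\cF_{i_j}$ denote the $j$ subspaces whose best friend is exactly $\bbF_{q^m}$. For each such $\ell$, the stabilizer of $\cF_{i_\ell}$ is $\bbF_{q^m}^\ast$, and since $\beta^k\notin\bbF_{q^m}^\ast$, we get $\cF_{i_\ell}\beta^k\neq\cF_{i_\ell}$, so $d_S(\cF_{i_\ell},\cF_{i_\ell}\beta^k)\geq 2m$. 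Dropping the (nonnegative) contributions of the remaining subspaces,
$$d_f(\cF,\cF\beta^k)=\sum_{i=1}^{r}d_S(\cF_i,\cF_i\beta^k)\geq \sum_{\ell=1}^{j}d_S(\cF_{i_\ell},\cF_{i_\ell}\beta^k)\geq 2mj,$$
and taking the minimum over all such $k$ yields $d_f(\orbbeta(\cF))\geq 2mj$.

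For part (2), part (1) immediately gives that \emph{at most} one subspace of $\cF$ can have $\bbF_{q^m}$ as its best friend (otherwise $d_f(\orbbeta(\cF))\geq 4m$). It remains to show that at least one does. Suppose, by contradiction, that $m_i>m$, hence $m_i\geq 2m$, for every $i$. Pick any $\beta^k\notin\stabbeta(\cF)$. Then $\beta^k$ cannot lie in every $\stab(\cF_i)=\bbF_{q^{m_i}}^\ast$, since otherwise $\beta^k$ would stabilize $\cF$ (by the formula $\stabbeta(\cF)=\bigcap_i\stabbeta(\cF_i)$). Choose an index $i_0$ with $\beta^k\notin\bbF_{q^{m_{i_0}}}^\ast$; then $d_S(\cF_{i_0},\cF_{i_0}\beta^k)$ is a positive multiple of $2m_{i_0}\geq 4m$, while all other summands are nonnegative, so $d_f(\cF,\cF\beta^k)\geq 4m$. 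This contradicts $d_f(\orbbeta(\cF))=2m$, proving the claim.

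The only delicate point is recognizing that the best friend of the flag can be strictly smaller than the best friend of every subspace (since it arises as an intersection of subfields), which is exactly what forces the separate argument in part (2). Once the dichotomy $m_i=m$ versus $m_i\geq 2m$ is in hand, both parts fall out from the divisibility constraint on subspace distances stated in Proposition \ref{prop: stab+ es best friend}.
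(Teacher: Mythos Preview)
Your proof is correct and follows essentially the same approach as the paper's own argument: for part (1) you use that $\beta^k\notin\bbF_{q^m}^\ast$ forces each of the $j$ subspaces with best friend $\bbF_{q^m}$ to contribute at least $2m$ to the flag distance, and for part (2) you combine the ``at most one'' consequence of part (1) with the observation that if every $m_i>m$ then some non-stabilized subspace contributes at least $2m_i>2m$. The only cosmetic difference is that you sharpen $m_i>m$ to $m_i\geq 2m$ (using $m\mid m_i$), yielding the bound $d_f\geq 4m$ rather than merely $d_f>2m$; this is harmless and the arguments are otherwise identical.
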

	\begin{proof} Let us prove $(1)$. Assume that there exist $j$ subspaces, say $\cF_{i_1}, \dots, \cF_{i_j}$, of $\cF$ having $\bbF_{q^m}$ as their best friend. Then it suffices to see that, if $\beta^l\notin\stabbeta(\cF)=\langle\beta\rangle\cap \bbF_{q^m}^\ast$, then  $\beta^l$ does not stabilize the subspaces $\cF_{i_1}, \dots, \cF_{i_j}$. Consequently, we have
		$$
		d_f(\cF, \cF\beta^l) \geq \sum_{k=1}^j d_S(\cF_{i_k}, \cF_{i_k}\beta^l)\geq 2mj.
		$$	
		To prove $(2)$, let us start assuming that there are at least two different subspaces $\cF_i$ and $\cF_j$ in $\cF$ with $\bbF_{q^m}$ as their best friend. By $(1)$ we have that $d_f(\orbbeta(\cF))\geq 4m> 2m$. On the other hand, suppose that no subspace in $\cF$ has $\bbF_{q^m}$ as its best friend. In this case, for every $1\leq i\leq r$, we put $\bbF_{q^{m_i}}$ the best friend of $\cF_i$ and, since $\bbF_{q^m}=\bigcap_{i=1}^r\bbF_{q^{m_i}}$, we have that $m$ is a proper divisor of every $m_i$. In particular, $m < m_i$, for every $1\leq i\leq r$. Now, for every $\beta^l\notin\stabbeta(\cF)$, we have at least one index $1\leq i\leq r$ such that $\beta^l\notin\stabbeta(\cF_i)$. Hence,
		$$
		d_f(\cF, \cF\beta^l) \geq d_S(\cF_i, \cF_i\beta^l)\geq 2m_i > 2m.
		$$
		Thus, $d_f(\cC)> 2m$.

	\end{proof}

		\begin{remark} Note that the converses of statements  $(2)$ and $(1)$ in the previous result  are not necessarily true. Take, for instance, $\cF=(\bbF_{q^2}, \bbF_{q^4}, \bbF_{q^8})$ on $\bbF_{q^{16}},$ which has best friend $\bbF_{q^2}$. Let us consider $\beta=\alpha^5$ where $\langle \alpha \rangle = \bbF_{q^{16}}^\ast$. Then we have just one subspace of $\cF$ with best friend $\bbF_{q^2}$ whereas $d_f(\orbbeta(\cF))=12=2 \cdot 3 \cdot 2  \geq  4$. At the same time there are not three subspaces in $\cF$ sharing its best friend.
			
	\end{remark}
	
	The previous theorem allows us to discard some type vectors if we work with the minimum value of the  distance when the best friend is $\bbF_{q^m}$.
	
	\begin{corollary}
		Let $\cF$ be a flag of type $(ms_1, \dots, ms_r)$ on $\bbF_{q^n}$ with best friend $\bbF_{q^m}$ and take $\beta\in\bbF_{q^n}^\ast\setminus \bbF_{q^m}^\ast$. If $d_f(\orbbeta(\cF))=2m$, then $\gcd(s_j, \frac{n}{m})\neq 1$ for, at least $r-1$ indices $1\leq j\leq r$.
	\end{corollary}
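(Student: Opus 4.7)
The plan is to reduce the corollary to part $(2)$ of Theorem \ref{th: dist min type vector} together with the divisibility information provided by Proposition \ref{lem: BF divides dimensions}. By part $(2)$ of Theorem \ref{th: dist min type vector}, the hypothesis $d_f(\orbbeta(\cF))=2m$ forces \emph{exactly one} index $i_0\in\{1,\dots,r\}$ such that $\bbF_{q^m}$ is the best friend of $\cF_{i_0}$. For every remaining index $j\neq i_0$, I denote by $\bbF_{q^{m_j}}$ the best friend of $\cF_j$. Since $\bbF_{q^m}$ is a friend of the whole flag it is in particular a friend of $\cF_j$, hence $\bbF_{q^m}\subseteq\bbF_{q^{m_j}}$, and the containment is strict by the choice of $j$. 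So $m$ is a proper divisor of $m_j$ and we may write $m_j=mk_j$ with an integer $k_j\geq 2$.

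Next, I would apply Proposition \ref{lem: BF divides dimensions} to the subspace $\cF_j$ of dimension $ms_j$ with friend $\bbF_{q^{m_j}}$, obtaining $m_j\mid\gcd(ms_j,n)$. Splitting this into the two divisibilities $mk_j\mid ms_j$ and $mk_j\mid n$ and dividing through by $m$ gives $k_j\mid s_j$ and $k_j\mid n/m$. Thus $k_j$ divides $\gcd(s_j,n/m)$, and since $k_j\geq 2$, this forces $\gcd(s_j,n/m)\neq 1$ for each of the $r-1$ indices $j\neq i_0$, which is exactly the statement of the corollary.

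No genuine obstacle is expected here: the corollary is essentially an arithmetic translation of Theorem \ref{th: dist min type vector}$(2)$ through the lens of Proposition \ref{lem: BF divides dimensions}. The only delicate point is the observation that any friend of the flag is automatically a friend of each of its subspaces, which is immediate from the definition and is what enables the inclusion $\bbF_{q^m}\subsetneq\bbF_{q^{m_j}}$ for $j\neq i_0$ and therefore the strict divisibility $m\mid m_j$ with $k_j\geq 2$.
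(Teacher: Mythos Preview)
Your proof is correct and follows essentially the same route as the paper's own argument: invoke Theorem~\ref{th: dist min type vector}(2) to isolate the unique index whose subspace has best friend $\bbF_{q^m}$, then for each remaining index write its best friend as $\bbF_{q^{mk_j}}$ with $k_j\geq 2$ and use the divisibility of both $ms_j$ and $n$ by $mk_j$ to conclude $k_j\mid\gcd(s_j,n/m)$. The paper uses the same notation up to renaming ($a_j$ in place of your $k_j$) and the same chain of divisibilities.
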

	\begin{proof}
		By means of Theorem \ref{th: dist min type vector}, we know that there is exactly one subspace of $\cF$ having best friend $\bbF_{q^m}$, say $\cF_i$. Now, for each $j\neq i$, we put $\bbF_{q^{m_j}}$ the best friend of $\cF_j$. In particular, we know that $m$ is a proper divisor of every $m_j$. Let us write $m_j= m a_j$, with $a_j>1$ for every $j\neq i$.  In addition, $m_j=ma_j$ divides both $\dim(\cF_j)=ms_j$ and $n$. Hence, $1 < a_j$ divides both $s_j$  and $\frac{n}{m}$. We conclude that $\gcd(s_j, \frac{n}{m})> 1$ for all $1\leq j\leq r$, $j\neq i$.
	\end{proof}	
	
	\begin{example}
		 If $n=16$ and we fix $\bbF_{q^2}$ as the best friend of our flags, the minimum distance value $4$ cannot be obtained for type $(4,6,10)$ since $\gcd(3, 8)=\gcd(5, 8)=1$. In contrast, this value would be attainable for type $(4,6,8)$, for instance.
			Using the same argument, if we take $n=14$, and consider a flag $\cF$ on $\bbF_{q^{14}}$ having the subfield $\bbF_{q^2}$ as its best friend, we can conclude that $\beta$-cyclic orbit flag codes generated by $\cF$ will never give distance $4$, unless $\cF$ is the flag of length one $\cF=(\bbF_{q^2})$. 
	\end{example}

	We have seen that, fixed the best friend $\bbF_{q^m}$, the minimum value of the distance $2m$ can only be obtained by codes $\orbbeta(\cF)$ in which $\cF$ has exactly one subspace with $\bbF_{q^m}$ as its best friend as well. On the other end, as said in Theorem \ref{theo: caracterización ODFC}, a flag code $\cC$ attains the maximum possible distance for its type if, and only if, it is disjoint and all its projected codes attain the respective maximum (subspace) distance. Recall that a flag code $\cC$ of length $r$ on $\bbF_{q^n}$ is disjoint if it holds
	\begin{equation}\label{eq:disjunto}
		|\cC_1|=\dots=|\cC_r|=|\cC|.  
	\end{equation}
	In \cite[Prop. 4.19]{FlagCyclic} the authors prove that for cyclic orbit flag codes ($\beta$ primitive) this condition is equivalent to say that each subspace of $\cF$ has the same best friend (then the best friend of $\cF$). Summing up, we can also draw conditions on the type vector in the case of cyclic orbit flag codes having $\bbF_{q^m}$ as their best friend and the largest possible distance, that is, the upper value of the range in (\ref{eq: distances interval}).
	
	\begin{proposition}
		\cite[Cor. 4.23]{FlagCyclic}\label{prop:type vector optimum distance}
		Assume that the cyclic orbit code $\orb(\cF)$ is an optimum distance flag code on $\bbF_{q^n}$ with the subfield $\bbF_{q^m}$ as its best friend. Then one of the following statements holds:
		\begin{enumerate}
			\item $\orb(\cF)$ is a constant dimension code of dimension either $m$ or $n-m$.
			\item $\orb(\cF)$ has type vector $(m, n-m)$.
		\end{enumerate}
		In any of the three cases above, the code $\orb(\cF)$ has the largest possible size, that is, $\frac{q^n-1}{q^m-1}$.
	\end{proposition}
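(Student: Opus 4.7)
The plan is to combine the characterization of optimum distance flag codes from Theorem \ref{theo: caracterización ODFC} with the partial-spread counting bound applied separately to each projected code $\orb(\cF_i)$. By that theorem, being an ODFC forces $\orb(\cF)$ to be disjoint and every projected code $\orb(\cF_i)$ to attain the maximum possible subspace distance in its Grassmannian. For cyclic orbit flag codes with $\beta$ primitive, \cite[Prop. 4.19]{FlagCyclic} translates disjointness into the condition that each $\cF_i$ shares the best friend of $\cF$, namely $\bbF_{q^m}$. Proposition \ref{prop: stab+ es best friend} then yields $|\orb(\cF_i)|=\frac{q^n-1}{q^m-1}$ for every $1\leq i\leq r$.

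Write the type vector as $(ms_1,\dots,ms_r)$ with $1\leq s_1<\cdots<s_r<s=n/m$ and fix an index $i$. If $ms_i\leq n/2$, attaining the maximum distance $2ms_i$ in $\cG_q(ms_i,n)$ means that any two distinct codewords of $\orb(\cF_i)$ meet trivially, so $\orb(\cF_i)$ is a partial spread of dimension $ms_i$. Counting non-zero vectors gives
\[
\frac{q^n-1}{q^m-1}\,(q^{ms_i}-1)\leq q^n-1,
\]
which forces $q^{ms_i}-1\leq q^m-1$, hence $s_i=1$. If instead $ms_i>n/2$, the maximum distance is $2(n-ms_i)$; orthogonal complementation with respect to the trace form is a distance-preserving bijection between $\cG_q(ms_i,n)$ and $\cG_q(n-ms_i,n)$, so the image of $\orb(\cF_i)$ is a code of dimension $n-ms_i<n/2$, of the same cardinality, attaining the maximum distance $2(n-ms_i)$ in its Grassmannian. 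The same counting argument applied in dimension $n-ms_i$ then forces $n-ms_i\leq m$, that is, $s_i=s-1$.

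Hence every $s_i$ lies in $\{1,s-1\}$. Combined with the strict inequalities $s_1<\cdots<s_r<s$, the only possibilities are: $r=1$ with $s_1=1$ (constant dimension code of dimension $m$), $r=1$ with $s_1=s-1$ (constant dimension code of dimension $n-m$), or $r=2$ with $(s_1,s_2)=(1,s-1)$ (type vector $(m,n-m)$, which only makes sense when $s\geq 3$; when $s=2$ the two endpoints collapse and only the constant dimension case remains). In each of the three resulting cases, the cardinality $|\orb(\cF)|=\frac{q^n-1}{q^m-1}$ is immediate from (\ref{eq: cardinality beta-ciclicos}) with $\beta$ primitive. The main technical obstacle I anticipate is the dualization step when $ms_i>n/2$: one must verify cleanly that trace-orthogonality sends a maximum-distance code in the upper half of the Grassmannian to a partial spread in the dual dimension with the same size. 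Once that reduction is in place, the rest is a short combinatorial enumeration of the admissible $s_i$.
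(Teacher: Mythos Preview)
The paper does not give its own proof of this proposition: it is quoted verbatim as \cite[Cor.~4.23]{FlagCyclic} and used as a black box, so there is nothing in the present paper to compare your argument against line by line. That said, your proof is correct and self-contained, and it is essentially the natural one: invoke Theorem~\ref{theo: caracterización ODFC} to force disjointness and maximum distance at each projected level, use \cite[Prop.~4.19]{FlagCyclic} to conclude that every $\cF_i$ has best friend $\bbF_{q^m}$ (hence each projected orbit has size $\frac{q^n-1}{q^m-1}$), and then pin down $s_i\in\{1,s-1\}$ by the partial-spread counting bound, dualizing via the trace form when $ms_i>n/2$.

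Your stated concern about the dualization step is not a real obstacle: any nondegenerate symmetric bilinear form on $\bbF_{q^n}$ (the trace form being the canonical choice) induces a bijection $\cU\mapsto\cU^\perp$ from $\cG_q(k,n)$ to $\cG_q(n-k,n)$ satisfying $(\cU+\cV)^\perp=\cU^\perp\cap\cV^\perp$, whence $d_S(\cU^\perp,\cV^\perp)=d_S(\cU,\cV)$. Thus a maximum-distance code in dimension $ms_i>n/2$ maps to a partial spread of the same size in dimension $n-ms_i$, and your inequality $\frac{q^n-1}{q^m-1}(q^{n-ms_i}-1)\leq q^n-1$ goes through verbatim. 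The final enumeration of admissible type vectors and the cardinality claim via (\ref{eq: cardinality beta-ciclicos}) are routine.
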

	
	Using Theorem \ref{th: dist min type vector}, we obtain the next construction of cyclic orbit flag codes with the best possible distance for the above mentioned cases.
	
	\begin{proposition}\label{prop: construction odfc}
		Let $\cF=(\cF_1, \cF_2)$ a flag of type $(n, n-m)$ on $\bbF_{q^n}$. If $\cF_1$ and $\cF_2$ have the subfield $\bbF_{q^m}$ as their best friend, then the cyclic orbit codes $\orb(\cF_1)$, $\orb(\cF_2)$ and $\orb(\cF)$ have the maximum possible distance.
	\end{proposition}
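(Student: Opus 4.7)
The plan is to combine the lower bounds provided by Proposition \ref{prop: stab+ es best friend} and Theorem \ref{th: dist min type vector}(1) with the general upper bounds on subspace and flag distance, so that each of the three assertions reduces to matching bounds from above and below. Note first that the stated type must be $(m, n-m)$ (with $m<n-m$, hence $m < n/2$), so the maximum subspace distance in $\cG_q(m,n)$ and $\cG_q(n-m,n)$ is $2m$ by (\ref{eq: bound subspace distance}), and by (\ref{eq: dist max flags}) the maximum flag distance for type $(m,n-m)$ is $2m + 2(n-(n-m)) = 4m$.

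For the projected codes I would argue as follows: since $\bbF_{q^m}$ is the best friend of both $\cF_1$ and $\cF_2$, Proposition \ref{prop: stab+ es best friend} yields $d_S(\orb(\cF_i)) \geq 2m$ for $i=1,2$. Coupled with the upper bound $2m$ in each case, equality holds, so $\orb(\cF_1)$ and $\orb(\cF_2)$ attain the maximum possible distance in their respective Grassmannians.

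For the flag orbit code $\orb(\cF)$, I would first use Proposition \ref{lem: BF divides dimensions} to identify the best friend of $\cF$: it equals the intersection of the best friends of $\cF_1$ and $\cF_2$, which is $\bbF_{q^m}$ itself. Since $\beta$ is a primitive element of $\bbF_{q^n}^*$ and $m<n$, we have $\beta \notin \bbF_{q^m}^*$, so Theorem \ref{th: dist min type vector}(1) applies. Both subspaces of $\cF$ carry $\bbF_{q^m}$ as their best friend, so taking $j=2$ in that theorem gives $d_f(\orb(\cF)) \geq 2m \cdot 2 = 4m$. Combining this with the upper bound $4m$ forces equality, and $\orb(\cF)$ is an optimum distance flag code.

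The proof is essentially a direct application of the results already established in the paper; there is no serious obstacle. The only subtlety to verify carefully is that the best friend of $\cF$ genuinely coincides with the common best friend of its two constituents (so that Theorem \ref{th: dist min type vector} can be invoked with $j=2$ rather than $j=1$), and this is precisely what Proposition \ref{lem: BF divides dimensions} guarantees in this setting.
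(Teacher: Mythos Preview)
Your proof is correct and follows essentially the same approach as the paper's: both arguments match the general upper bounds for dimensions $m$ and $n-m$ against the lower bound $2m$ coming from the best friend, and both invoke Theorem~\ref{th: dist min type vector} with $j=2$ to obtain $d_f(\orb(\cF))\geq 4m$. The only cosmetic difference is that the paper handles $\orb(\cF_1)$ by observing directly that it equals the $m$-spread $\orb(\bbF_{q^m})$ of~(\ref{def: spread Anna-Lena}), whereas you appeal to Proposition~\ref{prop: stab+ es best friend}; both routes are immediate.
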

	\begin{proof}
		The result holds for $\orb(\cF_1)=\orb(\bbF_{q^m})$ by means of (\ref{def: spread Anna-Lena}). For $\orb(\cF_2)$, it suffices to see that, if $\bbF_q^m$ is the best friend of $\cF_2$, then $d_S(\orb(\cF_2))= 2m$, which is the maximum possible distance for dimension $n-m$. Last, by means of Theorem \ref{th: dist min type vector}, we conclude that $d_f(\orb(\cF))=4m$, i.e., the maximum possible distance for type $(m, n-m)$. 
	\end{proof}	
	\begin{remark}
		In the case of $\beta$-cyclic orbit flag codes with $\beta$ non primitive, in \cite[Prop. 4.19]{FlagCyclic} it is proved that the code $\orbbeta(\cF)$ is disjoint if, and only if,
			
			\begin{equation}\label{eq:disjoint beta}
				\langle\beta\rangle\cap\bbF_{q^m}^\ast= \langle\beta\rangle\cap\bbF_{q^{m_1}}^\ast= \dots = \langle\beta\rangle\cap\bbF_{q^{m_r}}^\ast.
			\end{equation}
			Note that to have (\ref{eq:disjoint beta}) it is not necessary that all the subspaces of $\cF$ share the same best friend, contrary to what happens if $\beta$ primitive (see part $(2)$ on Example \ref{ex: flags with subfields}). Moreover, in \cite[Thm. 4.21]{FlagCyclic} the authors give also conditions on the type vector of $\cF$ of an optimum distance $\beta$-cyclic flag code with fixed best friend if $\beta$ is not primitive.
			Here we present some examples extracted from \cite[Table 3]{FlagCyclic} where they 
			determine the set of allowed dimensions in the type vector, depending on the size of the acting subgroup $\langle \beta\rangle$ of $\bbF_{2^{12}}^\ast=\langle\alpha\rangle$, when the best friend is $\bbF_{2^{2}}$.
		\begin{table}[H] 
			\centering
			\begin{small}
				\begin{tabular}{cccccc}
					\hline
					$\beta$           & $|\beta|$ & $\langle\beta\rangle\cap\bbF_{q^m}^\ast$ & $|\orbbeta(\cF)|$ & Allowed dimensions  & Max. distance    \\ \hline
					$\alpha$               & 4095   &     $\bbF_{2^2}^\ast$		      &    1365     & 2, 10             & 8 \\
					$\alpha^5$             & 819    &     $\bbF_{2^2}^\ast$           &    273      & 2, 4, 8, 10       & 24 \\
					$\alpha^{9}$           & 455    &     $\{1\}$                     &    455      & 2, 10             & 8 \\
					$\alpha^{63}$          & 65     &     $\{1\}$                     &    65       & 2, 4, 6, 8, 10    & 36 \\
					\end{tabular}
			\end{small}
			\caption{Admissible dimensions for $q=2$, $n=12$, $m=2$.}\label{table: q=2, n=12, m=2}
		\end{table}
		Concerning the explicit construction of such codes, in \cite{OrbitODFC,MA-X}, the authors follow the approach of \cite{TrautManBraunRos2013} to build optimum distance flag codes under the action of (subgroups of) Singer groups of the special linear group and the general linear group, respectively, by placing a suitable spread among the projected codes. In our framework, this idea corresponds to the choice a generating flag that has certain subfield among its subspaces. Let us exhibit some concrete examples. 
		\begin{example}\label{ex: flags with subfields} Let us work in $\bbF_{2^{12}}$ and fix $\bbF_{q^2}$ as best friend of all our flags.
			\begin{enumerate}
				\item Take $\cF=(\bbF_{2^2}, \cF_2, \bbF_{2^6}, \cF_4, \cF_5)$ of type $(2,4,6,8,10)$ and consider $\beta= \alpha^{63}$, then the orbit $\orbbeta(\cF)$ is an optimum distance flag code of cardinality $65$ (see \cite{OrbitODFC}), which is the maximum possible size for an optimum distance flag code of this type. With the same notation, the orbit $\orbbeta((\bbF_{2^2}, \cF_2,\cF_4, \cF_5))$ is an optimum distance flag code, in this case of type $(2,4,8,10)$, of the same size.
				\item On the other hand, following the ideas in \cite{MA-X}, if we consider the flag $\cF'=(\bbF_{2^2}, \bbF_{2^4}, \cF'_3, \cF'_4)$ of type $(2,4,8,10)$ such that $\bbF_{2^4}$ is the best friend of $\cF'_3$, and take $\beta=\alpha^5$, then the orbit $\orbbeta(\cF')$ is an optimum distance flag code with cardinality $273$. Note that in this example the subspaces $\cF'_1$ and $\cF'_3$ do not share their best friend even thought $\orbbeta(\cF')$ is disjoint.
			\end{enumerate}
		\end{example}
	\end{remark}	
	
	These examples lead us to study  $\beta$-cyclic orbit codes when we place one or more subfields in the generating flag.

	\subsection{Generating flags based on subfields}
	In this subsection we focus on $\beta$-cyclic orbit flag codes  generated by flags having at least one subfield among their subspaces. We distinguish two situations: either every subspace in the generating flag is a subfield or there is also one subspace that is not a subfield.
	
	\subsubsection{Galois flag codes}
	
	Let us start with $\beta$-cyclic orbit flag codes generated by flags having just subfields of $\bbF_{q^n}$ as subspaces, that is, generated by the so-called \emph{Galois flags}.
	This particular class of $\beta$-cyclic orbit flag codes was introduced in \cite{FlagCyclic}. Let us recall the definition. Consider a sequence of integers $1\leq t_1 < \dots < t_r < n$  such that all of them are divisors of $n$ and  $t_i$ divides $t_{i+1}$, for $1\leq i \leq r-1$.

	\begin{definition} 
		The \emph{Galois flag} of type $(t_1, \dots, t_r)$ on $\bbF_{q^n}$ is the flag given by the sequence of nested subfields $(\bbF_{q^{t_1}}, \dots, \bbF_{q^{t_r}})$. Given $\beta \in \bbF_{q^n}^*$, the $\beta$-cyclic orbit flag code generated by this Galois flag is called the $\beta$-\textit{Galois cyclic orbit flag code}, or just $\beta$-\textit{Galois flag code}, for short, of type $(t_1, \dots, t_r)$.
	\end{definition}

	In the Galois flag $\cF$ of type vector $(t_1, \dots, t_r)$, clearly the $i$-th subspace has the subfield $\bbF_{q^{t_i}}$ as best friend. Hence, the first subfield $\bbF_{q^{t_1}}$ is the best friend of any  $\beta$-Galois flag code of type $(t_1, \dots, t_r)$. For $\beta$ primitive we have the following straightforward result.
	
	\begin{proposition}[\cite{FlagCyclic}]\label{prop:Size Galois codes}
		Let $\cC$ be the Galois flag code of type $(t_1, \dots, t_r)$, then the cardinality of this flag code is $|\cC|=(q^n-1)/(q^{t_1}-1)$ and its distance is $d_f(\cC)=2t_1$. Its $i$-th projected code $\cC_i$ has size $|\cC_i|=(q^n-1)/(q^{t_i}-1)$ and distance $2t_i$.
	\end{proposition}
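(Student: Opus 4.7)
The plan is to exploit the fact that each subspace $\bbF_{q^{t_i}}$ of the Galois flag $\cF=(\bbF_{q^{t_1}},\ldots,\bbF_{q^{t_r}})$ is trivially its own best friend, and then assemble the parameter claims from the machinery already built in Section \ref{sec:Preliminaries} and the previous subsection, under the standing assumption that $\beta$ is primitive.

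First, the projected codes are easy. By (\ref{def: prjected codes of an orbit flag code}), one has $\cC_i=\orb(\bbF_{q^{t_i}})$, which is exactly the $t_i$-spread code of (\ref{def: spread Anna-Lena}), since $t_i\mid n$. Hence $|\cC_i|=(q^n-1)/(q^{t_i}-1)$ and $d_S(\cC_i)=2t_i$: the cardinality and the lower bound $d_S(\cC_i)\geq 2t_i$ follow directly from Proposition \ref{prop: stab+ es best friend}, while the matching upper bound $d_S(\cC_i)\leq 2t_i$ is (\ref{eq: bound subspace distance}) with $k=t_i\leq n/2$ (in fact with equality we also recover that $\cC_i$ is a spread).

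Next, to handle $\cC$ itself, I would identify its best friend. Proposition \ref{lem: BF divides dimensions} gives
$$\text{best friend of }\cF \;=\; \bigcap_{i=1}^{r}\bbF_{q^{t_i}} \;=\; \bbF_{q^{t_1}},$$
the last equality resting on the divisibility tower $t_1\mid t_2\mid\cdots\mid t_r$. Applying Proposition \ref{prop: cardinality and best friend} with $\beta$ primitive then yields both $|\cC|=(q^n-1)/(q^{t_1}-1)$ and the lower bound $d_f(\cC)\geq 2t_1$.

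The only remaining point is the matching upper bound $d_f(\cC)\leq 2t_1$, and this is where the divisibility chain is used substantively. If $r=1$, $\cC=\cC_1$ is the spread and we are done; otherwise $t_1<t_2$ together with $t_1\mid t_2$ gives a strict inclusion $\bbF_{q^{t_1}}^\ast\subsetneq\bbF_{q^{t_2}}^\ast$, so we may pick $\gamma=\beta^j\in\bbF_{q^{t_2}}^\ast\setminus\bbF_{q^{t_1}}^\ast$. For every $i\geq 2$, $\gamma\in\bbF_{q^{t_i}}^\ast$ stabilizes $\bbF_{q^{t_i}}$, contributing $0$ to $d_f(\cF,\cF\gamma)$. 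For $i=1$, $\gamma$ does not stabilize $\bbF_{q^{t_1}}$, so $d_S(\bbF_{q^{t_1}},\bbF_{q^{t_1}}\gamma)$ is a positive multiple of $2t_1$ by Proposition \ref{prop: stab+ es best friend} and is bounded above by $2t_1$ thanks to (\ref{def: subspace distance in the Grassmannian}), so it equals $2t_1$. Summing gives $d_f(\cF,\cF\gamma)=2t_1$, which closes the estimate. No real obstacle is expected: the argument is a direct assembly of Propositions \ref{prop: stab+ es best friend}, \ref{lem: BF divides dimensions} and \ref{prop: cardinality and best friend} with the strict tower of subfields; the subtle step is only recognizing that the divisibility hypothesis is precisely what produces the witness $\gamma$.
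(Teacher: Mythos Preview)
Your argument is correct. The paper does not actually supply a proof of this proposition: it is quoted verbatim from \cite{FlagCyclic} and immediately followed by Remark \ref{rem:Relacion subgrupo distancia}, so there is no ``paper's own proof'' to compare against beyond the general machinery it imports. Your assembly of Propositions \ref{prop: stab+ es best friend}, \ref{lem: BF divides dimensions} and \ref{prop: cardinality and best friend}, together with the witness $\gamma\in\bbF_{q^{t_2}}^\ast\setminus\bbF_{q^{t_1}}^\ast$ to force $d_f(\cC)\leq 2t_1$, is exactly the intended route and matches the spirit of the surrounding results (in particular Theorem \ref{theo: distance Galois beta cyclic}, part (\ref{theo: distance Galois beta cyclic item 3}), which is the $\beta$-cyclic generalization). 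One small remark: when you invoke (\ref{eq: bound subspace distance}) with $k=t_i\leq n/2$, you are implicitly using that $t_i\mid n$ and $t_i<n$ force $t_i\leq n/2$; this is immediate but worth a word since the bound in (\ref{eq: bound subspace distance}) changes shape past $n/2$.
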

	
	\begin{remark}\label{rem:Relacion subgrupo distancia}
		Note that, if we take the Galois flag  $\cF$ of type $(t_1, \dots, t_r)$, the distance $d_f(\orb(\cF))=2t_1$ is the lowest possible one for cyclic orbit flag codes with $\bbF_{q^{t_1}}$ as best friend, according to (\ref{eq: dist max flags}) (in case $|\cC|>1$). In fact, there is a precise set of attainable distances for the different orbits $\orbbeta(\cF)$ when we consider the action of subgroups $\langle \beta \rangle \subseteq \bbF_{q^n}^*$. Furthermore, as proved in \cite{FlagCyclic}, we can always select  $\beta$  in a controlled manner such that the code $\orbbeta(\cF)$ reaches any distance value in the set of possible distances. This choice is made by checking the relationship between the subgroup $\langle \beta \rangle$ and the subfields $\bbF_{q^{t_i}}$. Let us recall the precise result.
	\end{remark}
	\begin{theorem}\cite[Thm. 4.14]{FlagCyclic} \label{theo: distance Galois beta cyclic}
		Let $\cF$ be the Galois flag of type $(t_1, \dots, t_r)$ and consider an element $\beta\in \bbF_{q^n}^\ast$. Then 
		\begin{equation}\label{eq:distances Galois codes}
			d_f(\orbbeta(\cF)) \in \{ 0, 2t_1, 2(t_1+t_2),\dots, 2(t_1+t_2+\dots+t_r)\}.
		\end{equation}
		Moreover,
		\begin{enumerate}
			\item $d_f(\orbbeta(\cF))= 0$ if, and only if, $\stabbeta(\cF_1)=\stabbeta(\cF_r)=\langle\beta\rangle$. \label{theo: distance Galois beta cyclic item 1}
			\item $d_f(\orbbeta(\cF))= 2\sum_{i=1}^r t_i$ if, and only if, $\stabbeta(\cF_1)=\stabbeta(\cF_r)\neq\langle\beta\rangle$. \label{theo: distance Galois beta cyclic item 2}
			\item $d_f(\orbbeta(\cF))= 2\sum_{i=1}^{j-1} t_i$  if, and only if, $\stabbeta(\cF_1)\neq\stabbeta(\cF_r)$ and $j\in\{2, \dots, r\}$ is the minimum index such that $\stabbeta(\cF_1) \subsetneq \stabbeta(\cF_j).$ \label{theo: distance Galois beta cyclic item 3}
		\end{enumerate}
	\end{theorem}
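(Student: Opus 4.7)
The plan is to reduce the computation of $d_f(\orbbeta(\cF))$ to a careful analysis of which elements of $\langle\beta\rangle$ stabilize each of the subfields $\bbF_{q^{t_i}}$. The crucial structural input is that, because $t_1\mid t_2\mid\cdots\mid t_r\mid n$, the subfields form a tower and consequently the stabilizer subgroups form a chain
\begin{equation*}
\stabbeta(\cF_1)\subseteq\stabbeta(\cF_2)\subseteq\cdots\subseteq\stabbeta(\cF_r)\subseteq\langle\beta\rangle,
\end{equation*}
where $\stabbeta(\cF_i)=\langle\beta\rangle\cap\bbF_{q^{t_i}}^{\ast}$ by Proposition \ref{lem: BF divides dimensions} applied to the constant-dimension case. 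In particular $\stabbeta(\cF)=\stabbeta(\cF_1)$ by (\ref{eq: estabilizador flag}).

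Next I would compute each summand $d_S(\cF_i,\cF_i\beta^l)$ appearing in (\ref{distance cyclic orbit flag code}). Since $\cF_i=\bbF_{q^{t_i}}$ is itself a field, any nonzero element of the intersection $\bbF_{q^{t_i}}\cap\bbF_{q^{t_i}}\beta^l$ would force $\beta^l\in\bbF_{q^{t_i}}^{\ast}$, and hence $\cF_i\beta^l=\cF_i$. Therefore $d_S(\cF_i,\cF_i\beta^l)$ equals $0$ when $\beta^l\in\stabbeta(\cF_i)$ and equals $2t_i$ otherwise. Combined with the chain condition above, for every $\beta^l$ there is a well-defined threshold index
\begin{equation*}
j(\beta^l)=\min\{\,i\in\{1,\dots,r\}\ :\ \beta^l\in\stabbeta(\cF_i)\,\},
\end{equation*}
(with the convention $j(\beta^l)=r+1$ if $\beta^l\notin\stabbeta(\cF_r)$), giving
\begin{equation*}
d_f(\cF,\cF\beta^l)=2\sum_{i=1}^{j(\beta^l)-1}t_i.
\end{equation*}
This immediately yields the inclusion (\ref{eq:distances Galois codes}).

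To finish, I would read off the three characterizations by minimizing $j(\beta^l)-1$ over $\beta^l\in\langle\beta\rangle\setminus\stabbeta(\cF)$. For item \ref{theo: distance Galois beta cyclic item 1}, the code collapses to a single flag exactly when every $\beta^l$ stabilizes $\cF$, i.e.\ $\stabbeta(\cF_1)=\langle\beta\rangle$, which by the chain forces $\stabbeta(\cF_r)=\langle\beta\rangle$. For item \ref{theo: distance Galois beta cyclic item 2}, the maximum value $2\sum t_i$ is attained precisely when every $\beta^l\notin\stabbeta(\cF_1)$ also fails to stabilize $\cF_r$; using the chain, this is equivalent to $\stabbeta(\cF_1)=\stabbeta(\cF_r)$, while the existence of at least one such $\beta^l$ forces this common subgroup to be strictly smaller than $\langle\beta\rangle$. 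For item \ref{theo: distance Galois beta cyclic item 3}, the minimum contribution $2(t_1+\cdots+t_{j-1})$ with $2\leq j\leq r$ is achieved by some $\beta^l\in\stabbeta(\cF_j)\setminus\stabbeta(\cF_1)$ and no smaller, so $j$ is exactly the minimum index with $\stabbeta(\cF_1)\subsetneq\stabbeta(\cF_j)$.

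I expect the main obstacle to be purely notational: keeping the bookkeeping of the nested stabilizers and of the index $j(\beta^l)$ transparent, especially in the boundary cases $j=1$ and $j=r+1$. The geometric content, namely that two distinct translates of a subfield intersect trivially, is elementary; once it is in hand together with the chain of stabilizers, the three cases follow by a single minimization argument.
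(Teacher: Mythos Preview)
The paper does not actually prove this theorem: it is stated with the citation \cite[Thm.~4.14]{FlagCyclic} and recalled as background, so there is no proof in the present paper to compare against. That said, your argument is correct and is exactly the natural one: the spread property of $\orb(\bbF_{q^{t_i}})$ forces each summand $d_S(\cF_i,\cF_i\beta^l)$ to be $0$ or $2t_i$, the tower $\bbF_{q^{t_1}}\subset\cdots\subset\bbF_{q^{t_r}}$ yields the chain of stabilizers, and the threshold index $j(\beta^l)$ then determines $d_f(\cF,\cF\beta^l)$ completely, from which the three cases follow by minimization.
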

	
	In view of the previous result, it is worth highlighting that, given a Galois flag $\cF$, the range of attainable distances by the codes $\orbbeta(\cF)$ follows a concrete pattern in terms of the dimensions in the type vector (see (\ref{eq:distances Galois codes})). On the other hand, we have the possibility to gradually improve the distance of $\orb(\cF)$ by selecting the orbit $\orbbeta(\cF)$ for an appropriate $\beta$, even if this choice could involve a loss of size. This nice behaviour gives rise to think that Galois codes could constitute an appropriate ``skeleton'' to support a more general family of $\beta$-cyclic orbit flag codes whose properties, in turn, might be driven by them. To explore this idea, in the following section we introduce a new family of codes.

	\subsubsection{Generalized Galois flag codes}\label{subsec: generalized Galois}
	
	 Let us take now generating flags having at least one subspace that is a subfield and at least another one that is not. This condition gives length at least two. Note also that, all the fields in a flag $\cF$ constitute a Galois subflag.
		
		\begin{definition}\label{def: genelized Galois}
			We say that a flag $\cF=(\cF_1, \dots, \cF_k)$ of type $(s_1, \dots, s_k)$ \emph{generalizes the Galois flag of type $(t_1, \dots, t_r)$} if $\{ t_1, \dots, t_r\} \subsetneq \{ s_1, \dots, s_k\}$ and the following conditions are satisfied:
			\begin{enumerate}
				\item The subflag of $\cF$ of type $(t_1, \dots, t_r)$ is  the Galois flag of this type,
				\item there is at least one subspace of $\cF$ that is not a field.
			\end{enumerate}
	\end{definition}
	
	\begin{remark}
		Observe that the second condition in Definition \ref{def: genelized Galois} excludes Galois flags from our study of generalized Galois flags. Even more, according to the previous definition, a generalized Galois flag is just a flag having at least one field and one subspace that is not a field among its subspaces. Besides, in the conditions of the previous definition, $\cF$ clearly generalizes every subflag of the Galois flag of type $(t_1, \dots, t_r)$ as well. We pay special attention to the longest Galois flag being a subflag of $\cF$. 
	\end{remark}
	
	\begin{definition}
		Let $\cF$ be a generalized Galois flag. Its longest Galois subflag is called its \em{underlying Galois subflag}.
	\end{definition}

	Observe that the underlying Galois subflag of a generalized Galois flag always exists and, due to the nested structure of flags, it is unique. 
	
	\begin{definition}
		Given $\cF$ a generalized Galois flag and $\beta \in \bbF_{q^n}^*$, the $\beta$-cyclic orbit flag code generated by $\cF$ is called a \emph{generalized $\beta$-Galois (cyclic orbit) flag code}.  If $\cF'$ is the underlying Galois subflag of $\cF$, then we say that $\orbbeta(\cF')$ is the \emph{underlying $\beta$-Galois flag code} of $\orbbeta(\cF)$.
	\end{definition}
	
	Let us see some examples reflecting different situations related to this new class of flag codes.
	\begin{example}
		Take $n=8$ and primitive elements $\alpha\in\bbF_{q^8}$, $\gamma\in\bbF_{q^4}$. The sequences
		$$
		\cF= (\bbF_{q^2}, \bbF_{q^4}, \bbF_{q^4} \oplus \bbF_q^2 \alpha) \ \ \text{and} \ \ \cF'= (\bbF_{q^2}, \bbF_{q^2} \oplus \bbF_q \gamma, \bbF_{q^4})
		$$
		are generalized flags of type $(2,4, 6)$  and $(2,3,4)$ on $\bbF_{q^8}$, respectively, with common underlying Galois subflag $(\bbF_{q^2}, \bbF_{q^4})$. 
		Now, for any $\beta \in \bbF_{q^n}^*$, the best friend of the $\beta$-Galois flag code $\orbbeta((\bbF_{q^2}, \bbF_{q^4}))$ is the field $\bbF_{q^2}$ and this property also holds for the generalized $\beta$-Galois flag code $\orb(\cF)$. However, the best friend of  $\orb(\cF')$ is $\bbF_q$, which is a field not appearing in $\cF'$. 
	\end{example}

	\begin{remark}
		Given a generalized $\beta$-Galois flag code $\cC$ of type $(s_1, \dots, s_k)$ with underlying Galois subflag $(\bbF_{q^{t_1}}, \dots, \bbF_{q^{t_r}})$, if the subfield $\bbF_{q^m}$ is the best friend of $\cC$, then it holds $\bbF_{q^m} \subseteq \bbF_{{q}^{t_1}}$. 
		
		Concerning the attainable distance values for this class of codes, contrary what happens with Galois flag codes whose set of reachable distances is completely determined by the type, different situations can arise. 
	\end{remark}
	
	\begin{example}\label{ex: 2 examples distances}
		Take $n=4$ and consider a generalized Galois flag $\cF$ of type $(1,2,3)$ with underlying Galois subflag $(\bbF_q)$. In this case, the set of attainable values for $d_f(\orb_\beta(\cF))$ is exactly the same as for general flag codes of this type on $\bbF_{q^4}$, that is,  any even integer between $0$ and $8$. However, for the same choice of the parameters, flags generalizing the Galois flag $(\bbF_q, \bbF_{q^2}),$ i.e., those of the form 
		$$
		\cF' = (\bbF_q, \bbF_{q^2}, \cF'_3),
		$$
		for some subspace $\cF'_3$ of dimension $3$ of $\bbF_{q^4}$, present a restriction on the set of possible distances. Let us prove that the value $d_f(\orb_\beta(\cF'))=6$ cannot be obtained for any $\beta\in\bbF_{q^n}^\ast$. Observe that the projected code $\orb_\beta(\bbF_{q^2})$ is a partial spread of dimension $2$ of $\bbF_{q^4}$. Hence, when computing the distance
		$$
		d_f(\cF', \cF'\beta^i) = d_S(\bbF_q, \bbF_q\beta^i) +  d_S(\bbF_{q^2}, \bbF_{q^2}\beta^i) +  d_S(\cF'_3, \cF_3'\beta^i),
		$$ 
		we have that 
		$$
		d_S(\bbF_{q^2}, \bbF_{q^2}\beta^i) = \left\lbrace
		\begin{array}{cl}
			0 & \text{if} \ \beta\in\bbF_{q^2}^\ast, \\
			4 & \text{otherwise.}
		\end{array}
		\right.
		$$
		Moreover, if $d_S(\bbF_{q^2}, \bbF_{q^2}\beta^i)=0$, then it holds $d_f(\cF', \cF'\beta^i)\leq 4$. On the other hand, if $d_S(\bbF_{q^2}, \bbF_{q^2}\beta^i)=4,$ we have $\dim( \bbF_{q^2} \cap \bbF_{q^2}\beta^i)=0$ or, equivalently, $\dim( \bbF_{q^2} \oplus\bbF_{q^2}\beta^i)=4$. Hence, for this precise $\beta^i$, we get
		$$
		\dim(\bbF_q \cap \bbF_q \beta^i)=0 \ \ \text{and} \ \ \dim(\cF'_3 + \cF'_3 \beta^i)=4.
		$$
		In both cases, we can conclude $d_S(\bbF_q, \bbF_q\beta^i)=d_S(\cF'_3, \cF'_3\beta^i)=2$. As a consequence, it holds $d_f(\cF', \cF'\beta^i)=8$.
		Thus, the value $d=6$ cannot be obtained if we consider flags of type $(1,2,3)$ on $\bbF_{q^4}$ generalizing $(\bbF_{q}, \bbF_{q^2})$.
	\end{example}
	
	The situation exhibited in the last example is a direct consequence of the presence of certain subfields of $\bbF_{q^n}$ as subspaces of a generalized Galois flag $\cF$. In other words, its underlying Galois subflag affects, in some sense, the value $d_f(\orbbeta(\cF))$.  The next result establish some conditions on the minimum distance of generalized $\beta$-Galois flag codes that allow us to discard some values of the distance.
	
	\begin{theorem}\label{theo: potential values dist}
		Let $\cF$ be a generalized Galois flag  of type $(s_1, \dots, s_k)$ on $\bbF_{q^n}$ with underlying Galois subflag $(\bbF_{q^{t_1}}, \dots, \bbF_{q^{t_r}})$ and take $\beta\in \bbF_{q^n}^\ast$.  Consider $i \in \{1, \ldots,r\}$. Then the following statements hold:
		\begin{enumerate}
			\item If $\beta\in\bbF_{q^{t_i}}^\ast$, then $d_S(\cF_l, \cF_l\beta)= 0,$ for all $s_l\in\{t_i, \dots, t_r\}$.
			\item If $\beta\notin\bbF_{q^{t_i}}^\ast,$ then $d_S(\cF_l, \cF_l\beta)= 2s_l$ for all $s_1\leq s_l \leq t_i$.
		\end{enumerate}
	\end{theorem}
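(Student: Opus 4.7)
The proof plan is as follows. Both parts hinge on the key observation that every subspace $\cF_l$ of $\cF$ with dimension $s_l$ satisfying $s_1 \leq s_l \leq t_i$ is contained in $\bbF_{q^{t_i}}$, and every $\cF_l$ whose dimension lies in $\{t_i,\dots,t_r\}$ is in fact one of the Galois subfields $\bbF_{q^{t_j}}$ with $t_i \leq t_j \leq t_r$. Both facts follow from the nesting of the flag together with the hypothesis that $(\bbF_{q^{t_1}},\dots,\bbF_{q^{t_r}})$ is a subflag of $\cF$.

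For part $(1)$, assume $\beta \in \bbF_{q^{t_i}}^\ast$ and let $s_l \in \{t_i,\dots,t_r\}$. Then $\cF_l = \bbF_{q^{t_j}}$ with $t_i \mid t_j$, whence $\bbF_{q^{t_i}}^\ast \subseteq \bbF_{q^{t_j}}^\ast$ and therefore $\beta \in \bbF_{q^{t_j}}^\ast$. Multiplication by $\beta$ then stabilizes $\bbF_{q^{t_j}}$, so $\cF_l\beta = \cF_l$ and $d_S(\cF_l,\cF_l\beta)=0$.

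For part $(2)$, the main step is to show that if $\beta \notin \bbF_{q^{t_i}}^\ast$, then $\bbF_{q^{t_i}}\cap \bbF_{q^{t_i}}\beta = \{0\}$. For this I would invoke the spread structure: since $t_i\mid n$, the orbit $\orb(\bbF_{q^{t_i}})$ is precisely the $t_i$-spread of $\bbF_{q^n}$ recalled in~(\ref{def: spread Anna-Lena}), so any two distinct members of this orbit meet trivially, and $\bbF_{q^{t_i}}\beta \neq \bbF_{q^{t_i}}$ exactly when $\beta \notin \bbF_{q^{t_i}}^\ast$ (equivalently, $\bbF_{q^{t_i}}^\ast = \stab(\bbF_{q^{t_i}})$ by Proposition~\ref{prop: stab+ es best friend}). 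Now given $s_l \leq t_i$, the nesting of the flag gives $\cF_l \subseteq \bbF_{q^{t_i}}$, hence $\cF_l\beta \subseteq \bbF_{q^{t_i}}\beta$, and therefore
\[
\cF_l\cap\cF_l\beta \;\subseteq\; \bbF_{q^{t_i}}\cap\bbF_{q^{t_i}}\beta \;=\; \{0\}.
\]
By formula~(\ref{def: subspace distance in the Grassmannian}), this yields $d_S(\cF_l,\cF_l\beta)=2s_l$.

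There is no substantial obstacle; the only subtlety is justifying the containment $\cF_l \subseteq \bbF_{q^{t_i}}$ when $s_l \leq t_i$, which uses that $\bbF_{q^{t_i}}$ itself is a member of the flag $\cF$ (as $(\bbF_{q^{t_1}},\dots,\bbF_{q^{t_r}})$ is the underlying Galois subflag), so every subspace of $\cF$ of dimension at most $t_i$ sits inside $\bbF_{q^{t_i}}$. Everything else reduces to the spread property of $\bbF_{q^{t_i}}$ and the divisibility chain $t_i\mid t_{i+1}\mid \cdots\mid t_r\mid n$.
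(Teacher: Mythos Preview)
Your proposal is correct and follows essentially the same approach as the paper: for part~(1) you use that $\beta\in\bbF_{q^{t_i}}^\ast\subseteq\bbF_{q^{t_j}}^\ast$ stabilizes each $\bbF_{q^{t_j}}$, and for part~(2) you invoke the $t_i$-spread property of $\orb(\bbF_{q^{t_i}})$ to get $\bbF_{q^{t_i}}\cap\bbF_{q^{t_i}}\beta=\{0\}$ and then restrict via the nesting $\cF_l\subseteq\bbF_{q^{t_i}}$. The only difference is that you spell out the containment $\cF_l\subseteq\bbF_{q^{t_i}}$ more explicitly than the paper does, which is a welcome clarification rather than a change of method.
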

	\begin{proof} 
		\begin{enumerate} 
			\item Assume that $\beta\in\bbF_{q^{t_i}}^\ast$. Hence, $\beta\in\bbF_{q^{t_j}}^\ast,$ for every $i\leq j\leq r$. In other words, we have $\bbF_{q^{t_j}}\beta=\bbF_{q^{t_j}}$, i.e., $d_S(\bbF_{q^{t_j}}, \bbF_{q^{t_j}}\beta)=0$ for all $i\leq j\leq r$.
			\item Take now $\beta\notin\bbF_{q^{t_i}}^\ast=\stab(\bbF_{q^{t_i}})$. In this case, $\bbF_{q^{t_i}}$ and $\bbF_{q^{t_i}}\beta$ are different subspaces in the $t_i$-spread $\orb(\bbF_{q^{t_i}})$  and we have that $\dim(\bbF_{q^{t_i}}\cap \bbF_{q^{t_i}}\beta)=0$. Thus, for every dimension $s_l\leq t_i$ in the type vector, we have $\dim(\cF_l\cap\cF_l\beta)=0$ and then $d_S(\cF_l, \cF_l\beta)=2s_l$.
		\end{enumerate}
	\end{proof}
	
	According to this result, it is clear that some combinations of subspace distances are automatically discarded when we compute the minimum distance of a generalized $\beta$-Galois flag code. Even thought we do not have a pattern to compute the distance values as it occurs for Galois flag codes (see (\ref{eq:distances Galois codes}), by means of Theorem \ref{theo: potential values dist}, we can state some required conditions for potential distance values between flags on a generalized $\beta$-Galois flag code.
	
	\begin{definition}\label{def: potential distances}
		Take a flag $\cF$ of type $(s_1, \dots, s_k)$ on $\bbF_{q^n}$ generalizing the Galois flag $(\bbF_{q^{t_1}}, \dots, \bbF_{q^{t_r}})$ and an element $\beta\in\bbF_{q^n}^\ast$. We say that an even integer $d$ is a \emph{potential value for $d_f(\orbbeta(\cF))$} if it can be obtained as a sum of subspace distances of dimensions $s_1, \dots, s_k$ satisfying:
		\begin{enumerate}
			\item For dimensions $t_i$, only distances $0$ or $2t_i$ are considered.
			\item If we sum $2t_i$ for dimension $t_i$, then all the distances for lower dimensions in the type vector are maximum as well.
			\item If for some dimension $t_i$ we have distance $0$, then the same happens for dimensions $t_j$, with $i\leq j\leq r$.
		\end{enumerate}
	\end{definition}
	
	\begin{remark}
		Notice that, according to Definition \ref{def: potential distances}, and as suggested in Example \ref{ex: 2 examples distances}, having the field $\bbF_q$ as the first subspace of a flag does not affect to the set of potential distance values since every $\beta$-cyclic code of dimension $1$ of $\bbF_{q^n}$ (generated or not by $\bbF_q$) has distance either $0$ or $2$.  On the other hand, as also mentioned in Example \ref{ex: 2 examples distances}, some distances cannot be attained when we have other fields among the subspaces of the generating flag. For instance, the single value $d=6$ is discarded for $n=4$, type $(1,2,3)$ and underlying Galois subflag of type $(1,2)$.
	\end{remark}
	The next example shows that, in general, many values of the flag distance are not compatible with the underlying structure of nested fields.
	
	\begin{example}
		Fix $n=16$ and the type vector $(2,4,5,6,8)$. In general, every even integer $0\leq d \leq 50$ is a possible value for the flag distance for this choice of the parameters. Nevertheless, if $\cF$ is a generalized Galois flag of type $(2,4,5,6,8)$ with underlying Galois subflag of type $(2,4,8)$, then for every $\beta\in\bbF_{q^n}^\ast$, the set of potential values for $\orbbeta(\cF)$ is
		$$
		\{0,2,4,6,8,10,22,50\}.
		$$
		In other words, no intermediate distances $12\leq d\leq 20$ or $24\leq d\leq 48$ can be obtained when the starting flag contains fields as its subspaces of dimensions $2, 4$ and $8$ (for more details on the computation of the set of distance values, we refer the reader to \cite{cotas}, where a deep study on the flag distance parameter and its behaviour is presented).
	\end{example}
	
	\noindent At this point, we can assert that the distance of a generalized Galois flag code is strongly influenced by its underlying Galois code. Hence, it is quite natural to wonder if generalized $\beta$-Galois flag codes behave as well as Galois flag codes in the following sense: 
	\begin{quote}
		$(\ast)$ \emph{Given a generalized Galois flag $\cF$ and a potential value $d$ for the distance defined in \ref{def: potential distances}, can we always find a suitable subgroup $\langle \beta \rangle \subseteq \bbF_{q^n}^*$ such that $d_f(\orbbeta(\cF))=d$? }
	\end{quote}
	In the following section we prove that the answer to the previous question is negative by exhibiting a specific family of generalized Galois flags.
	
	\section{A construction of generalized Galois  flag codes}\label{sec: A construction}
	
	This section is devoted to build generalized Galois flags written in a regular form that allows us to provide constructions of $\beta$-cyclic orbit flag codes with a prescribed best friend. In Subsections \ref{subsec: basic}, \ref{subsec:weaved} and \ref{subsec: decoding}, we focus on the particular case of $\beta$ primitive due to the fact that the obtained cyclic orbit codes present important properties that deserve to be underlined. Finally, Subsection \ref{subsec:weaved beta} is devoted to deal with the $\beta$-cyclic case and, in particular, to give an answer to the question $(\ast)$ formulated in the previous section.
	
	Recall that, according to the definition of best friend, we can express all the subspaces of a given flag as vector spaces over its best friend (see (\ref{eq: sum direct F_i})). In our case, we will consider a specific family of flags, whose subspaces are written in the regular form used in Proposition \ref{prop: suma directa}. Let us describe the form of such flags and obtain the parameters of cyclic orbit flag codes generated by them.
	
	Fix $\bbF_{q^m}$ a subfield of $\bbF_{q^n}$ and consider a primitive element $\alpha$ of $\bbF_{q^n}.$ For each positive integer $l$ such that $1\leq l < \frac{q^n-1}{q^m-1},$ let $L$ be the degree of the minimal polynomial of $\alpha^l$ over $\bbF_{q^m}$. Observe that $L$ is also the degree of the field extension $\bbF_{q^{mL}}/\bbF_{q^m}$, that is, $L=[\bbF_{q^{mL}}:\bbF_{q^m}].$ Hence, $L$ divides $[\bbF_{q^n}:\bbF_{q^m}]=n/m=s$ and, we have that $L\leq s$. In addition, the set $\{1, \alpha^l, \alpha^{2l}, \ldots, \alpha^{(L-1)l}\}$ is a basis of the field extension $\bbF_{q^{mL}}/ \bbF_{q^m}.$ Thus, we can write  
	
	\begin{equation}\label{eq: last subspace is a field}
		\bigoplus_{j=0}^{L-1}\bbF_{q^m}\alpha^{jl} = \bbF_{q^m}[\alpha^l] \cong \bbF_{q^{mL}}.
	\end{equation}
	Now, for every  $i=1, \ldots,  L $, the vector space 
	\begin{equation}\label{eq: especial sum direct F_i tipo completo}
		\cU_i = \bigoplus_{j=0}^{i-1} \bbF_{q^m}\alpha^{jl}
	\end{equation}
	has dimension $mi$ (over $\bbF_{q}$) and, as stated in Section \ref{sec:Preliminaries}, it is a field if, and only if, either $i=1$ or $i=L$. Hence, the sequence $(\cU_1, \dots, \cU_L)$ forms a generalized Galois flag of type $(m, 2m, \ldots, mL)$ with underlying Galois subflag $(\bbF_{q^m}, \bbF_{q^{mL}})$.
	
	\begin{remark}\label{rem: cota r}
		Observe that we cannot define a direct sum of this shape with more than $L$ terms, since every power of $\alpha^l$ is always an element in $\cU_L=\bbF_{q^m}[\alpha^l] \cong \bbF_{q^{mL}}$. Hence, this regular form allows us to construct flag codes on $\bbF_{q^n}$  of length $r\leq L$. Moreover, in case $L=s$, as $\cU_L=\bbF_{q^n}$, we just get $r\leq L-1=s-1$.
	\end{remark}

	The following two subsections (\ref{subsec: basic} and \ref{subsec:weaved}) are devoted to describe a construction of generalized Galois flag codes having $\bbF_{q^m}$ as their best friend. We perform it in two steps. First, we use flags in the regular form just described above in order to obtain a ``basic'' construction of generalized Galois flag codes where the underlying Galois flag code has, at most, length $2$. Then, we propose a procedure to overcome this restriction and present another construction of generalized Galois flag code having a prescribed underlying Galois flag code by suitably ``weaving'' several basic generalized Galois flag codes.  
	
	\subsection{Basic constructions}\label{subsec: basic}
	By means of Proposition \ref{prop: BF en forma regular}, we can easily determine the best friend of the subspaces $\cU_i$ defined in (\ref{eq: especial sum direct F_i tipo completo}): it is the subfield $\bbF_{q^m}$, for $1\leq i\leq L-1$ whereas the subspace $\cU_L=\bbF_{q^{mL}}$ is its own best friend. This fact implies that we will find one or two fields among the subspaces $\cU_i$, according to Remark \ref{rem: cota r}.
	Since the case of length $r=1$ corresponds to constant dimension codes (already studied in \cite{GLMoTro2015}), from now on, we will assume $r\geq 2$. Now, we know that for every type vector given by multiples of $m$, say $(ms_1, \dots, ms_r)$, where $1\leq s_1 < \dots < s_r \leq L$, we select the subspaces defined in (\ref{eq: especial sum direct F_i tipo completo}) corresponding respectively to these dimensions, that is,
	\begin{equation}\label{eq: subespacios del flag teorema clasificación}
		\cF_i = \cU_{s_i} = \bigoplus_{j=0}^{s_i-1} \bbF_{q^m}\alpha^{lj}, \ 1\leq i\leq r. 
	\end{equation}
	With this notation, the next result holds. 
	
	\begin{theorem}\label{theo: teorema clasificación}
		Let $\alpha$ be a primitive element of $\bbF_{q^n}$, $l$ a positive integer with $1\leq l < \frac{q^n-1}{q^m-1}$ and $L$ the degree of the minimal polynomial of $\alpha^l$ over $\bbF_{q^m}$. Consider the flag $\cF=(\cF_{1}, \dots, \cF_{r})$  of type $(ms_1, \dots, ms_r)$ on $\bbF_{q^n}$ with subspaces defined in (\ref{eq: subespacios del flag teorema clasificación}). Hence, the code $\orb(\cF)$ has best friend $\bbF_{q^m}$ and, in particular, its cardinality  is $(q^n-1)/(q^m-1).$  Moreover, 
		\begin{enumerate}
			\item If $s_r < L$, the code $\orb(\cF)$ is consistent with distance $d_f(\orb(\cF))=2mr$.  \label{theo: teorema clasificación item1}
			\item If $s_r = L$, we have that $d_f(\orb(\cF))=2m(r-1)$ and we can write 
			$$
			\orb(\cF)= \dot\bigcup_{i=0}^{c-1} \mathrm{Orb}_{\alpha^c}(\cF\alpha^i), 
			$$
			with $c=\frac{q^n-1}{q^{mL}-1}$. \label{theo: teorema clasificación item2}
		\end{enumerate}
	\end{theorem}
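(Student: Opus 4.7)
My plan is to tackle the best friend and cardinality first, then the distance in both cases, and finally the orbit decomposition.

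First, I would apply Proposition \ref{prop: BF en forma regular} to the regular form \eqref{eq: subespacios del flag teorema clasificación} of each $\cF_i$: since $s_i\leq L$, the best friend of $\cF_i$ equals $\bbF_{q^m}$ whenever $s_i<L$, while $s_i=L$ forces $\cF_i=\bbF_{q^{mL}}$, whose best friend is $\bbF_{q^{mL}}$ itself. Intersecting these subfields via Proposition \ref{lem: BF divides dimensions} yields $\bbF_{q^m}$ as the best friend of $\cF$ in both scenarios, and the cardinality formula \eqref{eq: cardinality beta-ciclicos} in Proposition \ref{prop: cardinality and best friend} immediately gives $|\orb(\cF)|=(q^n-1)/(q^m-1)$.

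For the distance, the key is to exhibit the explicit element $\alpha^l$. Since $0<l<(q^n-1)/(q^m-1)$, we have $\alpha^l\notin\bbF_{q^m}^\ast=\stab(\cF_i)$ whenever $s_i<L$, so $\cF_i\neq\cF_i\alpha^l$, and hence $d_S(\cF_i,\cF_i\alpha^l)\geq 2m$ by Proposition \ref{prop: stab+ es best friend}. The evident inclusion $\bigoplus_{j=1}^{s_i-1}\bbF_{q^m}\alpha^{lj}\subseteq\cF_i\cap\cF_i\alpha^l$ yields the reverse inequality $d_S(\cF_i,\cF_i\alpha^l)\leq 2m$, so equality holds. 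When $s_r=L$, on the other hand, $\alpha^l\in\bbF_{q^m}[\alpha^l]=\bbF_{q^{mL}}=\cF_r$, giving $d_S(\cF_r,\cF_r\alpha^l)=0$. Summing the terms, $d_f(\cF,\cF\alpha^l)$ equals $2mr$ in case \eqref{theo: teorema clasificación item1} and $2m(r-1)$ in case \eqref{theo: teorema clasificación item2}. Matching lower bounds come from Theorem \ref{th: dist min type vector}(1) after counting the subspaces of $\cF$ with best friend $\bbF_{q^m}$: all $r$ of them in case \eqref{theo: teorema clasificación item1} and exactly $r-1$ in case \eqref{theo: teorema clasificación item2}. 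Consistency in the first case then follows at once: every projection has size $(q^n-1)/(q^m-1)=|\cC|$, so $\cC$ is disjoint, and by Proposition \ref{prop: suma directa} each $d_S(\orb(\cF_i))=2m$, whence $d_f(\cC)=2mr=\sum_{i=1}^r d_S(\cC_i)$.

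The orbit decomposition in case \eqref{theo: teorema clasificación item2} is the most delicate point and the main obstacle. The core observations are that $|\alpha^c|=(q^n-1)/c=q^{mL}-1$, so $\langle\alpha^c\rangle=\bbF_{q^{mL}}^\ast$, and that $\stab(\cF)=\bbF_{q^m}^\ast$ is contained in this subgroup. Hence every $\mathrm{Orb}_{\alpha^c}(\cF\alpha^i)$ has size $(q^{mL}-1)/(q^m-1)$, and summing over $i=0,\dots,c-1$ recovers $|\orb(\cF)|=(q^n-1)/(q^m-1)$. It only remains to check that the $c$ representatives lie in pairwise distinct $\alpha^c$-orbits: $\cF\alpha^i$ and $\cF\alpha^j$ with $0\leq i,j<c$ are in the same orbit iff $\alpha^{i-j}\in\langle\alpha^c\rangle\cdot\stab(\cF)=\bbF_{q^{mL}}^\ast=\langle\alpha^c\rangle$, i.e.\ iff $c\mid(i-j)$; and $|i-j|<c$ then forces $i=j$. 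Executing this bookkeeping---identifying $\langle\alpha^c\rangle$ with $\bbF_{q^{mL}}^\ast$ and tracking how $\stab(\cF)$ sits inside it---is the step that requires the most care.
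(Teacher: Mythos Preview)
Your proposal is correct and follows essentially the same route as the paper: Proposition~\ref{prop: BF en forma regular} pins down the best friend of each $\cF_i$, Theorem~\ref{th: dist min type vector} supplies the lower bound on $d_f$, and the explicit element $\alpha^l$ realizes it; consistency in case~(\ref{theo: teorema clasificación item1}) is checked exactly as you describe. The only noteworthy variation is in the orbit decomposition: you argue disjointness of the $\mathrm{Orb}_{\alpha^c}(\cF\alpha^i)$ via the coset condition $\alpha^{i-j}\in\langle\alpha^c\rangle\cdot\stab(\cF)=\bbF_{q^{mL}}^\ast$, whereas the paper distinguishes the suborbits by their last subspace $\bbF_{q^{mL}}\alpha^i$, invoking that $\orb(\bbF_{q^{mL}})$ is the $mL$-spread of $\bbF_{q^n}$; both arguments are short and equivalent.
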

	\begin{proof}
		Recall that, by means of Proposition \ref{prop: BF en forma regular}, the best friend of every subspace in $\cF$ is either  $\bbF_{q^m}$ or $\bbF_{q^{mL}}$. Since $r\geq 2$, there is at least one subspace with $\bbF_{q^m}$ as best friend and, automatically, this subfield is the best friend of the flag $\cF$. As a consequence, by Proposition \ref{prop: cardinality and best friend}, the cardinality of $\orb(\cF)$ is $(q^n-1)/(q^m-1)$.
		
		Now, suppose that $s_r<L$. In this case, every dimension in the type vector is $ms_i\leq ms_r < mL$ and every subspace in the flag $\cF$ has the subfield $\bbF_{q^m}$ as its best friend. Hence, the code is disjoint and, by means of Theorem \ref{th: dist min type vector}, we have $d_f(\orb(\cF))\geq 2mr$.	Moreover, notice that, for every $1\leq i\leq r$, the subspace $\cF_i\cap\cF_i\alpha^l=\bigoplus_{j=1}^{s_i-1} \bbF_{q^m}\alpha^{lj}$ has dimension $m(s_i-1)$ over $\bbF_q$ and then $d_S(\orb(\cF_i))=d_S(\cF_i, \cF_i\alpha^l) = 2m.$
		Hence, it holds
		$$
		d_f(\cC)\leq d_f(\cF, \cF\alpha^l) = \sum_{i=1}^r d_S(\cF_i, \cF_i\alpha^l) = 2mr
		$$
		and we conclude $d_f(\cC)=2mr.$ Moreover, the value coincides with the sum of the ones of its $r$ projected codes (each one of them with distance $2m$). Hence, our code is  consistent (see Definition \ref{def: consistentes}).
		
		Assume now that $s_r=L$. Then $\orb(\cF)$ is not disjoint since the subspace $\cF_r=\bbF_{q^{mL}}$ is its own best friend. However, since $\cF_1, \dots, \cF_{r-1}$ have the same best friend $\bbF_{q^m}$,  Theorem \ref{th: dist min type vector} ensures that $d_f(\orb(\cF))\geq 2m(r-1)$. On the other hand, observe that $\alpha^l\in \bbF_q^{mL}\setminus \bbF_{q^m}^\ast$ stabilizes $\cF_r$ but $d_S(\cF_i, \cF_i\alpha^l)=2m$ for every $1\leq i <r$. Hence, we have $d_f(\orb(\cF))=d_f(\cF, \cF\alpha^l)=2m(r-1)$.
		
		Concerning the structure of this cyclic orbit flag code, it is clear that 
		$$
		\bigcup_{i=0}^{c-1} \mathrm{Orb}_{\alpha^c}(\cF\alpha^i) \subseteq \orb(\cF).
		$$
		Let us see that both sets have the same cardinality. To do so, observe that $\alpha^c$ is a primitive element of $\bbF_{q^{mL}}$. Hence, for every $1\leq i\leq r$, it holds
		$$
		\mathrm{Stab}_{\alpha^c}(\cF \alpha^i) = \bbF_{q^{mL}}^\ast \cap \stab(\cF \alpha^i)= \bbF_{q^{mL}}^\ast \cap \bbF_{q^m}^\ast= \bbF_{q^m}^\ast.
		$$
		As a consequence, we have $|\mathrm{Orb}_{\alpha^c}(\cF\alpha^i)|=\frac{q^{mL}-1}{q^m-1}$. Now, we prove that all these orbits are different. To do so, for every choice $0\leq i \leq c-1$, observe that flags in $\mathrm{Orb}_{\alpha^c}(\cF\alpha^i)$ have the same last subspace $\cF_r\alpha^i=\bbF_{q^{mL}}\alpha^i$ since $\alpha^c$ stabilizes $\cF_r=\bbF_{q^{mL}}=\{0\}\cup\langle \alpha^c\rangle$. Moreover, the last projected code 
		$$
		\orb(\cF_r)=\{ \bbF_{q^{mL}}\alpha^i \ | \ 0 \leq i\leq c-1 \}
		$$
		of $\orb(\cF)$ is precisely the $mL$-spread $\orb(\bbF_{q^{mL}})$ of $\bbF_{q^n}$. Hence, for every choice $0\leq i < j \leq c-1$, subspaces $\bbF_{q^{mL}}\alpha^i$ and $\bbF_{q^{mL}}\alpha^j$ are different. Therefore, all the orbits $\mathrm{Orb}_{\alpha^c}(\cF\alpha^i)$ are different and the cardinality of their union is exactly
		$$
		c \cdot \frac{q^{mL}-1}{q^m-1} = \frac{q^n-1}{q^{mL}-1} \cdot \frac{q^{mL}-1}{q^m-1} = \frac{q^n-1}{q^m-1} = |\cC|.
		$$
	\end{proof}
	
	From Theorem \ref{theo: teorema clasificación}, and making a suitable choice of the type vector, we derive some constructions of our interest.
	
\begin{corollary}\label{cor: clasifiación Galois y ODFC}
		Consider the flag $\cF$ of type $(ms_1, \dots, ms_r)$ with $r\geq 2$ defined in (\ref{eq: subespacios del flag teorema clasificación}).  If $L < s$, then:
		\begin{enumerate}
			\item the code $\orb(\cF)$ is a Galois flag code if, and only if, the type vector is $(m, mL)$.
			\item $\orb(\cF)$ is a generalized Galois flag code if, and only if, $$\emptyset\neq\{1, L\}\cap\{s_1, \dots, s_r\}\neq\{1, L\}.$$
		\end{enumerate}
		\end{corollary}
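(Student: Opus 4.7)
My plan is to reduce both parts to a single, clean classification of which subspaces $\cF_i$ of the generating flag are subfields of $\bbF_{q^n}$, and then to read off the two statements combinatorially.

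\textbf{Step 1 (field classification).} I would apply Proposition \ref{prop: BF en forma regular} to each $\cF_i = \bigoplus_{j=0}^{s_i-1}\bbF_{q^m}\alpha^{lj}$. Since the minimal polynomial of $\alpha^l$ over $\bbF_{q^m}$ has the fixed degree $L$, the proposition gives, for $s_i \geq 2$, that $\cF_i = \bbF_{q^{m s_i}}$ iff $s_i = L$; the boundary case $s_i = 1$ is immediate because $\cF_i = \bbF_{q^m}$ is already a field. Together, these observations show that $\cF_i$ is a subfield of $\bbF_{q^n}$ if and only if $s_i \in \{1, L\}$. The hypothesis $L < s$ enters here precisely to guarantee that $\bbF_{q^{mL}}$ is a \emph{proper} subfield of $\bbF_{q^n}$, so that $s_r = L$ is admissible as the top dimension of a flag.

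\textbf{Step 2 (part (1)).} By definition, $\orb(\cF)$ is a Galois flag code iff every subspace of $\cF$ is a subfield of $\bbF_{q^n}$, which by Step 1 means $\{s_1,\dots,s_r\} \subseteq \{1, L\}$. Because the $s_i$ form a strictly increasing sequence of length $r \geq 2$, this forces $r = 2$ and $(s_1, s_2) = (1, L)$, i.e.\ type vector $(m, mL)$; conversely, for this type the flag is $(\bbF_{q^m}, \bbF_{q^{mL}})$, a Galois flag.

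\textbf{Step 3 (part (2)).} By Definition \ref{def: genelized Galois} together with Remark 3.11, $\cF$ is a generalized Galois flag iff it has at least one subspace that is a field \emph{and} at least one that is not. Via Step 1, ``has a field'' is exactly $\{1, L\} \cap \{s_1,\dots,s_r\} \neq \emptyset$, which is the first inequality of the statement. The second inequality rules out the Galois case identified in Step 2, encoding the complementary clause that the collection of field-dimensions present is not the maximal possible one $\{1, L\}$. Combining, I recover the equivalence in the statement.

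The substantive content is really Step 1: once the field subspaces are classified via Proposition \ref{prop: BF en forma regular}, the remainder is combinatorial bookkeeping. The only real care point is that Proposition \ref{prop: BF en forma regular} is informative only for $s_i \geq 2$, so the boundary $s_i = 1$ must be handled by direct inspection; one must also be attentive when aligning the ``has a non-field'' clause of the definition with the set-theoretic formulation in the statement, since the distinction between $\{s_i\}\not\subseteq\{1,L\}$ and $\{1,L\}\cap\{s_i\}\neq\{1,L\}$ must be unwound carefully in terms of the admissible values of $r$ and the strictly increasing $s_i$.
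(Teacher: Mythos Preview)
Your Steps~1 and~2 are exactly the paper's argument: identify via Proposition~\ref{prop: BF en forma regular} (plus the trivial case $s_i=1$) that $\cF_i$ is a subfield of $\bbF_{q^n}$ precisely when $s_i\in\{1,L\}$, and then read off part~(1) from $r\geq 2$ and the strict increase of the $s_i$. The paper's proof says no more than this.

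Step~3, however, does not go through, and the difficulty you flag at the end is not a matter of ``careful unwinding'' but a genuine obstruction. The condition ``$\cF$ has a non-field subspace'' is, by your Step~1, exactly $\{s_1,\dots,s_r\}\not\subseteq\{1,L\}$; under the hypothesis $r\geq 2$ this is equivalent to $\{s_1,\dots,s_r\}\neq\{1,L\}$. It is \emph{not} equivalent to $\{1,L\}\cap\{s_1,\dots,s_r\}\neq\{1,L\}$, which is what the displayed chain of inequalities literally asserts. A concrete witness: take $L\geq 3$ and $(s_1,s_2,s_3)=(1,2,L)$. Then $\cF=(\bbF_{q^m},\cU_2,\bbF_{q^{mL}})$ contains two fields and the non-field $\cU_2$, hence is a generalized Galois flag by Definition~\ref{def: genelized Galois}; yet $\{1,L\}\cap\{1,2,L\}=\{1,L\}$, so the stated condition fails. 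Thus your argument in Step~3 (``the second inequality rules out the Galois case'') is incorrect: that inequality rules out strictly more than the Galois case, and the equivalence in part~(2), read as a chained inequality on the intersection, is false.

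What survives is this: if the right-hand $\neq\{1,L\}$ is intended to compare $\{s_1,\dots,s_r\}$ (rather than the intersection) with $\{1,L\}$, then part~(2) becomes correct, and your Step~1 together with the observation $\{s_i\}\neq\{1,L\}\Leftrightarrow\{s_i\}\not\subseteq\{1,L\}$ (valid since $r\geq 2$) finishes the proof immediately. You should make this explicit rather than leaving it as deferred bookkeeping, since under the natural reading of the display the claimed equivalence simply fails.
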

		\begin{proof}
		In the first place, if $L<s$, it is clear that $\cF$ is a Galois flag if it just have subfields of $\bbF_{q^n}$ as its subspaces. According to expression  (\ref{eq: subespacios del flag teorema clasificación}), just subspaces of dimensions $m$ and $mL$ are fields. Moreover, since $r\geq 2,$ the result follows. 
	\end{proof}
		
\begin{corollary}\label{cor: clasifiación Galois y ODFC 2}
Take the flag $\cF$ of type $(ms_1, \dots, ms_r)$  defined in (\ref{eq: subespacios del flag teorema clasificación}) and assume $r\geq 2$ and $L=s$, then:
		\begin{enumerate}
			\item $\orb(\cF)$ is a generalized Galois flag code if, and only if, $s_1=1$.
            \item In particular, if $s_1=1$ and $s_2=L-1$, then $\orb(\cF)$ is an optimum distance generalized Galois flag code of type $(m, n-m)$ with the largest possible size.
		\end{enumerate}
	\end{corollary}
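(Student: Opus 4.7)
The plan is to treat the two items in turn, leaning on Proposition \ref{prop: BF en forma regular} for the structural information about the subspaces $\cU_i$, and on Theorem \ref{theo: teorema clasificación} for the parameters of the resulting orbit code. The only subtle ingredient is that, since $L=s$, the subspace $\cU_L$ coincides with the whole space $\bbF_{q^n}$ and therefore cannot be a subspace of a flag; by Remark \ref{rem: cota r} we have $s_r\leq L-1$ throughout.

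For item (1), I would argue that among the candidate subspaces $\cU_1,\dots,\cU_{L-1}$ the only one that is a subfield is $\cU_1=\bbF_{q^m}$. This is immediate from Proposition \ref{prop: BF en forma regular}: for $2\leq i\leq L-1$, the minimal polynomial of $\alpha^l$ over $\bbF_{q^m}$ has degree $L>i$, so $\cU_i$ has $\bbF_{q^m}$ as its best friend and, in particular, is not a field. Consequently, the flag $\cF$ contains a subfield if and only if $s_1=1$. Since $r\geq 2$ forces some index $s_i\geq 2$, the flag $\cF$ automatically has a non-field subspace. Combining these two observations with Definition \ref{def: genelized Galois} gives the equivalence.

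For item (2), the choice $r=2$, $s_1=1$, $s_2=L-1$ yields type $(m,m(L-1))=(m,n-m)$, and $s_r=L-1<L$. Theorem \ref{theo: teorema clasificación}(\ref{theo: teorema clasificación item1}) then applies directly: $\orb(\cF)$ is consistent with best friend $\bbF_{q^m}$, cardinality $(q^n-1)/(q^m-1)$ and flag distance $d_f(\orb(\cF))=2mr=4m$. Item (1) already guarantees that $\orb(\cF)$ is a generalized Galois flag code. It remains to verify that $4m$ is the maximum distance allowed by (\ref{eq: dist max flags}) for type $(m,n-m)$: the constraint $s_1<s_2$ forces $L\geq 3$, hence $m<n/2$, so $m\leq\lfloor n/2\rfloor$ while $n-m>\lfloor n/2\rfloor$, and the bound evaluates to $2\bigl(m+(n-(n-m))\bigr)=4m$. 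Finally, by Proposition \ref{prop:type vector optimum distance} the size $(q^n-1)/(q^m-1)$ is the largest achievable by an optimum distance flag code with this best friend and type.

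The only possible obstacle is bookkeeping, namely correctly tracking which $\cU_i$ are fields and making sure we never accidentally include $\cU_L=\bbF_{q^n}$ as a subspace of $\cF$; once this is handled, both parts follow directly from the previously established propositions and theorem.
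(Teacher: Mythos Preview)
Your proof is correct. For item (1) your argument is essentially identical to the paper's: since $L=s$ forces $\cU_L=\bbF_{q^n}$, the only field among the admissible subspaces $\cU_1,\dots,\cU_{L-1}$ is $\cU_1=\bbF_{q^m}$, and the equivalence follows.

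For item (2) there is a small difference in route. The paper observes that both $\cF_1=\bbF_{q^m}$ and $\cF_2=\cU_{L-1}$ have $\bbF_{q^m}$ as their best friend and then invokes Proposition \ref{prop: construction odfc} directly to conclude that $\orb(\cF)$ attains the maximum distance. You instead apply Theorem \ref{theo: teorema clasificación}(\ref{theo: teorema clasificación item1}) to compute $d_f(\orb(\cF))=4m$ and then verify by hand that $4m$ coincides with the upper bound (\ref{eq: dist max flags}) for type $(m,n-m)$, using $L\geq 3$ to place the two dimensions on opposite sides of $\lfloor n/2\rfloor$. Both arguments are short and valid; the paper's is slightly more economical since Proposition \ref{prop: construction odfc} packages the bound check, while yours makes the distance computation explicit and self-contained.
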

	\begin{proof}
		In this case, the only subfield of $\bbF_{q^n}$ writen in the regular form (\ref{eq: especial sum direct F_i tipo completo}) is $\bbF_{q^m}$ and the first statement follows straightforwardly. For the second one, it suffices to notice that the subspaces $\cF_1=\cU_1=\bbF_{q^m}$ and $\cF_2=\cU_{L-1}$ are of dimensions $m$ and $n-m$ and have $\bbF_{q^m}$ as their best friend. Hence, the result holds by means of Proposition \ref{prop: construction odfc}.
	\end{proof}
	
	At this point, we have all the ingredients to perfectly describe the structure of any cyclic orbit flag code $\orb(\cF)$ given in Theorem \ref{theo: teorema clasificación}, in case its distance is either the minimum or the maximum possible one. This result is closely related with to discussion in Subsection \ref{subsec:interdependence} about the interdependence of distance values and type vectors, for this particular family of codes.
	\begin{theorem}\label{prop:caracterizacion maxima y minima distancia}  
		Let $\alpha$ be a primitive element of $\bbF_{q^n}$, $l$ a positive integer such that $1\leq l < \frac{q^n-1}{q^m-1}$, and $L$ the degree of the minimal polynomial of $\alpha^l$ over $\bbF_{q^m}$. Consider $\cF=(\cF_1, \ldots, \cF_r)$ the flag of type $(ms_1, ms_2, \dots, ms_r)$ on $\bbF_{q^n}$ of length $r\geq 2$ with subspaces defined as in (\ref{eq: subespacios del flag teorema clasificación}). If $s=n/m$, then:
		\begin{enumerate}
			\item The code $\orb(\cF)$ has distance equal to $2m$ if, and only if, its type vector is $(ms_1, mL)$ for some $1\leq s_1 < L < s $. Moreover, if $s_1=1$, then $\orb(\cF)$ is the Galois flag code of type $(m, mL)$.\label{prop: caracterización item 1}
			\item The code $\orb(\cF)$ is an optimum distance flag code if, and only if, $L=s$ and its type vector is $(m, m(L-1)).$ In this case, $\orb(\cF)$ is a generalized Galois flag code that attains the largest possible size. \label{prop: caracterización item 2}
		\end{enumerate}
	\end{theorem}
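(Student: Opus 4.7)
The plan is to treat both parts as case analyses using the main classification in Theorem \ref{theo: teorema clasificación}, which distinguishes between $s_r < L$ (giving a consistent code of distance $2mr$) and $s_r = L$ (giving distance $2m(r-1)$). In both directions of both equivalences the key observation is that these two formulas, together with the requirement $r \geq 2$, are extremely restrictive.

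For part \eqref{prop: caracterización item 1}, I would start from the assumption $d_f(\orb(\cF)) = 2m$. If $s_r < L$, Theorem \ref{theo: teorema clasificación}\eqref{theo: teorema clasificación item1} gives $2mr = 2m$, forcing $r = 1$ and contradicting $r \geq 2$. Hence $s_r = L$ and Theorem \ref{theo: teorema clasificación}\eqref{theo: teorema clasificación item2} yields $2m(r-1) = 2m$, so $r = 2$ and the type vector is necessarily $(ms_1, mL)$ with $1 \leq s_1 < L$. The strict inequality $L < s$ follows from the fact that $\cF_r = \cU_L = \bbF_{q^{mL}}$ must be a proper subspace of $\bbF_{q^n}$, i.e., $mL < n$. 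The converse is immediate: any such type vector places us in case \eqref{theo: teorema clasificación item2} of Theorem \ref{theo: teorema clasificación} with $r=2$, producing distance $2m$. Finally, if $s_1 = 1$, then by construction $\cF_1 = \cU_1 = \bbF_{q^m}$ and $\cF_2 = \cU_L = \bbF_{q^{mL}}$, so $\cF$ is exactly the Galois flag of type $(m, mL)$.

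For part \eqref{prop: caracterización item 2}, the key input is Proposition \ref{prop:type vector optimum distance}: an optimum distance cyclic orbit flag code with best friend $\bbF_{q^m}$ and length $\geq 2$ must have type vector $(m, n-m)$. Thus $s_1 = 1$ and $s_2 = s-1$, so $s - 1 \leq L \leq s$. I would then split into two subcases. If $L = s - 1$, then $s_r = s_2 = L$, and Theorem \ref{theo: teorema clasificación}\eqref{theo: teorema clasificación item2} gives distance $2m(r-1) = 2m$, which is strictly smaller than the maximum $4m$ attainable for type $(m, n-m)$ by \eqref{eq: dist max flags}; so the code is not optimum distance. If instead $L = s$, then $s_2 = s - 1 = L - 1 < L$, placing us in case \eqref{theo: teorema clasificación item1} of Theorem \ref{theo: teorema clasificación}, with distance $2mr = 4m$, matching the maximum. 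So necessarily $L = s$ and the type is $(m, m(L-1))$. For the converse, Corollary \ref{cor: clasifiación Galois y ODFC 2}\emph{(2)} (or equivalently Proposition \ref{prop: construction odfc} applied to $\cF_1 = \bbF_{q^m}$ and $\cF_2 = \cU_{L-1}$, both with best friend $\bbF_{q^m}$) shows that $\orb(\cF)$ is an optimum distance generalized Galois flag code. The claim about size follows directly from Theorem \ref{theo: teorema clasificación}, which gives cardinality $(q^n-1)/(q^m-1)$.

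The only subtlety I anticipate is the small bookkeeping step in part \eqref{prop: caracterización item 2} where one must notice that among the two possibilities $L = s-1$ and $L = s$ for the dimension of the second subspace $\cF_2 = \cU_{s-1}$, only the second actually realises the type $(m, n-m)$ without collapsing its distance via case \eqref{theo: teorema clasificación item2} of Theorem \ref{theo: teorema clasificación}; everything else is routine bookkeeping combining the earlier results.
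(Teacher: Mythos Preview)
Your proof is correct, and in both parts you take a route that differs from the paper's argument in a clean and instructive way.

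For part \eqref{prop: caracterización item 1}, the paper argues via Theorem~\ref{th: dist min type vector}: distance $2m$ forces exactly one subspace of $\cF$ to have $\bbF_{q^m}$ as its best friend, and since by Proposition~\ref{prop: BF en forma regular} the only subspace in the regular form with a different best friend is $\cU_L=\bbF_{q^{mL}}$, the type must be $(ms_1,mL)$. You instead feed the hypothesis $d_f(\orb(\cF))=2m$ directly into the two cases of Theorem~\ref{theo: teorema clasificación}, which already packages the distance computation. Your approach is shorter and avoids re-deriving information that Theorem~\ref{theo: teorema clasificación} has already established.

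For part \eqref{prop: caracterización item 2}, both you and the paper invoke Proposition~\ref{prop:type vector optimum distance} to force type $(m,n-m)$, hence $s_2=s-1\leq L\leq s$. To eliminate $L=s-1$, the paper uses the structural fact that $L$ divides $s$: then $s-1\mid s$ forces $s=2$, leaving no room for a flag of length $r\geq 2$. You instead observe that $L=s-1$ would place the code in case~\eqref{theo: teorema clasificación item2} of Theorem~\ref{theo: teorema clasificación} with $r=2$, giving distance $2m<4m$, hence not optimum. Your argument is perfectly valid; it simply rules out by computation a case that the paper rules out as vacuous. The paper's divisibility argument is slightly sharper conceptually (it shows the case cannot occur at all), while yours has the advantage of staying uniformly within the framework of Theorem~\ref{theo: teorema clasificación}.
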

	
	\begin{proof}
		We divide the proof into two parts.
		\begin{enumerate}

			\item  By means of Theorem \ref{th: dist min type vector}, if the orbit $\orb(\cF)$ has distance $2m$, then there is exactly one subspace of $\cF$ with $\bbF_{q^m}$ as its best friend. Since subspaces defined in (\ref{eq: subespacios del flag teorema clasificación}) have the subfield $\bbF_{q^m}$ as their best friend except if they are fields, we conclude that the last subspace of the generating flag must be the field $\bbF_{q^{mL}}=\bbF_{q^m}[\alpha^l]$. Thus $L$ cannot be $s$ and it holds $L<s$. Its first subspace can be any other subspace $\cF_1$ of dimension $ms_1< mL$.
			
			To prove the converse, just note that the distance between $(\cF_1, \bbF_{q^{mL}})$ and $(\cF_1, \bbF_{q^{mL}})\alpha^l$ is $2m$, which is the minimum distance for cyclic orbit flag codes with $\bbF_{q^m}$ as their best friend.
			
			To finish, if $s_1=1$, the only possibility for $\cF$ is to be the Galois flag of type $(m, mL)$ and the result holds.
			
			\item On the other hand, if $\orb(\cF)$ is an optimum distance cyclic orbit flag code with $\bbF_{q^m}$ as its best friend, by means of Proposition \ref{prop:type vector optimum distance}, and assuming $r\geq 2$, its type vector must be $(m, n-m)$. Hence, we need $mL$ to be at least $n-m=m(s-1)$. In other words, $L$ must be greater or equal than $s-1$. However, $L$ has to divide $s$. If $L=s-1$, the only possibility is $s=2$ and the only type vector consisting of multiples of $m$ is $(m)$, which has length one. Thus, it must hold $L=s$. The converse is also true by application of Corollary \ref{cor: clasifiación Galois y ODFC}.
		\end{enumerate}
		
	\end{proof}
	\begin{remark}
		Concerning the first statement of the previous result, it is important to point out that if we consider a generating flag not necessarily written in the regular form described in (\ref{eq: subespacios del flag teorema clasificación}), it is possible to attain the distance $2m$ with no other field among the subspaces of the generating flag than the best friend of the flag. Recall that, for general flags, even if we fix the field $\bbF_{q^m}$ as the first subspace, there are three possibilities for the best friend of each one of its subspaces:  the field $\bbF_{q^m}$, the subspace itself (in case it is a field) or an intermediate extension field over $\bbF_{q^m}$. The next example contemplates this situation.
		
		\begin{example}
			Fix $n=16$ and consider a generalized Galois flag $\cF=(\cF_1, \cF_2)$ with type vector $(2,8)$ such that
			$$
			\begin{array}{cclll}
				\cF_1 & = & \bbF_{q^2}, & & \\ 
				\cF_2 & = & \bbF_{q^2} \oplus \bbF_{q^2}\beta \oplus \bbF_{q^2}\alpha \oplus  \bbF_{q^2}\beta\alpha & = & \bbF_{q^4}\oplus \bbF_{q^4}\alpha,
			\end{array}
			$$
			where $\alpha$ denotes a primitive element of $\bbF_{q^{16}}$ and $\beta=\alpha^{(q^{16}-1)/(q^4-1)}$ is a primitive element of the subfield $\bbF_{q^4}$. Observe that $\bbF_{q^4}$ is a friend of $\cF_2$. Even more, it is its best friend by means of Proposition \ref{prop: BF en forma regular}, since the degree of the minimal polynomial of $\alpha$  over $\bbF_{q^4}$ is $16/4=4$. According to this, $\bbF_{q^2}$ is the best friend of the cyclic orbit flag code $\orb(\cF)$.
			Moreover, we have that $d_f(\cF, \cF\beta)=d_S(\cF_1, \cF_1\beta)= 4$ is the minimum possible distance for cyclic orbit flag codes with $\bbF_{q^2}$ as its best friend. However, $\orb(\cF)$ is not the Galois flag code of type $(2,8)$, since $\cF$ is the only flag in the code with its first subspace containing the element $1\in\bbF_{q^{16}}$ but $\cF_2$ is not a field.
		\end{example}
	\end{remark}
	
	Observe now that, taking again the flag $\cF=(\cF_1, \cF_2)$ as in the previous example, we have the subfield $\bbF_{q^4}$ as an intermediate subspace. In other words, the sequence 
	$$
	\cF'=(\bbF_{q^2}, \bbF_{q^4}, \bbF_{q^4}\oplus\bbF_{q^4}\alpha)
	$$
	forms a generalized Galois flag longer than $\cF$. Notice that the field $\bbF_{q^4}$ can be written in regular form as a vector space over $\bbF_{q^2}$ as $\bbF_{q^2}\oplus\bbF_{q^2}\beta$. Moreover, the subspace $\bbF_{q^4}\oplus\bbF_{q^4}\alpha$ is, at the same time, written in regular form as a vector space over $\bbF_{q^4}$. Inspired by this idea, we will describe a general procedure that allows us to obtain generalized Galois flags codes, written in regular form over suitably chained subfields.

	\subsection{Weaving basic generalized Galois flag codes}\label{subsec:weaved}
	The previous basic construction (Theorem \ref{theo: teorema clasificación}) presents a limitation on the number of subfields that can appear as subspaces of the generating flag. In this subsection, we focus on a systematic construction of generalized Galois flag codes with a prescribed underlying Galois subflag. More precisely, if $m_1, m_2, \ldots, m_k$ are divisors of $n$ such that $m_i$ divides $m_{i+1}$, for every $1\leq i \leq k$, we work on the construction of generalized Galois flag codes with $(\bbF_{q^{m_1}}, \dots, \bbF_{q^{m_k}})$ as underlying Galois subflag. 
	
	As a matter of notation, through this section we will write $m_{k+1}=n$. Let $\alpha$ be a primitive element of $\bbF_{q^n}$ and put $c_i=\frac{q^n-1}{q^{m_i}-1}$, for $1\leq i\leq k+1$. It turns out that each power $\alpha_i=\alpha^{c_i}$ is a primitive element of the corresponding subfield $\bbF_{q^{m_i}}.$ For every $2 \leq i\leq k+1$, the degree of the minimal polynomial of $\alpha_i$ over $\bbF_{q^{m_{i-1}}}$ is $L_i=\frac{m_i}{m_{i-1}}$. With this notation, we consider $k$ flags $\cF^1, \dots, \cF^k$ on $\bbF_{q^n},$ whose subspaces are given by
	\begin{equation}\label{eq: flag construccion general}
		\cF_j^i= \bigoplus_{l=0}^{j-1} \bbF_{q^{m_i}}\alpha_{i+1}^l, 
	\end{equation}
	for $1\leq j\leq L_{i+1}-1$ and $1\leq i \leq k$.
	Observe that, for every $1\leq i\leq k$, we have that $\cF_1^i= \bbF_{q^{m_i}}$ and the dimension of $\cF_j^i$ (as an $\bbF_q$-vector space) is $jm_i$. Hence, the type vector of $\cF^i$ is given by all the multiples of $m_i$ smaller than $m_{i+1}$, that is, $(m_i, 2m_i, \dots, m_{i+1}-m_i)$. As a consequence of Theorem \ref{theo: teorema clasificación}, the next result holds.
	
	\begin{corollary}\label{cor: construccion general 1}
		Let $\cF^i$ be the flag defined in (\ref{eq: flag construccion general}) for every $1\leq i\leq k$. Then the generalized Galois flag code $\orb(\cF^i)$ is consistent with distance $2(m_{i+1}-m_i)$ and has the field $\bbF_{q^{m_i}}$ as its best friend.
	\end{corollary}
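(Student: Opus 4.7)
The plan is to reduce the statement directly to Theorem \ref{theo: teorema clasificación}(\ref{theo: teorema clasificación item1}) by identifying the flag $\cF^i$ with a flag of the form (\ref{eq: subespacios del flag teorema clasificación}) under an appropriate choice of the parameters $m$, $l$ and $L$ appearing there. Concretely, I would set $m=m_i$ and look for a positive integer $l$ with $1\leq l<(q^n-1)/(q^{m_i}-1)=c_i$ such that $\alpha^l=\alpha_{i+1}$. The natural candidate is $l=c_{i+1}=(q^n-1)/(q^{m_{i+1}}-1)$, which lies strictly between $0$ and $c_i$ because $m_i<m_{i+1}$.

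The first step is to verify the algebraic identifications. Since $\alpha_{i+1}=\alpha^{c_{i+1}}$ is a primitive element of $\bbF_{q^{m_{i+1}}}$ and $\bbF_{q^{m_i}}\subseteq \bbF_{q^{m_{i+1}}}$, we have $\bbF_{q^{m_{i+1}}}=\bbF_{q^{m_i}}[\alpha_{i+1}]$, so the degree of the minimal polynomial of $\alpha_{i+1}$ over $\bbF_{q^{m_i}}$ equals $[\bbF_{q^{m_{i+1}}}:\bbF_{q^{m_i}}]=m_{i+1}/m_i=L_{i+1}$. Thus the role of $L$ in Theorem \ref{theo: teorema clasificación} is played by $L_{i+1}$, and the subspaces
$$
\cF_j^i=\bigoplus_{l=0}^{j-1}\bbF_{q^{m_i}}\alpha_{i+1}^l,\qquad 1\leq j\leq L_{i+1}-1,
$$
match exactly the regular form (\ref{eq: subespacios del flag teorema clasificación}) with $s_j=j$ for $j=1,\dots,r$, where $r=L_{i+1}-1$.

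Next I would observe that $s_r=L_{i+1}-1<L_{i+1}$, so we are in situation (\ref{theo: teorema clasificación item1}) of Theorem \ref{theo: teorema clasificación}. Applying that theorem yields at once that $\orb(\cF^i)$ has best friend $\bbF_{q^{m_i}}$, is consistent, and has distance
$$
d_f(\orb(\cF^i))=2m_i r=2m_i(L_{i+1}-1)=2(m_{i+1}-m_i),
$$
which is precisely the claimed statement.

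There is no real obstacle: everything reduces to checking that the hypotheses of Theorem \ref{theo: teorema clasificación}(\ref{theo: teorema clasificación item1}) are met under the substitution $(m,l,L)=(m_i,c_{i+1},L_{i+1})$. The only point deserving care is the verification that $c_{i+1}<c_i$ (so that $l$ lies in the admissible range) and the identification $\deg_{\bbF_{q^{m_i}}}(\alpha_{i+1})=L_{i+1}$, both of which follow immediately from $m_i\mid m_{i+1}\mid n$.
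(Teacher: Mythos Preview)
Your proposal is correct and follows essentially the same approach as the paper: both reduce the statement to Theorem~\ref{theo: teorema clasificación}(\ref{theo: teorema clasificación item1}) with the substitution $(m,l,L)=(m_i,c_{i+1},L_{i+1})$ and read off the distance $2m_i(L_{i+1}-1)=2(m_{i+1}-m_i)$. Your version is simply more explicit in verifying the admissibility of $l=c_{i+1}$ and the identification of $L$ with $L_{i+1}$, which the paper leaves implicit.
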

	\begin{proof}
		For every $1\leq i\leq k$, we apply Theorem \ref{theo: teorema clasificación} (part (\ref{theo: teorema clasificación item1})) to the generalized Galois flag code $\orb(\cF^i)$ and conclude that it is a consistent flag code with distance equal to $2m_i(L_{i+1}-1)=2(m_{i+1}-m_i)$. 
	\end{proof}
	
	Observe that we have constructed a collection of orbit flag codes with respective generating flags in regular form over their best friend. The first subspace of each flag $\cF^i$ is precisely its best friend and contains the last subspace of the previous flag, since 
	$$
	\cF_{L_{i+1}-1}^i \subseteq \bbF_{q^{m_i}}[\alpha_{i+1}] = \bbF_{q^{m_{i+1}}} = \cF_1^{i+1},
	$$ for every value $1\leq i\leq k-1$. By means of this property, we can consider a generating generalized Galois flag having all the subfields $\{\bbF_{q^{m_i}}\}_{i=1}^{k}$ among its subspaces just by taking
	\begin{equation}\label{eq: flag construccion general pegado}
		\cF  = (\cF_1^1, \dots, \cF_{{L_1}-1}^1, \cF_1^2, \dots, \cF_{{L_2}-1}^2, \dots, \cF_1^k, \dots, \cF_{{L_k}-1}^k),
	\end{equation}
	whose type vector is $(m_1, \dots, m_2-m_1, m_2, \dots, m_3-m_2, m_3, \ldots, m_k, \ldots, n-m_k).$ In this way, by weaving the independent basic constructions described in Corollary \ref{cor: construccion general 1}, we get a generalized Galois flag code with the prescribed tower of subfields $(\bbF_{q^{m_1}}, \dots, \bbF_{q^{m_k}})$ as its underlying Galois subflag.

	\begin{proposition}\label{prop: weaved construction}
		Let $\cF$ be the generalized Galois flag on $\bbF_{q^n}$ given in (\ref{eq: flag construccion general pegado}). Then the generalized Galois flag code $\orb(\cF)$  generalizes the Galois flag of type $(m_1, \dots, m_r)$. Its cardinality is $\frac{q^n-1}{q^{m_1}-1}$ and its minimum distance, $2(m_2-m_1)$. 
	\end{proposition}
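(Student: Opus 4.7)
The plan is to verify three assertions in turn: that $\cF$ is a bona fide generalized Galois flag with $(\bbF_{q^{m_1}}, \dots, \bbF_{q^{m_k}})$ as its underlying Galois subflag, that its best friend is $\bbF_{q^{m_1}}$ (which then forces the cardinality), and that the minimum distance equals $2(m_2-m_1)$.

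First, I would observe that the subspaces appearing in the definition (\ref{eq: flag construccion general pegado}) form a properly nested chain: the last subspace $\cF^i_{L_{i+1}-1}$ of $\cF^i$ has dimension $m_{i+1}-m_i$ and is contained in $\bbF_{q^{m_{i+1}}}[\alpha_{i+1}] = \bbF_{q^{m_{i+1}}} = \cF^{i+1}_1$. Thus the fields $\bbF_{q^{m_1}}, \ldots, \bbF_{q^{m_k}}$ are among the subspaces of $\cF$ and constitute the longest Galois subflag, while the intermediate subspaces $\cF^i_j$ with $2\leq j\leq L_{i+1}-1$ are not fields (by Proposition \ref{prop: BF en forma regular}), so $\cF$ is a generalized Galois flag as claimed.

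Next, since every $\cF^i_j$ is by construction an $\bbF_{q^{m_i}}$-vector space and $m_1 \mid m_i$, the subfield $\bbF_{q^{m_1}}$ is a friend of each subspace of $\cF$, hence a friend of $\cF$. Moreover, the first subspace $\cF^1_1 = \bbF_{q^{m_1}}$ has best friend exactly $\bbF_{q^{m_1}}$, so by Proposition \ref{lem: BF divides dimensions} the best friend of $\cF$ is $\bbF_{q^{m_1}}$. Proposition \ref{prop: cardinality and best friend} then yields $|\orb(\cF)| = (q^n-1)/(q^{m_1}-1)$.

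For the distance I would establish matching upper and lower bounds. For the upper bound, I take $\beta=\alpha_2\in\bbF_{q^{m_2}}^*\subseteq \bbF_{q^{m_i}}^*$ for all $i\geq 2$; hence $\alpha_2$ stabilizes every subspace of $\cF^i$ with $i\geq 2$. For each $1\leq j\leq L_2-1$, multiplication by $\alpha_2$ shifts the direct-sum decomposition $\cF^1_j = \bigoplus_{l=0}^{j-1}\bbF_{q^{m_1}}\alpha_2^l$, and since $\{1,\alpha_2,\dots,\alpha_2^{L_2-1}\}$ is an $\bbF_{q^{m_1}}$-basis of $\bbF_{q^{m_2}}$, no wrap-around occurs and $d_S(\cF^1_j,\cF^1_j\alpha_2)=2m_1$. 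Summing over the $L_2-1$ subspaces of $\cF^1$ gives $d_f(\cF,\cF\alpha_2) = 2m_1(L_2-1) = 2(m_2-m_1)$.

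The main obstacle is the matching lower bound: one must ensure that no element of $\bbF_{q^n}^*$ outside $\stab(\cF)$ gives a smaller distance. The key observation is a contrapositive argument. Suppose $\beta^j\in \stab(\cF^1)\subseteq \stab(\bbF_{q^{m_1}}) = \bbF_{q^{m_1}}^*$; then $\beta^j$ stabilizes every $\bbF_{q^{m_1}}$-vector space, hence every $\cF^i_l$, so $\beta^j\in\stab(\cF)$. Equivalently, if $\cF\beta^j \neq \cF$ then $\cF^1\beta^j \neq \cF^1$, and by Corollary \ref{cor: construccion general 1} the contribution $\sum_{l=1}^{L_2-1} d_S(\cF^1_l, \cF^1_l\beta^j) \geq d_f(\orb(\cF^1)) = 2(m_2-m_1)$. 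Since $d_f(\cF,\cF\beta^j)$ dominates this partial sum, we obtain $d_f(\orb(\cF)) \geq 2(m_2-m_1)$, completing the argument.
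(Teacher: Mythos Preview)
Your proof is correct and follows essentially the same approach as the paper: both identify $\bbF_{q^{m_1}}$ as the best friend to obtain the cardinality, and both establish the upper bound on the distance by computing $d_f(\cF,\cF\alpha_2)=2m_1(L_2-1)$ using that $\alpha_2$ stabilizes all subspaces from $\cF^2,\dots,\cF^k$. The only presentational difference is the lower bound: the paper invokes Theorem~\ref{th: dist min type vector} directly (there are exactly $L_2-1$ subspaces with best friend $\bbF_{q^{m_1}}$, namely those of $\cF^1$), whereas you argue via the equality $\stab(\cF)=\stab(\cF^1)=\bbF_{q^{m_1}}^\ast$ and then appeal to Corollary~\ref{cor: construccion general 1}; these are two phrasings of the same underlying observation. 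One cosmetic slip: in your containment $\cF^i_{L_{i+1}-1}\subseteq \bbF_{q^{m_{i+1}}}[\alpha_{i+1}]$ you presumably meant $\bbF_{q^{m_i}}[\alpha_{i+1}]$, though both equal $\bbF_{q^{m_{i+1}}}$ so the conclusion is unaffected.
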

	\begin{proof}
		By construction, it is clear that the subspace of dimension $m_i$ of $\cF$ is the field $\bbF_{q^{m_i}}$, for every  $1\leq i\leq k$. Moreover, subspaces of dimensions $m_i, \ldots, m_{i+1}-m_{i}$ have the subfield $\bbF_{q^{m_i}}$ as its best friend. As a result, the best friend of the flag $\cF$ coincides with its first subspace, that is, $\bbF_{q^{m_1}}$. This fact leads to the statement about the cardinality. Let us compute now the minimum distance of the code. First, by means of Theorem \ref{th: dist min type vector}, since there are exactly $L_2-1$ subspaces of $\cF$ with $\bbF_{q^{m_1}}$ as their best friend, we conclude that $d_s(\orb(\cF))\geq 2m_1(L_2-1)=2(m_2-m_1)$. Moreover, observe that $\alpha_2$ stabilizes every subspace of the flag $\cF$ containing $\bbF_{q^{m_2}}=\{0\}\cup\langle \alpha_2\rangle$. Hence, 
		$d_f(\orb(\cF))\leq d_f(\cF, \cF\alpha_2)=d_f(\cF^1, \cF^1\alpha_2)= 2m(L_2-1) = 2(m_2-m_1)$
		since, for every $1\leq j \leq L_2-1$, the subspace $\cF_j^1\cap\cF_j^1\alpha= \bigoplus_{l=1}^{j-1} \bbF_{q^{m}}\alpha_{2}^l$ has dimension $m(j-1).$ We conclude that $d_f(\orb(\cF))=2(m_2-m_1)$.
	\end{proof}
	
	\begin{remark}
		Note that weaving our basic constructions allows us to give generalized Galois flag codes with any given underlying Galois subflag in a systematic way. Another interesting fact to point out is that the best friends of the subspaces in the generalized Galois flag $\cF$ defined in (\ref{eq: flag construccion general pegado}) form a nested sequence of subfields. This does not happen in general for arbitrary generalized Galois flag codes and it helps us to easily determine the cardinality and distance of the code $\orb(\cF)$ as well as to give bounds for the distance when we consider its $\beta$-cyclic subcodes, as we will see in Subsection \ref{subsec:weaved beta}. On the negative side, contrary to what happens with the basic construction, the waved one is not consistent since it is not even disjoint.
	\end{remark}
	
	\subsection{Decoding our constructions over the erasure channel}\label{subsec: decoding}
	
	The use of flags in network coding was originally introduced by Liebhold \textit{et al.} in \cite{LiebNebeVaz18}. In that paper, a channel model for flags was presented and some constructions, together with their decoding algorithms (over the erasure channel) were provided. In \cite{Consistentes}, a decoding algorithm over the erasure channel for consistent flag codes is presented. In particular, such an algorithm can be applied to the basic construction given in Theorem \ref{theo: teorema clasificación}, part (\ref{theo: teorema clasificación item1}). Although the rest of constructions in this paper are not consistent, we can adapt the decoding process in \cite{Consistentes} to them. To do so, let us briefly recall some concepts related to the notion of \emph{correctability}.
	
	Assume that we have sent a flag $\cF=(\cF_1, \dots, \cF_r)$  and hence, the receiver gets a sequence of nested subspaces $\cX=(\cX_1, \dots, \cX_r)$ that, when working over an erasure channel, must satisfy $\cX_i\subseteq\cF_i$, for all $1\leq i\leq r$. In this context, each value $e_i=d_S(\cF_i, \cX_i)=\dim(\cF_i)-\dim(\cX_i)$ is called \emph{number of erasures at the $i$-th shot}  whereas $e=d_f(\cF, \cX)=\sum_{i=1}^r e_i$ is the \emph{total number of erasures}. We say that the total number of erasures $e$ is \emph{correctable} (by minimum distance) by a flag code $\cC$ whenever $e\leq \lfloor \frac{d_f(\cC)-1}{2}\rfloor$. Analogously, we also say that the value $e_i$ is correctable by the projected code $\cC_i$ if $e_i\leq \lfloor \frac{d_S(\cC_i)-1}{2}\rfloor.$
	
	Let us fix the flag code $\cC$ as the one presented in Theorem \ref{theo: teorema clasificación}, part (\ref{theo: teorema clasificación item2}). Recall that such a code has distance $2m(r-1)$ and $r$ projected codes of distance $2m$. Following the ideas of \cite[Proposition 8]{Consistentes}, we state the next result.
	\begin{proposition}\label{prop: correctability basic}
		If the total number of erasures $e$ is correctable by the generalized Galois flag code $\cC$, then there exists some $1\leq i\leq r-1$ such that the value $e_i$ is also correctable by the corresponding projected code $\cC_i$.
	\end{proposition}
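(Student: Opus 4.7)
The plan is to argue by contraposition, exploiting the fact that the minimum distance of $\cC$ equals $r-1$ times the minimum distance of each of the first $r-1$ projected codes. First I would pin down the relevant correction thresholds. From Theorem \ref{theo: teorema clasificación}(\ref{theo: teorema clasificación item2}) the code has $d_f(\cC) = 2m(r-1)$, so the total erasure correction bound is
\[
\left\lfloor \frac{d_f(\cC)-1}{2} \right\rfloor = \left\lfloor \frac{2m(r-1)-1}{2} \right\rfloor = m(r-1) - 1.
\]
For each $1 \le i \le r-1$ one has $s_i < L$, so by Proposition \ref{prop: BF en forma regular} the subspace $\cF_i$ has best friend $\bbF_{q^m}$, and the intersection computation carried out in the proof of Theorem \ref{theo: teorema clasificación} shows $\cF_i \cap \cF_i \alpha^l = \bigoplus_{j=1}^{s_i-1} \bbF_{q^m} \alpha^{lj}$, giving $d_S(\cC_i) = 2m$. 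Hence the $i$-th shot is correctable exactly when $e_i \le m-1$.

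With these thresholds in hand, I would assume for contradiction that no $i \in \{1, \ldots, r-1\}$ has a correctable number of erasures, that is, $e_i \ge m$ for every such $i$. Summing over these indices yields
\[
e = \sum_{i=1}^{r} e_i \;\ge\; \sum_{i=1}^{r-1} e_i \;\ge\; m(r-1),
\]
which strictly exceeds the correction threshold $m(r-1)-1$ and contradicts the hypothesis that $e$ is correctable by $\cC$. This forces the existence of some index $1 \le i \le r-1$ with $e_i \le m-1$, i.e.\ with $e_i$ correctable by $\cC_i$.

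The argument is essentially a pigeonhole observation once the correct numerical thresholds are identified; there is no genuine obstacle. The only delicate point is ensuring that $d_S(\cC_i) = 2m$ (not just $\ge 2m$) for every $1 \le i \le r-1$, but this follows immediately from the explicit intersection formula already established inside the proof of Theorem \ref{theo: teorema clasificación}. Note that the index $i=r$ is deliberately excluded from the statement: the last projected code is the $mL$-spread $\orb(\bbF_{q^{mL}})$ with much larger distance $2mL$, and the erasures at the last shot alone cannot be forced to be correctable by the counting above.
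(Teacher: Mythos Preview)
Your proof is correct and follows essentially the same contrapositive argument as the paper: assume $e_i\geq m$ for all $1\leq i\leq r-1$, sum to obtain $e\geq m(r-1)$, and contradict the correctability bound $m(r-1)-1$. The additional justification you supply for the thresholds $d_f(\cC)=2m(r-1)$ and $d_S(\cC_i)=2m$ is accurate and simply makes explicit what the paper takes as already established.
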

	\begin{proof}
		Assume that no value $e_i$ is correctable for every $1\leq i\leq r-1$. Equivalently, we have that $e_i\geq m$ for every $1\leq i\leq r-1$. As a consequence, we have that
		$$
		e=\sum_{i=1}^r e_i \geq m(r-1)+e_r  \geq m(r-1),
		$$
		which is a contradiction, since $\cC$ can correct up to $m(r-1)-1$ erasures.
	\end{proof}
	
	Now, if $\cC'$ denotes the generalized Galois flag code obtained by the weaved construction given in Proposition \ref{prop: weaved construction} and $m_1=m$, then  we have $d_f(\cC')=2(m_2-m)=2m(L_2-1)$. Moreover, $d_S(\cC'_i)=2m$ holds for the first $L_2-1$ projected codes. Hence, the same argument in the proof of Proposition \ref{prop: correctability basic} can be used to show that a correctable total number of erasures can be detected and corrected by one of the first $L_2-1$ projected codes of $\cC$.  
	\begin{proposition}\label{prop: correctability weaved}
		If the total number of erasures $e$ is correctable by $\cC'$, then there exists some $1\leq i\leq L_2-1$ such that $e_i$ is correctable by the projected code $\cC'_i$.
	\end{proposition}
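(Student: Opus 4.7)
The plan is to mimic the contradiction argument used in Proposition \ref{prop: correctability basic}, with only a small bookkeeping change: the role previously played by the first $r$ projected codes of distance $2m$ will now be played by the first $L_2-1$ projected codes of $\cC'$, which are precisely those coming from the basic construction $\cF^1$ and hence have subspace distance $2m$ each (by Corollary \ref{cor: construccion general 1} applied with $i=1$).

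First I would record the two ingredients. On the one hand, by Proposition \ref{prop: weaved construction} the flag distance of $\cC'$ equals $2(m_2-m_1)=2m(L_2-1)$, so the maximum correctable number of erasures is
\[
\left\lfloor \frac{d_f(\cC')-1}{2} \right\rfloor = m(L_2-1)-1.
\]
On the other hand, the first $L_2-1$ subspaces of the generating flag $\cF$ in (\ref{eq: flag construccion general pegado}) coincide with $\cF^1_1,\ldots,\cF^1_{L_2-1}$; each has best friend $\bbF_{q^{m_1}}$ and, by the argument in Theorem \ref{theo: teorema clasificación}(\ref{theo: teorema clasificación item1}), the associated projected code $\cC'_i$ has distance exactly $2m$. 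Therefore $\cC'_i$ can correct at most $m-1$ erasures in its shot, for every $1\leq i\leq L_2-1$.

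Then I would proceed by contradiction: assume that $e_i$ is not correctable by $\cC'_i$ for any $1\leq i\leq L_2-1$, i.e. $e_i\geq m$ for each such $i$. Adding and using that all other $e_i$ are nonnegative,
\[
e=\sum_{i=1}^r e_i \;\geq\; \sum_{i=1}^{L_2-1} e_i \;\geq\; m(L_2-1),
\]
which contradicts the correctability hypothesis $e\leq m(L_2-1)-1$.

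Since the argument is essentially identical to the one in Proposition \ref{prop: correctability basic}, I do not expect any genuine obstacle; the only subtle point is to make sure the reader sees \emph{why} the first $L_2-1$ projected codes all have distance $2m$ (they are the projected codes of the basic block $\cF^1$, which was shown to be consistent with distance $2(m_2-m_1)=2m(L_2-1)$ as the sum of $L_2-1$ subspace distances equal to $2m$), and \emph{why} nothing can be said about the later projected codes (they correspond to the blocks $\cF^2,\ldots,\cF^k$ whose best friends are strictly larger than $\bbF_{q^{m_1}}$). Only the first block is therefore pertinent to the contradiction, and that is exactly the bound stated in the proposition.
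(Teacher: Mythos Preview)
Your proposal is correct and follows exactly the approach the paper intends: the paper itself does not give a separate formal proof but states that ``the same argument in the proof of Proposition \ref{prop: correctability basic} can be used,'' after recording that $d_f(\cC')=2m(L_2-1)$ and that the first $L_2-1$ projected codes have distance $2m$. Your write-up simply makes that argument explicit, and the additional remarks about why only the first $L_2-1$ projected codes are relevant are accurate and helpful.
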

	Moreover, in both situations, every projected code has the same distance, which is $2m$. Hence, the number of erasures at any shot is correctable whenever it holds $e_i\leq m-1$. We can easily identify if an erasure is correctable just by checking the dimension of every received subspace $\cX_i$. The next proposition is valid for both a generalized Galois flag codes $\cC$ and $\cC'$.
	\begin{proposition}
		The number of erasures $e_i$ is correctable by the constant dimension code $\cC_i$ (resp. $\cC'_i$) if, and only if, $\dim(\cX_i) \geq \dim(\cF_i)-m+1$.
	\end{proposition}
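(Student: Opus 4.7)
The plan is to unpack the definition of correctability in the erasure channel setting and chain together elementary equivalences; there is no genuine obstacle here, as the statement is essentially a rewriting of the correction bound in terms of the received dimension.

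First I would recall that, by the definition preceding Proposition \ref{prop: correctability basic}, the number of erasures at the $i$-th shot is correctable by the projected code $\cC_i$ (respectively $\cC'_i$) precisely when
\[
e_i \leq \left\lfloor \frac{d_S(\cC_i)-1}{2} \right\rfloor.
\]
Next I would plug in the value of the minimum distance of the projected code. For both the basic construction of Theorem \ref{theo: teorema clasificación} and the weaved construction of Proposition \ref{prop: weaved construction}, every projected code we consider has distance exactly $2m$. Thus
\[
\left\lfloor \frac{d_S(\cC_i)-1}{2} \right\rfloor = \left\lfloor \frac{2m-1}{2} \right\rfloor = m-1,
\]
so $e_i$ is correctable if and only if $e_i \leq m-1$.

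Finally, I would invoke the erasure channel assumption $\cX_i \subseteq \cF_i$, which yields
\[
e_i = d_S(\cF_i, \cX_i) = \dim(\cF_i) - \dim(\cX_i).
\]
Substituting this identity into the inequality $e_i \leq m-1$ and rearranging gives the equivalence $\dim(\cX_i) \geq \dim(\cF_i) - m + 1$, which is exactly the claimed characterization. The argument is entirely symmetric in $\cC_i$ and $\cC'_i$, so a single proof covers both cases with no additional work.
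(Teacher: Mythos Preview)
Your proof is correct and follows essentially the same approach as the paper: both arguments reduce correctability to the inequality $e_i \leq m-1$ using the fact that the relevant projected codes have minimum distance $2m$, then substitute the erasure-channel identity $e_i = \dim(\cF_i) - \dim(\cX_i)$ and rearrange. You are slightly more explicit in computing $\lfloor (2m-1)/2 \rfloor = m-1$, but otherwise the two proofs are identical in structure and content.
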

	\begin{proof}
		Assume that we send a flag $\cF\in\cC$ (resp. in $\cC'$) and a stuttering flag $\cX$ is received. Then $e_i$ is correctable by $\cC_i$ (resp. $\cC'_i$) if, and only if, it holds 
		$$
		\dim(\cF_i)-\dim(\cX_i)=d_S(\cF_i, \cX_i)= e_i \leq m-1
		$$ 
		or equivalently, if $\dim(\cX_i) \geq \dim(\cF_i)-m+1$.
	\end{proof}
	
	\begin{remark}
		Observe that neither $\cC$ nor $\cC'$ are disjoint flag codes. However, Proposition \ref{prop: correctability basic} (resp. \ref{prop: correctability weaved}) allows us to decode at least one of the received subspaces $\cX_i$ into the sent one $\cF_i$ for an index $i$ satisfying $|\cC_i|=|\cC|$ (resp. $|\cC'_i|=|\cC'|$). Hence, after having recovered $\cF_i$, one can easily obtain the sent flag $\cF$ as the unique flag in $\cC$ (resp. $\cC'$) having $\cF_i$ as its $i$-th subspace.
	\end{remark}
	
	\subsection{The $\beta$-cyclic case}\label{subsec:weaved beta}
	In this part of the paper, we consider orbits under the action of proper subgroups of $\bbF_{q^n}^\ast$ generated by the flag $\cF$ given in (\ref{eq: flag construccion general pegado}), which has underlying Galois subflag $(\bbF_{q^{m_1}}, \bbF_{q^{m_2}}, \dots, \bbF_{q^{m_k}})$. In other words, for every $\beta\in\bbF_{q^n}^\ast$, we study the generalized $\beta$-Galois flag code $\orbbeta(\cF)$. Recall that this code has type $(m_1, \dots, m_2-m_1, m_2, \dots, m_3-m_2, m_3, \ldots, m_k, \ldots, n-m_k)$ and it has the following particularity: the best friends of the subspaces of $\cF$ are nested. More precisely, the subfield $\bbF_{q^{m_i}}$ is the best friend of the subspaces of dimensions $m_i,\dots, m_{i+1}-m_i$ in the flag, for every $1\leq i\leq k$, where $m_{k+1}=n$. This property makes our flag $\cF$ be closer to the Galois flag of type $(m_1, \dots, m_k)$ than other flags that also generalize it. As a result, we can give lower and upper bounds for the distance of $\orbbeta(\cF)$ by studying the sequence of subgroups
	$$
	\langle \beta\rangle \cap\bbF_{q^{m_1}}^\ast \subseteq \langle \beta\rangle \cap \bbF_{q^{m_2}}^\ast \subseteq \dots \subseteq \langle \beta\rangle \cap \bbF_{q^{m_k}}^\ast.
	$$
	In particular, we consider two possibilities: either all these subgroups coincide or some inclusion is strict. In the latest case, we are specially interested in the first index $1< i\leq k$ such that $\langle \beta\rangle \cap \bbF_{q^{m_1}}^\ast\neq \langle \beta\rangle \cap \bbF_{q^{m_i}}^\ast$. Moreover, we exclude those elements $\beta\in\bbF_{q^{m_1}}^\ast$ since they provide trivial orbit flag codes with distance equal to zero.
	
	\begin{theorem}\label{theo: beta cyclic weaved construction}
		Let $\cF$ be the generalized Galois flag given in (\ref{eq: flag construccion general pegado}) and $\beta\in\bbF_{q^n}^\ast\setminus\bbF_{q^{m_1}}^\ast$. For every $1\leq i\leq k$, we write $M_i=\sum_{j=1}^{i-1} m_{j+1}(L_{j+1}-1)$.
		\begin{enumerate}
			\item If $\langle \beta\rangle \cap \bbF_{q^{m_1}}^\ast = \langle \beta\rangle \cap \bbF_{q^{m_k}}^\ast$, then 
			\begin{equation*} 
				2m_k(L_{k+1}-1)+M_k \leq d_f(\orbbeta(\cF))
				\leq m_k\left\lfloor \frac{L_{k+1}^2}{2}\right\rfloor+M_k.
			\end{equation*}  \label{theo: beta cyclic weaved construction item 1}
			\item Otherwise, consider the minimum $1< i\leq k$ such that $\langle \beta\rangle \cap \bbF_{q^{m_1}}^\ast \subsetneq \langle \beta\rangle \cap \bbF_{q^{m_i}}^\ast$. Then it holds:
			\begin{equation*} 
				2m_{i-1}(L_{i}-1)+M_{i-1} \leq d_f(\orbbeta(\cF)) 
				\leq  m_{i-1}\left\lfloor \frac{L_{i}^2}{2}\right\rfloor+M_{i-1}.
			\end{equation*} 
			 \label{theo: beta cyclic weaved construction item 2}
		\end{enumerate}
	\end{theorem}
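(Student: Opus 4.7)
The plan is to analyze, for each $\beta^l\in\langle\beta\rangle\setminus\stabbeta(\cF)$, the smallest index $J=J(\beta^l):=\min\{j\,|\,\beta^l\in\bbF_{q^{m_j}}^\ast\}$ in the chain $\bbF_{q^{m_1}}\subsetneq\cdots\subsetneq\bbF_{q^{m_k}}\subsetneq\bbF_{q^{m_{k+1}}}=\bbF_{q^n}$, and then to split the computation of $d_f(\cF,\cF\beta^l)$ block-by-block. First, since each subspace in the $j$-th block of $\cF$ (the subspaces $\cF^j_1,\dots,\cF^j_{L_{j+1}-1}$) has best friend $\bbF_{q^{m_j}}$, Proposition~\ref{lem: BF divides dimensions} gives that its stabilizer is $\bbF_{q^{m_j}}^\ast$, and (\ref{eq: estabilizador flag}) then yields $\stabbeta(\cF)=\langle\beta\rangle\cap\bbF_{q^{m_1}}^\ast$. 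The hypothesis of case $(1)$ forces $J(\beta^l)=k+1$ for every admissible $\beta^l$, while the hypothesis of case $(2)$ restricts $J(\beta^l)$ to $\{i,i+1,\dots,k+1\}$.

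The key technical ingredient is the following spread observation: whenever $\beta^l\notin\bbF_{q^{m_{j+1}}}^\ast$, the subspaces $\bbF_{q^{m_{j+1}}}$ and $\bbF_{q^{m_{j+1}}}\beta^l$ are distinct members of the $m_{j+1}$-spread (\ref{def: spread Anna-Lena}) and thus intersect trivially, so any subspace $\cU\subseteq\bbF_{q^{m_{j+1}}}$ of $\cF$ satisfies $d_S(\cU,\cU\beta^l)=2\dim\cU$. With $J=J(\beta^l)$ fixed, this cleanly splits $d_f(\cF,\cF\beta^l)$ into three regimes. Blocks indexed $t\geq J$ are stabilized and contribute $0$. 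For blocks indexed $t\leq J-2$, the observation applies with $j+1=t+1\leq J-1$ (since $\beta^l\notin\bbF_{q^{m_{J-1}}}^\ast$ forces $\beta^l\notin\bbF_{q^{m_{t+1}}}^\ast$), so each subspace attains its maximum and block $t$ contributes $\sum_{s=1}^{L_{t+1}-1}2sm_t=m_{t+1}(L_{t+1}-1)$, summing to $M_{J-1}$. For the remaining middle block $t=J-1$, the power $\beta^l$ preserves $\bbF_{q^{m_J}}$ but no subspace of this block (whose best friend is $\bbF_{q^{m_{J-1}}}$), giving the per-subspace lower bound $2m_{J-1}$ and hence $2m_{J-1}(L_J-1)$ in total; on the other hand, these subspaces form a full $\bbF_{q^{m_{J-1}}}$-flag inside $\bbF_{q^{m_J}}$ of $\bbF_{q^{m_{J-1}}}$-dimensions $1,2,\dots,L_J-1$, to which (\ref{eq: dist max flags}) gives the upper bound $\lfloor L_J^2/2\rfloor$, i.e.\ $m_{J-1}\lfloor L_J^2/2\rfloor$ after rescaling to $\bbF_q$-distance.

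Combining, for each admissible $J$ one obtains
\begin{equation*}
M_{J-1}+2m_{J-1}(L_J-1)\ \leq\ d_f(\cF,\cF\beta^l)\ \leq\ M_{J-1}+m_{J-1}\lfloor L_J^2/2\rfloor.
\end{equation*}
In case $(1)$ only $J=k+1$ is admissible and both bounds follow immediately. In case $(2)$ the upper bound is realized by picking any $\beta^l$ in $(\langle\beta\rangle\cap\bbF_{q^{m_i}}^\ast)\setminus(\langle\beta\rangle\cap\bbF_{q^{m_{i-1}}}^\ast)$, which is nonempty by the defining strict inclusion of $i$; the lower bound requires checking that $f(J):=M_{J-1}+2m_{J-1}(L_J-1)$ is strictly increasing on $\{i,\dots,k+1\}$, which follows from the short identity $f(J+1)-f(J)=(m_J-2m_{J-1})(L_J-1)+2m_J(L_{J+1}-1)\geq 2m_J(L_{J+1}-1)>0$ (using $m_J\geq 2m_{J-1}$, since $L_J\geq 2$ throughout the chain). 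The one place I expect any real subtlety is precisely this monotonicity verification, since it is what rules out powers of $\beta$ with $J(\beta^l)>i$ as candidates for a smaller distance and thereby closes the case~$(2)$ lower bound.
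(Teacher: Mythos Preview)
Your proof is correct and follows essentially the same approach as the paper: the block-by-block decomposition governed by the index $J(\beta^l)$, the spread observation forcing maximum distances on the low blocks, and the bound (\ref{eq: dist max flags}) on the single remaining middle block are exactly what the paper uses. The only variation is in closing the case~(2) lower bound, where you prove explicit monotonicity of $f(J)=M_{J-1}+2m_{J-1}(L_J-1)$, whereas the paper argues more directly that for any $\beta^h\notin\bbF_{q^{m_i}}^\ast$ each block already contributes at least as much as it does for $\beta^l\in\bbF_{q^{m_i}}^\ast\setminus\bbF_{q^{m_1}}^\ast$, yielding $d_f(\cF,\cF\beta^h)\geq d_f(\cF,\cF\beta^l)$ without the computation.
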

	\begin{proof}
		Assume that $\langle \beta\rangle \cap \bbF_{q^{m_1}}^\ast = \langle \beta\rangle \cap \bbF_{q^{m_k}}^\ast$. Let us compute the distance $d_f(\cF, \cF\beta^l)$, for every element $\beta^l\notin\stabbeta(\cF)=\langle \beta\rangle \cap \bbF_{q^{m_1}}^\ast$. 
		Since $\langle \beta\rangle \cap \bbF_{q^{m_1}}^\ast = \langle \beta\rangle \cap \bbF_{q^{m_k}}^\ast$, this power $\beta^l$ does not stabilize any subspace in the flag $\cF$. In particular, observe that $\bbF_{q^{m_k}}$ and $\bbF_{q^{m_k}}\beta^l$ are different subspaces in the spread $\orb(\bbF_{q^{m_k}})$. In other words, it holds $d_S(\bbF_{q^{m_k}},\bbF_{q^{m_k}}\beta^l)=2m_{k}$ and, by means of Theorem \ref{theo: potential values dist}, every subspace distance between subspaces of $\cF$ and $\cF\beta^l$ of dimensions lower than $m_k$ is maximum as well, i.e., twice the corresponding dimension. Hence, for dimensions up to $m_k-m_{k-1} =m_{k-1}(L_k-1),$ we obtain the sum of subspace distances:
		$$
		\begin{array}{ccl}
			\sum_{j=1}^{k-1}
			\big( 2m_j + \dots + 2m_j(L_{j+1}-1) \big)  
			\Big)
			&=&
			\sum_{j=1}^{k-1}
			2m_j\big( 1 + \dots +(L_{j+1}-1)\big) \\  
			&=&
			\sum_{j=1}^{k-1}
			m_jL_{j+1}(L_{j+1}-1)  \\ 
			&=&
			\sum_{j=1}^{k-1}
			m_{j+1}(L_{j+1}-1)=M_k. 
		\end{array}
		$$
		Moreover, since subspaces of dimensions $m_k, \dots, n-m_k=m_k(L_{k+1}-1)$ have $\bbF_{q^{m_k}}$ as their best friend, if $d$ represents the flag distance between the subflags of type $m_k(1, \dots, L_{k+1}-1)$ of $\cF$ and $\cF\beta^l$ and, by means of (\ref{eq: distance bounds}), we have: 
		$$
		2m_k(L_{k+1}-1) \leq d \leq m_k \left\lfloor \frac{L_{k+1}^2}{2}\right\rfloor.
		$$
		Combining these two facts, we get the desired lower and upper bounds for $d_f(\cF, \cF\beta^l)$,  if $\beta^l\notin\stabbeta(\cF)$. In particular, these bounds are also valid for $d_f(\orbbeta(\cF))$.
		
		To prove (\ref{theo: beta cyclic weaved construction item 2}), suppose that $\langle\beta\rangle\cap\bbF_{q^{m_1}}\neq \langle\beta\rangle\cap\bbF_{q^{m_k}}$ and then take the minimum $1< i \leq k$ such that $\langle\beta\rangle\cap\bbF_{q^{m_1}}\subsetneq \langle\beta\rangle\cap\bbF_{q^{m_i}}$. 
		In this case, we can always find an element $\beta^l\in \bbF_{q^{m_i}}^\ast\setminus \bbF_{q^{m_1}}^\ast$. This power $\beta^l$ stabilizes every subspace in $\cF$ having the subfield $\bbF_{q^{m_i}}$ as a friend, i.e., all those of dimensions at least $m_i$. This means that these dimensions do not contribute to the computation of $d_f(\cF, \cF\beta^l)$. On the other hand, since $\langle\beta\rangle\cap\bbF_{q^{m_1}}= \dots = \langle\beta\rangle\cap\bbF_{q^{m_{i-1}}}$, then $\bbF_{q^{m_{i-1}}}\neq \bbF_{q^{m_{i-1}}}\beta^l$ are different spread elements and the distance between them is $2m_{i-1}$. As before, by means of Theorem \ref{theo: potential values dist}, all the subspace distances are maximum for dimensions up to $m_i$. In particular, the distance between the subflags of type $(m_1, \dots, m_i-m_{i-1})$ of $\cF$ and $\cF\beta^l$ is exactly
		$$
		\sum_{j=1}^{i-2} m_{j+1}(L_{j+1}-1)=M_{i-1}.
		$$
		Besides, observe that the subspaces of dimensions $m_{i-1}, \dots, m_i-m_{i-1}$ of $\cF$ and $\cF\beta^l$ are $\bbF_{q^{m_{i-1}}}$-subspaces of $\bbF_{q^{m_i}}$. Hence, if $d$ denotes the distance between the corresponding subflags of $\cF$ and $\cF\beta^l$, by (\ref{eq: distance bounds}), we have
		
		$$
		2m_{i-1}(L_{i}-1) \leq d \leq m_{i-1}\left\lfloor \frac{L_i^2}{2}\right\rfloor.
		$$
		As a result, we conclude
		\begin{equation}
			\label{eq: bound distance beta-cyclic}
			2m_{i-1}(L_{i}-1)+M_{i-1} \leq d_f(\cF, \cF\beta^l)
			\leq   m_{i-1} 
			\left\lfloor\frac{L_i^2}{2}\right\rfloor+M_{i-1}
		\end{equation}
		for every $\beta^l\in\bbF_{q^{m_i}}^\ast\setminus\bbF_{q^{m_{1}}}^\ast$. Arguing as above, if we take another power of $\beta$ not in $\bbF_{q^{m_i}}^\ast$, say $\beta^h$, we obtain maximum subspace distances up to, at least, dimensions $m_i$ and then 
		$$
		d_f(\cF, \cF\beta^h)\geq d_f(\cF, \cF\beta^l).
		$$
		As a consequence, the minimum distance of the code is attained when we consider powers  $\beta^l \in\bbF_{q^{m_i}}^\ast\setminus\bbF_{q^{m_{1}}}^\ast$ and we have the result.
	\end{proof}	
	
	Observe that the upper bound for the distance in the first part of Theorem \ref{theo: beta cyclic weaved construction} is exactly the maximum possible distance for general flags of the corresponding type. Here below, we go a step further and give a sufficient condition for our construction to provide optimum distance flag codes, i.e., flag codes with the maximum possible distance for their type on $\bbF_{q^n}$.
	
	\begin{corollary}\label{cor: odfc beta-cyclic}
		Consider the generalized Galois flag  $\cF$ given in (\ref{eq: flag construccion general pegado}) and take $\beta\in\bbF_{q^n}^\ast$ such that $\langle \beta\rangle \cap \bbF_{q^{m_1}}^\ast = \langle \beta\rangle \cap \bbF_{q^{m_k}}^\ast$. If $L_{k+1}\leq 3$, then  $\orbbeta(\cF)$ is an optimum distance flag code.
	\end{corollary}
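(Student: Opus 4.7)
The plan is to pin down $d_f(\orbbeta(\cF))$ exactly by collapsing the bounds provided in part (\ref{theo: beta cyclic weaved construction item 1}) of Theorem \ref{theo: beta cyclic weaved construction}, and then to verify that the resulting value matches the maximum possible flag distance for the type of $\cF$ as given in (\ref{eq: dist max flags}).

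First, since $\langle\beta\rangle\cap\bbF_{q^{m_1}}^\ast = \langle\beta\rangle\cap\bbF_{q^{m_k}}^\ast$, Theorem \ref{theo: beta cyclic weaved construction}(\ref{theo: beta cyclic weaved construction item 1}) yields
\[
2m_k(L_{k+1}-1)+M_k \;\leq\; d_f(\orbbeta(\cF)) \;\leq\; m_k\lfloor L_{k+1}^2/2\rfloor+M_k.
\]
A direct check shows that for $L_{k+1}\in\{2,3\}$ we have $\lfloor L_{k+1}^2/2\rfloor = 2(L_{k+1}-1)$, so both bounds coincide and $d_f(\orbbeta(\cF))=M_k+2m_k(L_{k+1}-1)$.

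Next, I would compute the maximum possible flag distance for the type of $\cF$, i.e., twice $\sum_t \min(t,n-t)$ with $t$ ranging over the type components. These components split naturally into $k$ blocks: for $1\leq j\leq k-1$, the $j$-th block consists of the dimensions $m_j, 2m_j,\dots,(L_{j+1}-1)m_j=m_{j+1}-m_j$, whereas the $k$-th block collects $m_k, 2m_k,\dots,(L_{k+1}-1)m_k=n-m_k$. The hypothesis $L_{k+1}\leq 3$ forces $m_k\leq n/2$; hence every dimension in the first $k-1$ blocks is at most $m_{j+1}-m_j < m_k\leq n/2$, so each such $t$ contributes $t$ itself to $\sum\min(t,n-t)$. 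Summing the arithmetic progression, block $j$ contributes $m_j\cdot L_{j+1}(L_{j+1}-1)/2 = m_{j+1}(L_{j+1}-1)/2$, and doubling gives a total of $M_k$ from the first $k-1$ blocks.

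Finally, the $k$-th block must be treated separately since it straddles $n/2$. If $L_{k+1}=2$, then $n=2m_k$ and the block is $\{m_k\}$, contributing $2m_k = 2m_k(L_{k+1}-1)$ after doubling; if $L_{k+1}=3$, then $n=3m_k$ and the block is $\{m_k, 2m_k\}$ with $\min(m_k,2m_k)=\min(2m_k,m_k)=m_k$, contributing $2(m_k+m_k)=4m_k=2m_k(L_{k+1}-1)$ after doubling. Adding the two contributions, the maximum flag distance is exactly $M_k+2m_k(L_{k+1}-1)$, coinciding with $d_f(\orbbeta(\cF))$, so $\orbbeta(\cF)$ is an optimum distance flag code. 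The mildly delicate point is the $k$-th block arithmetic, which works precisely because $L_{k+1}\leq 3$ leaves only the two cases above; for $L_{k+1}\geq 4$ the upper bound in Theorem \ref{theo: beta cyclic weaved construction} strictly exceeds the lower one, and this argument breaks down.
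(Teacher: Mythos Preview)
Your proof is correct and follows essentially the same approach as the paper: invoke Theorem~\ref{theo: beta cyclic weaved construction}(\ref{theo: beta cyclic weaved construction item 1}), observe that the two bounds collapse for $L_{k+1}\in\{2,3\}$, and identify the common value as the maximum possible flag distance for the type of $\cF$. The only difference is that you carry out the verification of this last identification explicitly block by block, whereas the paper simply invokes the observation made just before the corollary that the upper bound in Theorem~\ref{theo: beta cyclic weaved construction} already equals the maximum flag distance for the given type.
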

	\begin{proof}
		Notice that, under these assumptions, by means of Theorem  \ref{theo: beta cyclic weaved construction} (part (\ref{theo: beta cyclic weaved construction item 1})), we have
	 
		\begin{equation*} 
			2m_k(L_{k+1}-1)+M_k \leq d_f(\orbbeta(\cF)) 
			\leq   m_k\left\lfloor \frac{L_{k+1}^2}{2}\right\rfloor +M_k.
		\end{equation*}
		Moreover, if the degree of the extension $\bbF_{q^n}/\bbF_{q^{m_k}}$, that is, the positive integer $L_{k+1}=[\bbF_{q^n}:\bbF_{q^{m_k}}]= n/m_{k+1}$, satisfies $1< L_{k+1} \leq 3$, then we have 
		$$
		2(L_{k+1}-1)=\left\lfloor \frac{L_{k+1}^2}{2}\right\rfloor.
		$$
		
		Hence, both lower and upper bounds for $d_f(\orbbeta(\cF))$ coincide and the code $\orbbeta(\cF)$ attains the maximum possible distance for its type vector.
	\end{proof}
	
	To finish this subsection we address the question $(\ast)$ launched in Subsection \ref{subsec: generalized Galois}. Recall that the potential distance values of generalized Galois flag code follow the rules stated in Theorem \ref{theo: potential values dist} and Definition \ref{def: potential distances}. These conditions arise naturally from the presence of certain subfields among the subspaces of a generalized Galois flag. Concerning question $(\ast)$, we wonder if, given a generalized Galois flag $\cF$, every potential value of the distance can be truly obtained by a cyclic (or $\beta$-cyclic) orbit flag code generated by $\cF$. We answer this question by using the $\beta$-cyclic construction presented in Subsection \ref{subsec:weaved beta}
	
	\begin{example}
		Consider the following parameters choice: $q=2$, $n=10$. Moreover, we take nested subfields $\bbF_2 \subset \bbF_{2^5}$ of the field $\bbF_{2^{10}}$, which correspond to the election of divisors $m_1=1$ and $m_2=5$ of $n=10$. In this case, we have $L_2=5$ and $L_3=2$. Let us use the generalized Galois flag $\cF=(\bbF_2,\cF_2, \cF_3, \cF_4, \bbF_{2^5})$ of type $(1,2,3,4,5)$ with subspaces:
		$$
		\cF_2=\bbF_{2}\oplus \bbF_{2}\gamma, \qquad \cF_3=\bbF_{2}\oplus \bbF_{2}\gamma\oplus \bbF_{2}\gamma^2, \quad \text{and} \quad \cF_4=\bbF_{2}\oplus \bbF_{2}\gamma\oplus \bbF_{2}\gamma^2\oplus  \bbF_{2}\gamma^3,
		$$
		where $\gamma$ is a primitive element of $\bbF_{2^5}$.
		
		The set of potential values of the distance in this case is given by:
		$$
		\{ 0, 8, 10, 12, 30 \}.
		$$
		In this case, we know how to choose $\beta\in\bbF_{2^{10}}^\ast$ so that the orbit $\orbbeta(\cF)$ attains some of these distances. More precisely:
		\begin{itemize}
			\item Distance $d=0$ is obtained if, and only if, $\beta\in\bbF_{2}^\ast=\{1\}=\stab(\cF)$.
			\item For distance $d=8$, it suffices to take the cyclic orbit code $\orb(\cF)$ that, by means of Theorem \ref{theo: teorema clasificación}, has distance $d_f(\orb(\cF))=8$.
			\item Last, since $L_3=2\leq 3$, by application of Corollary \ref{cor: odfc beta-cyclic}, we know that every $\beta\in\bbF_{2^{10}}^\ast$ such that $\langle\beta\rangle\cap\bbF_{2^5}=\{1\}$ makes $\orbbeta(\cF)$ be an optimum distance flag code. For instance, it suffices to consider subgroups $\langle\beta\rangle$ of $\bbF_{2^{10}}^\ast$ of orders $\{3,11,33\}$ to attain the maximum distance, i.e., the value $d=30$.
		\end{itemize}
		Moreover, for this specific example, we have obtained the parameters of the code $\orbbeta(\cF),$ for every subgroup $\langle\beta\rangle$ of $\bbF_{2^{10}}^\ast$ by using GAP. First of all, since $\stab(\cF)=\bbF_2^\ast=\{1\}$, we have $|\orbbeta(\cF)|=|\beta|$. The next table collects the set of distances for the generating flag $\cF$:

		\begin{table}[H]
			\begin{center}
				\begin{tabular}{cc}
					\hline
					$|\beta|$ & $d_f(\orbbeta(\cF))$ \\ \hline
					1         & 0                   \\
					3         & 30                  \\
					11        & 30                  \\
					31        & 8                   \\
					33        & 30                  \\
					93        & 8                   \\
					341       & 8                   \\
					1023      & 8    \\ \hline                
				\end{tabular}
				\caption{Distance of all the $\beta$-cyclic orbit flag code generated by $\cF$.}\label{tab: table distances}
			\end{center}
		\end{table}

		Using this example, one can see that not all the potential values of the distance can be obtained by taking a suitable subgroup of $\bbF_{2^{10}}^ \ast$. 	It suffices to observe that neither distances $d=10$ nor $d=12$ appear in Table \ref{tab: table distances}. Even more, despite the fact that $d_f(\orbbeta(\cF)) \neq 12$ for any $\beta \in \bbF_{2^{10}}^\ast$, this value still can be the distance between a couple of flags; for instance, we have
		$$
		d_f(\cF, \cF\gamma^2) = 2+4+4+2+0= 12.
		$$
However, this is not even true for distance $d=10$. In other words, for every $\beta\in\bbF_{2^{10}}^\ast$ and any power $1\leq l\leq |\beta|$, the distance $d_f(\cF, \cF\beta^l) \neq 10$.
	\end{example}
	
	\section{Conclusions and future work}
	
	In this work we present new contributions to the study of $\beta$-cyclic orbit flag codes started in \cite{FlagCyclic}, also following the viewpoint of \cite{GLMoTro2015}. The best friend of a flag code still has a crucial role throughout the paper. In particular, we discuss the rich interplay among flag distances, best friend and type vector for this family of codes.
	
	Nevertheless, whereas in \cite{FlagCyclic} the accent was put precisely on the best friend of the flag code, this time we turn our attention to the generating flag of the orbit. We focus specially on those ones having at least one field among their subspaces, by distinguishing the case of having just fields on the generating flag from the case where also at least one subspace not being a field appears. This dichotomy leads, on one side, to the known $\beta$-Galois flag codes and, on the other one, to the generalized $\beta$-Galois flag codes, which properties we describe.
	
	Every generalized $\beta$-Galois flag code has an underlying $\beta$-Galois flag code. Thus, we have addressed the question of determine if the parameters and the behaviour of Galois flag codes drives, in some sense, the ones of the generalized ones. To do this, we provide a systematic construction of generalized Galois flag codes with a prescribed underlying Galois flag code that presents remarkable properties and helps to us to shed some light on the raised questions.
	
	To future work, we want to deepen the study of $\beta$-cyclic orbit codes by determining suitable generating flags that allow us to obtain a prefixed distance value and code sizes as large as possible, even when it is necessary to take unions or orbits.


\begin{thebibliography}{99}
		
		\bibitem{AhlsCai00}
		R.\ Ahlswede, N.\ Cai, R.\ Li and R.\ W.\ Yeung,  
		\emph{Network Information Flow},
		IEEE Transactions on Information Theory, Vol. 46 (2000), 1204-1216.
		
		
		\bibitem{Consistentes}
		C.\ Alonso-Gonz\'alez, M.\ A.\ Navarro-P\'erez,
		\emph{Consistent Flag Codes}, 
		Mathematics, Vol. 8(12) (2020), 2243.

		\bibitem{FlagCyclic}
		C.\ Alonso-Gonz\'alez, M.\ A.\ Navarro-P\'erez,
		\emph{Cyclic Orbit Flag Codes}, 
		Designs, Codes and Cryptography. Vol. 89 (2021), 2331-2356.		
		
		\bibitem{OrbitODFC}
		C.\ Alonso-Gonz\'alez, M.\ A.\ Navarro-P\'erez and X.\ Soler-Escriv\`a, 
		\emph{An Orbital Construction of Optimum Distance Flag Codes},
		Finite Fields and Their Applications, Vol. 73 (2021), 101861.
		
		\bibitem{CasoPlanar}
		C.\ Alonso-Gonz\'alez, M.\ A.\ Navarro-P\'erez and X.\ Soler-Escriv\`a, 
		\emph{Flag Codes from Planar Spreads in Network Coding}, 
		Finite Fields and Their Applications, Vol. 68  (2020), 101745.
		
\bibitem{cotas}
		C.\ Alonso-Gonz\'alez, M.\ A.\ Navarro-P\'erez and X.\ Soler-Escriv\`a, 
		\emph{Flag Codes: Distance Vectors and Cardinality Bounds},
		\url{https://arxiv.org/abs/2111.00910}
		(preprint).		
		
		\bibitem{CasoNoPlanar}
		C.\ Alonso-Gonz\'alez, M.\ A.\ Navarro-P\'erez and X.\ Soler-Escriv\`a, 
		\emph{Optimum Distance Flag Codes from Spreads via Perfect Matchings in Graphs}.
		To appear on Journal of Algebraic Combinatorics.
		
		\bibitem{BenEtGaRa16}
		E.\ Ben-Sasson, T.\ Etzion, A.\ Gabizon, and N.\ Raviv,
		\emph{Subspace Polynomials and Cyclic Subspace Codes},
		IEEE Transactions on Information Theory, Vol. 62 (2016), 1157–1165.
		
		\bibitem{ChenLi18}
		B.\ Chen and H.\ Liu, 
		\emph{Constructions of Cyclic Constant Dimension Codes},
		Designs, Codes and Cryptography, Vol. 86(6) (2018), 1267–1279.
		
		\bibitem{EtVar11}
		T.\ Etzion and A.\ Vardy,
		\emph{Error-Correcting Codes in Projective Space},
		IEEE Transactions on Information Theory, Vol. 57 (2011), 1165–1173.
		
		\bibitem{GLMoTro2015}
		H.\ Gluesing-Luerssen, K.\ Morrison and C.\ Troha,
		\emph{Cyclic Orbit Codes and Stabilizer Subfields},
		Advances in Mathematics of Communications, Vol. 9(2) (2015), 177-197. 
		
		\bibitem{GoManRo12}
		E.\ Gorla, F.\ Manganiello and J.\ Rosenthal,
		\emph{An Algebraic Approach for Decoding Spread Codes},
		Advances in Mathematics of Communications, Vol. 6(4) (2012), 443-466.
		
		
		\bibitem{KoetKschi08}
		R.\ Koetter and F.\ Kschischang,  
		\emph{Coding for Errors and Erasures in Random Network Coding},
		IEEE Transactions on Information Theory, Vol. 54 (2008) 3579-3591.
		
		\bibitem{Kurz}
		S.\ Kurz, 
		\emph{Bounds for Flag Codes}, 
		Designs, Codes and Cryptography, Vol. 89 (2021), 2759–2785. 
		
		\bibitem{LiebNebeVaz18}
		D.\ Liebhold, G.\ Nebe and A.\ Vázquez-Castro,  
		\emph{Network Coding with Flags},
		Designs, Codes and Cryptography, Vol. 86 (2) (2018), 269-284.
		
		\bibitem{MangaGorlaRosen08}
		F.\ Manganiello, E.\ Gorla and J.\ Rosenthal,  
		\emph{Spread Codes and Spread Decoding in Network Coding}, 
		in: Proceedings of the 2008 IEEE International Symposium on Information Theory (ISIT), Toronto, Canada, 2008, pp. 851-855.
		
		\bibitem{MangaTraut14}
		F.\ Manganiello and A.-L.\ Trautmann,  
		\emph{Spread Decoding in Extension Fields}, 
		Finite Fields and Their Applications, Vol. 25 (2014), 94-105.
		
		\bibitem{ManTrautRos11}
		F.\ Manganiello, A.-L.\ Trautmann and J.\ Rosenthal, 
		\emph{On Conjugacy Classes of Subgroups of the General Linear Group and Cyclic Orbit Codes},
		in: Proceedings of the 2011 IEEE International Symposium on Information Theory (ISIT), Saint Pettersburg, 2011, pp. 1916–1920.
		
		\bibitem{MA-X}
		M.\ A.\ Navarro-P\'erez and X.\ Soler-Escriv\`a,
		\emph{Flag Codes of Maximum Distance and Constructions using Singer Groups},
		\url{https://arxiv.org/abs/2109.00270}
		(preprint).
		
		
		\bibitem{OtOz17}
		K.\ Otal and F.\ Ozbudak,
		\emph{Cyclic Subspace Codes via Subspace Polynomials},
		Designs, Codes and Cryptography, Vol. 85(2) (2017), 191-204. 
		
		\bibitem{RosTraut2013} 
		J.\ Rosenthal and A.-L.\ Trautmann,
		\emph{A Complete Characterization of Irreducible Cyclic Orbit Codes and their Plücker Embedding},
		Designs, Codes and Cryptography, Vol. 66 (2013), 275–289.
		
		\bibitem{RothRaTa18}
		R. M.\ Roth, N.\ Raviv and I.\ Tamo, 
		\emph{Construction of Sidon Spaces With Applications to Coding},
		IEEE Transactions on Information Theory, Vol. 64(6) (2018), 4412-4422.
		
		\bibitem{Segre64}
		B.\ Segre, 
		\emph{Teoria di Galois, Fibrazioni Proiettive e Geometrie non Desarguesiane},
		Annali di Matematica Pura ed Applicata, Vol. 64 (1964), 1-76.
		
		\bibitem{TrautManBraunRos2013}
		A.-L.\ Trautmann, F.\ Manganiello, M.\ Braun and J.\ Rosenthal,
		\emph{Cyclic Orbit Codes},
		IEEE Transactions on Information Theory, Vol. 59(11) (2013), 7386-7404.
		
		\bibitem{TrautManRos2010}
		A.-L.\ Trautmann, F.\ Manganiello, and J.\ Rosenthal, 
		\emph{Orbit Codes: A New Concept in the Area of Network Coding},
		in: Proceedings of IEEE Information Theory Workshop, Dublin, Ireland, 2010, pp. 1–4.
		
		\bibitem{TrautRosen18} 
		A.-L.\ Trautmann and J.\ Rosenthal,
		{\em Constructions of Constant Dimension Codes},
		in: M. Greferath et al. (Eds.), Network Coding and Subspace Designs, E-Springer International Publishing AG, 2018, pp. 25-42.
		\bibitem{ZhaoTang2019}
		W.\ Zhao and X.\ Tang.
		{\em A Characterization of Cyclic Subspace Codes via Subspace Polynomials},
		Finite Fields and Their Applications, Vol. 57 (2019), 1–12.
		
	\end{thebibliography}
\end{document}